\newtheorem{theorem}{Theorem}
\newtheorem{proposition}[theorem]{Proposition}
\newtheorem{lemma}[theorem]{Lemma}
\newtheorem{rem}[theorem]{Remark}
\newtheorem{definition}[theorem]{Definition}
\newtheorem{corollary}[theorem]{Corollary}
\let\expandafter
\def\subeqnarray{\arraycolsep1pt
   \def\@eqnnum\stepcounter##1{\stepcounter{subequation}
       {\reset@font\rm(\theequation\alph{subequation})}}
\jot5mm     \eqnarray}
\newcommand{\bbZ}{{\mathbb Z}}
\newcommand{\bbR}{{\mathbb R}}
\newcommand{\bbC}{{\mathbb C}}
\def\epsilon{\varepsilon}
\def\tilde{\widetilde}
\def\fe{\mathfrak{e}}
\def\bbR{\mathbb R}
\def\bbC{\mathbb C}
\def\bbZ{\mathbb Z}
\newcommand{\br}[1]{\left( #1 \right)}
\newcommand{\cubr}[1]{\left\{ #1 \right\} }
\newcommand{\restr}[2]{\ensuremath{\left.#1\right|_{#2}}}
\DeclarePairedDelimiter{\abs}{\lvert}{\rvert}
\DeclarePairedDelimiter{\norm}{\lVert}{\rVert}
\DeclareMathOperator*{\argmin}{arg\,min}
\DeclareMathOperator{\myRe}{Re}
\DeclareMathOperator{\myIm}{Im}
\DeclareMathOperator{\res}{res}
\DeclareMathOperator{\Jac}{Jac}
\DeclareMathOperator{\Lip}{Lip}
\DeclareMathOperator{\Div}{div}
\DeclareMathOperator{\Hess}{Hess}
\newbox\meibox
\def\placeunder#1#2#3#4{\setbox\meibox%
\vbox{\hbox{\hskip#4$\hphantom{#2}$}\hbox{$\hphantom{#1}$}}%
\vtop{\baselineskip=0pt\lineskiplimit=\baselineskip%
\lineskip=#3\hbox to \wd\meibox{\hfil\hskip#4$#2$\hfil}%
\hbox to \wd\meibox{\hfil$#1$\hfil}}}
\def\intprod{\mathbin{\hbox to 6pt{%
                 \vrule height0.4pt width5pt depth0pt
                 \kern-.4pt
                 \vrule height6pt width0.4pt depth0pt\hss}}}
\begin{document}
\title
{Dimers and M-curves}

%
\author{Alexander I. Bobenko \and Nikolai Bobenko \and Yuri B. Suris}

\thanks{E-mail: {\tt  bobenko@math.tu-berlin.de, nikolai.bobenko@unige.ch, suris@math.tu-berlin.de}}

%


%

\begin{abstract}
   In this paper we develop a general approach to dimer models analogous to Krichever's scheme in the theory of integrable systems. 
   We start with a Riemann surface and the simplest generic meromorphic functions on it and demonstrate how to obtain integrable dimer models.
   These are dimer models on doubly periodic bipartite graphs with quasi-periodic positive weights.
   Dimer models with periodic weights and Harnack curves are recovered as a special case.
   This generalization from Harnack curves to general M-curves leads to transparent algebro-geometric structures.
   In particular explicit formulas for the Ronkin function and surface tension as integrals of meromorphic differentials on M-curves are obtained. Furthermore we describe the variational principle for the height function in the quasi-periodic case.
   Based on Schottky uniformizations of Riemann surfaces we present concrete computational results including computing the weights and sampling dimer configurations with them.
   The computational results are in complete agreement with the theoretical predictions.
\end{abstract}

\makeatletter
\def\@tocline#1#2#3#4#5#6#7{\relax
  \ifnum #1>\c@tocdepth 
  \else
    \par \addpenalty\@secpenalty\addvspace{#2}%
    \begingroup \hyphenpenalty\@M
    \@ifempty{#4}{%
      \@tempdima\csname r@tocindent\number#1\endcsname\relax
    }{%
      \@tempdima#4\relax
    }%
    \parindent\z@ \leftskip#3\relax \advance\leftskip\@tempdima\relax
    \rightskip\@pnumwidth plus4em \parfillskip-\@pnumwidth
    #5\leavevmode\hskip-\@tempdima
      \ifcase #1
       \or\or \hskip 1em \or \hskip 2em \else \hskip 3em \fi%
      #6\nobreak\relax
    \hfill\hbox to\@pnumwidth{\@tocpagenum{#7}}\par
    \nobreak
    \endgroup
  \fi}
\makeatother

\maketitle
\vspace{0.5 cm}

\begin{center}
   \bf Dedicated to Igor Krichever.
\end{center}

\tableofcontents

\section{Introduction}


Over the past two decades, dimer models became an essential part of integrable statistical mechanics, see for example~\cite{kenyon_laplacian_2002, Kenyon_Okounkov_Sheffield_2006, boutillier_minimal_2023}. Various aspects of integrable structure such as the invariance of the partition function under the spider move (Z-invariance), expression of the partition function, correlation functions and inverse of Dirac operators in terms of spectral curves, etc. proved to be a powerful tool in the investigation of models~\cite{Kenyon_Okounkov_2007, boutillier_elliptic_2020, boutillier_minimal_2023, berggren_geometry_2023, Borodin_2x2_2023}. 
Moreover many integrable systems and systems with cluster algebra structure were identified within dimer models~\cite{goncharov_dimers_2013}.

In this work we consider dimer models on planar bipartite graphs. In the foundational work~\cite{Kasteleyn1967} the partition function for finite graphs was shown to be the determinant of the Kasteleyn operator, which is a weighted signed adjacency matrix. Furthermore the partition function per fundamental domain was explicitly computed for periodic lattices with constant weights.

These results were generalized in~\cite{kenyon_laplacian_2002} for isoradial graphs with critical weights. Further generalization for arbitrary doubly periodic weights was provided in~\cite{Kenyon_Okounkov_Sheffield_2006} with the discovery of a remarkable connection to algebraic geometry. The Kasteleyn operator was used to define the spectral curve and it was shown that it is a Harnack curve, which is a classical object in algebraic geometry \cite{Mikhalkin_2000}. Many characteristics of dimer models were expressed purely in terms of algebro-geometric data. In particular a characterization of all ergodic Gibbs measures in terms of points in the Newton polygon was provided. Furthermore the free energy of a model with magnetic field was shown to correspond to the Ronkin function~\cite{Kenyon_Okounkov_Sheffield_2006} and limit shapes for some boundary data were computed \cite{Kenyon_Okounkov_2007}.

The solution of the inverse problem was given in~\cite{fock_inverse_2015} where the weights of the model were expressed in terms of algebro-geometric data. A comprehensive investigation of the corresponding models was carried out in~\cite{boutillier_elliptic_2020} for spectral curves of genus $1$ and in~\cite{boutillier_minimal_2023} for general genus. For any minimal graph with Fock's weights an explicit expression for inverse Kasteleyn operators was obtained yielding alternative expressions for the ergodic Gibbs measures in the doubly periodic case.

Research on dimer models and their relation to integrable systems continues to flourish. For example, an explicit description of limit shapes of the Aztec diamond as well as convergence of local fluctuations to the corresponding Gibbs measures was shown in~\cite{berggren_geometry_2023}. The genus 1 case was studied before that in~\cite{Borodin_2x2_2023}. Recently a relation to circle patterns and Miquel dynamics was established in~\cite{Kenyon_Russkikh_circle_patters_2022}. See also~\cite{astala_dimer_2023} for an elaborate investigation of variational problems of dimer limit shapes and \cite{chelkak_t-embeddings_2023} for an investigation of fluctuations via t-embeddings.

Somewhat similar methods based on the analysis on Riemann surfaces are known in the theory of integrable systems under the name of finite gap (or algebro-geometric) integration theory~\cite{Dubrovin_1981, belokolos_algebro-geometric_1994}. In 1977 Krichever~\cite{Krichever_1977} suggested an integration scheme based on the notion of Baker-Akhiezer (BA) function. This is a function on a Riemann surface which is uniquely determined by its special analytic properties, and is also expressed explicitly in terms of theta functions and Abelian integrals. From these properties one derives linear differential or difference equations satisfied by the BA function. The coefficients in these equations are also derived from explicit formulas for the BA function. Special (multiphase) solutions of many integrable systems, like Korteweg-de Vries (KdV), Kadomtsev-Petviashvili (KP), sine-Gordon etc. equations were constructed this way, see for example \cite{belokolos_algebro-geometric_1994}. 

In this paper we develop a general approach to dimer models analogous to Krichever's scheme in the theory of integrable systems. 
We start with a Riemann surface and the simplest generic meromorphic functions on it and demonstrate how to obtain integrable dimer models (Section~\ref{sec:3_baker_akhiezer}). At each vertex of a graph there is a BA-function. They satisfy certain local linear relations whose coefficients are the weights of the corresponding dimer model. Moreover, exactly as in the finite gap theory, we derive the explicit formulas for these coefficients from the formulas for the BA-functions, and they turn out to coincide with the Fock weights. It is remarkable that as a special case of our construction we obtain some fundamental algebro-geometric identities, like Fay's identity \cite{Fay_1973, Mumford_theta_1} as well as 
the invariance of the dimer model under the spider move (Section~\ref{sec:4_fay_and_spider}). The latter is closely related to the multidimensional consistency \cite{bobenko_discrete_2008} of the model, which is proved along the same lines (Section~\ref{sec:5_quadGraphs_and_consistency}).

Whereas positivity of weights is crucial for statistical mechanics, Fock's original paper \cite{fock_inverse_2015}, as well as some other papers on integrable dimers associated to Riemann surfaces \cite{goncharov_dimers_2013, george_spectra_2019} deal with complex Riemann surfaces that lead to complex weights. Similar problems of characterization of real and non-singular solutions were encountered in the theory of integrable systems from its early days \cite{Dubrovin_1981, belokolos_algebro-geometric_1994}. The class of Riemann surfaces leading to real, non-singular solutions of the KP-equation and to self-adjoint operators was identified as M-curves, \cite{Dubrovin_Natanzon_1988}. An M-curve is a Riemann surface $\mathcal R$ of genus $g$ with an anti-holomorphic involution $\tau$ and maximal number $g+1$ of real ovals $X=X_0\cup X_1\cup\ldots X_g$.  In particular, in 1985 Krichever \cite{Krichever_1985} constructed a class of two-dimensional difference operators with positive weights, coming from M-curves. 

In the context of dimer models in \cite{Kenyon_Okounkov_2006} it was established that doubly periodic dimer models are in one to one correspondence with Harnack curves. Their Riemann surfaces are M-curves. The above mentioned Krichever's operators can be identified with the Dirac operator with Fock's weights on the regular hexagonal lattice.  For planar bipartite graphs with arbitrary faces a construction of Dirac operators with positive weights is presented in Section~\ref{sec:6_M-curves} and in \cite{boutillier_minimal_2023}\footnote{We discussed this construction with the authors of \cite{boutillier_minimal_2023} in private correspondence in 2021.}. In this setup one obtains quasi-periodic weights. Harnack curves appear in the special case of periodic weights (Section~\ref{sec:7_Torus_and_Harnack}). We demonstrate that a generalization from Harnack curves to general M-curves leads to a general approach with general transparent algebro-geometric structures and self-contained presentation. It implies new results even in the classical $g=0$ case.

An explicit formula for the free energy of a doubly periodic dimer model was derived in \cite{Kenyon_Okounkov_Sheffield_2006}. It was shown that it is given by the Ronkin function, defined on the amoeba of the corresponding Harnack curve. A remarkable generalization of all these notions was given by Krichever in \cite{krichever_amoebas_2014} without any relation to dimer models. In particular he defined Harnack data $\mathcal{S}$ for M-curves, with the corresponding amoeba map and Ronkin function, inheriting most of the classical properties. In Section~\ref{sec:8_Ronkin_function} we present a unified picture including diffeomorphisms of the (open) factor $\mathcal{R}_+^\circ=(\mathcal{R}\setminus X)/ \tau$ to the generalized amoeba $\mathcal{A}_{\mathcal S}^\circ$ and to the generalized Newton polygon $\Delta_{\mathcal S}^\circ$. Further we present a new formula for Krichever's Ronkin function along with its Legendre dual. Our representation is given via an integral of a holomorphic one-form. It is more convenient for investigation and computation than Krichever's original formula which is an integral of a two-form over a certain domain of $\mathcal{R}_+$. Note that for dimer models with doubly periodic weights the (Legendre) dual Ronkin function is known as surface tension \cite{Kenyon_Okounkov_Sheffield_2006}.

A variational principle for the height function of a doubly periodic dimer model was obtained in \cite{cohn_variational_2000}. It was shown that it is the minimizer of the surface tension functional with prescribed boundary conditions. The functional is convex which implies the uniqueness of the minimizer. In Section~\ref{sec:9_heightFunction} we extend the notion of surface tension to quasi periodic weights and show that the corresponding height function is also given by its minimizer.

In Section~\ref{sec:10_Schottky_Uniformization} we show that all the above characteristics of dimer models including the weights, amoeba and polygon maps, surface tension as well the height function can be effectively computed. For this purpose we use the method proposed in~\cite{Bobenko_1987} based on the Schottky uniformization of the corresponding M-curve. The characteristics of dimers are then represented via converging (Poincaré theta) series. This representation turns out to be useful even in the case of genus $0$ which corresponds to Kenyon's isoradial weights. 

Finally in Section~\ref{sec:12_computation} we present concrete computational results including computing general Fock weights and sampling dimer configurations with them. The computational results are in complete agreement with the theoretical predictions (See Fig.~\ref{fig:simulation_with_amoeba}).

\begin{figure}[h]
	\centering
	\includegraphics[width=.95\linewidth]{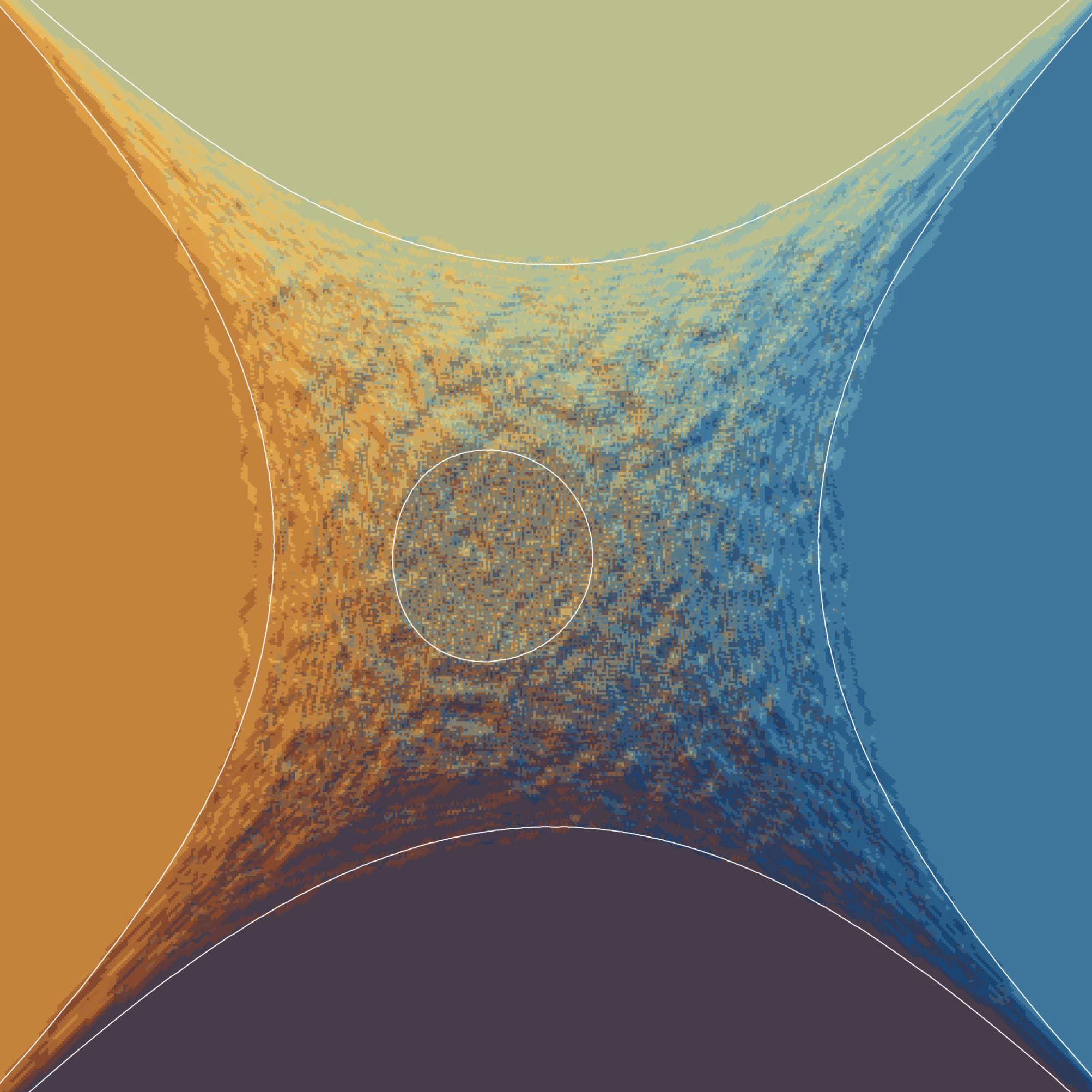}
	\caption[Random height function sampled according to Fock weights with a volume constraint and overlaid with corresponding amoeba.]{ Random height function sampled according to Fock weights with a volume constraint and overlaid with theoretical prediction.\footnotemark }
	\label{fig:simulation_with_amoeba}
\end{figure}

\vspace*{1cm}
{\bf Acknowledgements.} The first and third authors were supported by the DFG Collaborative Research Center TRR 109 "Discretization in Geometry and Dynamics".

The second author was supported by the NCCR SwissMAP as well as by the Swiss NSF grants 200400 and 197226.

\section{Dimer models on bipartite graphs}
\label{sect dimer models}

While one can consider the dimer model on an arbitrary (finite connected)  graph, we will restrict our attention to graphs with certain additional structures.
\medskip 

\textbf{Dimer model on planar graphs.} We consider our graphs as \emph{planar graphs}, i.e., graphs embedded in surfaces. Thus, the underlying graph is understood as $G=(V,E,F)$, with $V$ being the set of vertices, $E$ the set of edges, and $F$ the set of faces (connected components of the complement of $E$ in the surface). Actually, we will assume that the graph $G$ is \emph{simple}, so that each face of $G$ is an $n$-gon with $n\ge 3$. We can introduce the \emph{dual graph} $G^*=(V^*,E^*,F^*)$, for which 
$V^*=F$, $F^*=V$, and $E$, $E^*$ are in a one-to-one correspondence, the corresponding edges are denoted $e\in E$, $e^*\in E^*$.

A \emph{dimer configuration} on $G$ is a subset $D\subset E$ such that each vertex of $G$ is adjacent to exactly one edge from $D$. A purely combinatorial problem of counting all possible dimer configurations for a given graph turns into an appealing and important \emph{dimer problem} of statistical mechanics if one supplies dimer configurations with probabilities, or Boltzmann weights. More precisely, one assumes that $G$ has an \emph{edge weighting} which is a function $\nu:E\to\bbR_+$, and one defines the Boltzmann weight of a dimer configuration $D$ by
\begin{equation}\label{Boltzmann weight}
\nu(D)=\prod_{e\in D}\nu(e).
\end{equation} 
Then, the corresponding probability measure on the set of dimer configurations is given by
\begin{equation}
	\label{eq:def_Boltzmann_measure}
	\mathbb{P}(D)=\frac{\nu(D)}{Z(G,\nu)},\quad Z(G,\nu)=\sum_D \nu(D),
\end{equation}
where the summation in the formula for $Z(G,\nu)$ (the partition function) is extended over all possible dimer configurations on $G$.

We quote the famous Kasteleyn's formula for the dimers partition function for graphs embedded in the plane. For this, the following ingredients are necessary. 
\footnotetext{This is a height function colored according to its local slope. We stress that it is practical to compute both the weights as well as the amoeba map numerically. This particular configuration was sampled on a $(500 \times 500)$ square in $\mathbb{Z}^2$ with a volume constraint of $0.3$ volume per unit. The Schottky data used is given by $\cubr{\beta^-, \alpha^+, \beta^+, \alpha^-} = \cubr{-2.4, -0.4, 0.4, 2.4}$ and $(A,B,\mu) = (0.25 + i 0.9, 0.25 - i 0.9, 0.015)$. }
\begin{definition}\label{def Kasteleyn or 1}
Fix an orientation of the plane, and let the faces of the graph $G$ embedded in the plane inherit this orientation. For each face $f\in F$, orient $\partial f$ as the boundary of the oriented face $f$. This gives the \emph{natural orientation} of the edges of $\partial f$. For an arbitrary orientation $\sigma$ of the edges of $G$, denote by $N^\sigma(\partial f)$ the number of edges of the boundary $\partial f$ for which $\sigma$ induces the opposite to the natural orientation. An edge orientation $\sigma$ is called a \emph{Kasteleyn orientation}, if for each $f\in F$, the quantity $N^\sigma(\partial f)$ is odd.
\end{definition}
\begin{theorem}
Every graph embedded in the plane admits a Kasteleyn orientation $\sigma$. If $\sigma$ is such an orientation, the following formula holds true:
\begin{equation}\label{Z=pf}
Z(G,\nu)=|{\rm Pf}(A^\sigma)|,
\end{equation}
where $A^\sigma$ is the skew-symmetric matrix whose rows and columns are indexed by $V$, and the entries are given by 
$$
a_{ij}^\sigma=\left\{\begin{array}{cl}  \nu(e), & {\rm if}\; \{i,j\}=e\in E, \;{\rm and}\;
\sigma\; {\rm orients}\; e\; {\rm from}\; j\;{\rm to}\; i, \\ 
-\nu(e), & {\rm if}\; \{i,j\}=e\in E, \;{\rm and}\;
\sigma\; {\rm orients}\; e\; {\rm from}\; i\;{\rm to}\; j, \\0, & {\rm if}\; \{i,j\}\not\in E. \end{array}\right.
$$
\end{theorem}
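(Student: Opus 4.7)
The theorem comprises two claims: existence of a Kasteleyn orientation, and the Pfaffian formula $Z(G,\nu)=|\operatorname{Pf}(A^\sigma)|$. I would treat them separately.

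For existence, I would start from an arbitrary orientation $\sigma_0$ of $E$ and seek a subset $S\subset E$ whose reversal yields a Kasteleyn orientation. Reversing a single edge $e$ flips the parity of $N^\sigma(\partial f)$ exactly for the two faces incident to $e$, so the problem is equivalent to finding a subgraph of the dual graph $G^*$ with prescribed vertex-degree parities $b(f)\equiv N^{\sigma_0}(\partial f)+1\pmod 2$. This is a classical $T$-join problem, solvable in the connected graph $G^*$ iff $\sum_f b(f)$ is even. Because at each edge $e$ the two incident faces induce opposite natural orientations, exactly one of them contributes $+1$ to $N^{\sigma_0}(\partial f)$, whence $\sum_f N^{\sigma_0}(\partial f)=|E|$. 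Combined with Euler's formula $V-E+F=2$, this gives $\sum_f b(f)\equiv |E|+|F|\equiv |V|\pmod 2$, which vanishes whenever $G$ admits a dimer configuration.

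For the Pfaffian formula, I would expand
\[
\operatorname{Pf}(A^\sigma)=\sum_D \epsilon(D)\,\eta_\sigma(D)\,\nu(D),
\]
where $D$ runs over all dimer configurations, $\epsilon(D)\in\{\pm 1\}$ is the sign of the pairing permutation relative to a fixed vertex ordering, and $\eta_\sigma(D)\in\{\pm 1\}$ collects the signs of the entries $a^\sigma_e$ along $e\in D$. It then suffices to show that $\epsilon(D)\eta_\sigma(D)$ is independent of $D$. Given two configurations $D_1,D_2$, the symmetric difference $D_1\triangle D_2$ decomposes into disjoint simple cycles of even length, alternating between $D_1$- and $D_2$-edges, and the sign ratio $\epsilon(D_2)\eta_\sigma(D_2)/\epsilon(D_1)\eta_\sigma(D_1)$ factors over these cycles. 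A standard computation shows that swapping $D_1$ for $D_2$ along a cycle of length $2\ell$ contributes $(-1)^{\ell-1}$ to the permutation ratio and $(-1)^{n^\sigma(C)}$ to the sign ratio, where $n^\sigma(C)$ counts edges of $C$ oriented against a fixed traversal.

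The whole argument then hinges on a cycle lemma: for a Kasteleyn orientation $\sigma$ and any simple cycle $C$ of length $2\ell$ bounding a disk with $V(C)$ interior vertices, $n^\sigma(C)\equiv \ell+V(C)+1\pmod 2$. I would prove this by induction on the number of enclosed faces --- the base case (a single face) being the defining Kasteleyn condition --- with the inductive step peeling off one boundary face and tracking how $n^\sigma$, $\ell$ and $V(C)$ all change with compatible parity. Applied to the cycles $C_i\subset D_1\triangle D_2$: since the vertices strictly inside $C_i$ are matched by edges shared by $D_1$ and $D_2$, $V(C_i)$ is even, and the lemma gives $n^\sigma(C_i)\equiv \ell+1\pmod 2$, which cancels $(-1)^{\ell-1}$ from the permutation. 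The main obstacle is the careful parity bookkeeping in the inductive step of the cycle lemma, especially when the removed face shares several disjoint arcs with the current boundary, where the changes in $n^\sigma$, $\ell$, and $V(C)$ must be tracked with matching parities.
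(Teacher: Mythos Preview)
The paper does not actually prove this theorem: it is quoted as Kasteleyn's classical result (``We quote the famous Kasteleyn's formula\ldots'') and stated without proof. Your plan is the standard textbook argument and is essentially correct; there is nothing to compare against in the paper itself.

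Two small remarks on your sketch. First, in the cycle lemma induction you worry about a face sharing several disjoint arcs with the current boundary; you can sidestep this entirely by always peeling off a face that shares a single arc with $C$ (such a face exists on the interior side of any boundary edge), so the region stays simply connected and the bookkeeping is straightforward. Second, when the alternating cycles in $D_1\triangle D_2$ are nested, the ``interior vertices matched by common edges'' claim needs a word: the cleanest way is to note that $D_1$ alone matches the vertices strictly inside $C$ among themselves (since the vertices of $C$ are $D_1$-matched along $C$), so $V(C)$ is even regardless of nesting. With these in hand your argument goes through.
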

\medskip

\textbf{Dimer model on bipartite graphs.} These results and techniques can be further specialized and refined under an additional assumption (which we adopt hereafter)
that the graph $G$ is \emph{bipartite}, that is, the set of vertices is represented as $V=B\cup W$ ($B$ being denoted as black vertices and $W$ as white vertices), and each edge $e\in E$ connects a black vertex with a white one, $e=(bw)$. As an immediate corollary, each face of $G$ is a $(2n)$-gon with $n\ge 2$.

For a bipartite graph, one can introduce an orientation of edges different from the natural one, say from the black vertex to the white one (let us call this the bw-orientation). Then it is easy to see that  Definition~\ref{def Kasteleyn or 1} is equivalent to the following one.
\begin{definition}\label{def Kasteleyn or 2}
For an arbitrary orientation $\sigma$ of the edges of $G$, denote by $M^\sigma(\partial f)$ the number of edges of the boundary $\partial f$ which $\sigma$ orients from white to black, i.e., opposite to the bw-orientation. An edge orientation $\sigma$ is called a \emph{Kasteleyn orientation}, if for each $(2n)$-gonal $f\in F$, we have $M^\sigma(\partial f)\equiv n+1\pmod 2$.
\end{definition}

For a bipartite graph, the matrix $A^\sigma$ has (upon a suitable ordering of the index set $V$) a block structure:
$$
A^\sigma=\begin{pmatrix} 0 & K_{WB} \\ K_{BW} & 0\end{pmatrix},
$$
where $K_{WB}$ is a matrix whose rows are indexed by $W$ and whose columns are indexed by $B$ (and, of course, $K_{BW}=-K_{WB}^{\rm T}$). We call $K_{WB}$ the \emph{Kasteleyn matrix}. Its definition can be formulated as follows:
$$
K_{wb}=\left\{\begin{array}{cl}  \nu(e), & {\rm if}\; \{b,w\}=e\in E, \;{\rm and}\;
\sigma\; {\rm orients}\; e\; {\rm from}\; b\;{\rm to}\; w, \\ 
-\nu(e), & {\rm if}\; \{b,w\}=e\in E, \;{\rm and}\;
\sigma\; {\rm orients}\; e\; {\rm from}\; w\;{\rm to}\; b, \\0, & {\rm if}\; \{b,w\}\not\in E. \end{array}\right.
$$
Formula \eqref{Z=pf} takes the form
\begin{equation}\label{Z=det}
Z(G,\nu)=|\det(K_{WB})|,
\end{equation} 

An important issue is the \emph{gauge transformations} of the dimer model. We call two weight functions $\nu$, $\nu'$ gauge equivalent if, for each $e=\{i,j\}\in E$, we have $\nu'(e)=\nu(e)\lambda(i)\lambda(j)$ with some $\lambda:V\to\bbR_+$. One easily sees that the probability measure $\mathbb{P}(D)$ does not change under a gauge transformation of weights. For a bipartite $G$, one has the following characterization of gauge equivalent weights.
\begin{proposition}
Two weight functions $\nu$, $\nu'$ are gauge equivalent iff for each $f\in F$ with $\partial f=e_1\cup e_2\cup\ldots \cup e_{2n}$, there holds
\begin{equation}\label{gauge inv}
W_f=
\frac{\nu(e_1)\nu(e_3)\ldots \nu(e_{2n-1})}{\nu(e_2)\nu(e_4)\ldots\nu(e_{2n})}=
\frac{\nu'(e_1)\nu'(e_3)\ldots \nu'(e_{2n-1})}{\nu'(e_2)\nu'(e_4)\ldots\nu'(e_{2n})}.
\end{equation}
\end{proposition}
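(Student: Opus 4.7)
The plan is to split into the two directions. For the ``only if'' direction, assume $\nu'(e)=\nu(e)\lambda(i)\lambda(j)$ for $e=\{i,j\}$ and compute $W'_f$ directly by substitution. Each vertex $v$ on $\partial f$ is shared by exactly two consecutive boundary edges, and bipartiteness forces these two edges to have opposite parities in the cyclic enumeration $e_1,\ldots,e_{2n}$; hence the factor $\lambda(v)$ appears once in the numerator and once in the denominator of the alternating product $W'_f/W_f$, so $W'_f=W_f$.

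For the ``if'' direction, set $\mu(e):=\nu'(e)/\nu(e)>0$; the task is to write $\mu(e)=\lambda(i)\lambda(j)$. Taking logarithms and exploiting bipartiteness, introduce $\tilde x(v)=\log\lambda(v)$ on $B$ and $\tilde x(v)=-\log\lambda(v)$ on $W$, and orient every edge from its white endpoint to its black endpoint. The equation $\log\mu(e)=\log\lambda(b)+\log\lambda(w)$ then becomes $c(e):=\log\mu(e)=\tilde x(b)-\tilde x(w)$, i.e., the $1$-cochain $c$ must be the coboundary $\delta\tilde x$ in the usual graph-cohomology sense.

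Now, traversing $\partial f$ in the induced orientation, consecutive boundary edges are traversed alternately with and against the white-to-black orientation; therefore the signed integral of $c$ around $\partial f$ equals $\sum_i\log\mu(e_{2i-1})-\sum_i\log\mu(e_{2i})$, and this vanishes by the hypothesis $W_f=W'_f$. Since $G$ is planar, the face cycles generate the full cycle space of $G$, so $c$ has trivial period along every closed cycle. Hence $c$ is exact: fix a base vertex $v_0$ in each connected component, set $\tilde x(v_0)=0$, and define $\tilde x(v)$ as the signed sum of $c$ along any path from $v_0$ to $v$; the cycle condition makes this well-defined, and recovering $\lambda=\exp x$ (with $x$ obtained from $\tilde x$ by flipping signs on $W$) yields the desired gauge.

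The only nontrivial ingredient is the topological fact that the face cycles of a planar graph generate its cycle space, which follows from the cellular decomposition of the sphere (the outer face boundary being the sum of the inner ones, so the relation it imposes is redundant). Positivity of $\nu,\nu'$ makes taking logarithms harmless, so no further analytic subtlety arises.
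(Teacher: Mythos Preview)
Your proof is correct and follows essentially the same route as the paper's (very brief) argument: take logarithms, use the bipartite orientation to view $\log(\nu'/\nu)$ as a discrete $1$-form whose closedness is exactly the face condition \eqref{gauge inv}, invoke planarity to pass from closed to exact, and exponentiate to recover $\lambda$. The paper phrases this in two sentences (writing $\log\nu'-\log\nu=df$ and setting $\lambda(b)=\exp(-f(b))$, $\lambda(w)=\exp(f(w))$), whereas you have spelled out the coboundary formalism and the ``only if'' direction explicitly; but the underlying idea is the same.
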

For the proof observe that, under the bw-orientation of edges, $\log \nu$ is a discrete 1-form on $G$. Equation \eqref{gauge inv} is equivalent to $\log \nu'-\log \nu$ being closed, which for graphs embedded in the plane is equivalent to $\log \nu'-\log \nu$ being exact, i.e., $\log \nu'-\log \nu=df$ for some $f:V\to\bbR_+$. Now set $\lambda(b)=\exp(-f(b))$ and $\lambda(w)=\exp(f(w))$ for any $b\in B$, $w\in W$. 

One can extend the notion of gauge invariance by admitting complex-valued gauge functions $\lambda:V\to\bbC^*$, then the gauge equivalent weight functions are also allowed to take values in $\bbC^*$ rather than in $\bbR_+$.

It is important that the characterization of a Kasteleyn orientation for bipartite graphs is conveniently given in terms of the face gauge invariants \eqref{gauge inv}: an edge orientation $\sigma$ is a Kasteleyn orientation, if for each $f\in F$ with the boundary $(2n)$-gon $w_1b_1\ldots w_nb_n$, the entries of the Kasteleyn matrix $K_{WB}$ satisfy
\begin{equation}\label{Kasteleyn cond}
{\rm sign}\left(\frac{K_{w_1b_1}K_{w_2b_2}\ldots K_{w_nb_n}}{K_{w_2b_1}K_{w_3b_2}\ldots K_{w_1b_n}}\right)=(-1)^{n+1}.
\end{equation}
Of course, this condition will ensure the representation \eqref{Z=det} for the partition function also for the case when the entries $K_{wb}$ are from $\bbC^*$, with the understanding that the expression in the parentheses on the left-hand side must be real for each $f\in F$.
\medskip

\textbf{Height function.} Any dimer configuration $D$ on a planar bipartite graph $G$ defines, via
$$
\omega_D(bw)=-\omega_D(wb)=\left\{\begin{array}{l} 1, \; {\rm if}\; (bw)\in D, \\ 0, \; {\rm otherwise}\end{array}\right.
$$
a discrete 1-form on $G$. This form has a special property of having divergence $1$ at black vertices and $-1$ at white ones, where the divergence of $\omega$ at a vertex $v$ is defined as $d^*\omega(v) = \sum_{w\sim v} \omega(vw)$. Such 1-forms are called \emph{unit forms}.

Fix a unit form $\omega_0$. Then $\omega=\omega_D-\omega_0$ is divergence free and therefore the dual form $\omega^*$ on the dual graph $G^*$ is closed. Recall that the dual form $\omega^*$ is defined by $\omega^*(e^*)=\omega(e)$, where the edge $e^*\in E^*$ dual to $e\in E$ is directed so that it intersects the positively directed $e$ from left to right. The form $\omega^*$ is closed  in the sense that its integral (sum) around the boundary of any face $f^*\in F^*$ vanishes. Therefore, it is exact, $\omega^*=dh$, with a function $h:V^*=F\to \bbR$ defined, up to an additive constant, by integrating $\omega^*$ along paths in $G^*$. This function is called the \emph{height function} of the dimer configuration $D$. Note that it depends on the choice of a unit form $\omega_0$, but that for two dimer configurations $D_1, D_2$ the difference $h(D_1) - h(D_2)$ does not depend on the choice of $\omega_0$, making this a more natural and intrinsic object than the individual height function $h(D)$.

Actually, a slight modification of this construction will be used. On the square grid, a usual choice of $\omega_0$ is the  Thurston's original one \cite{thurston1989groups}, with $\omega_0 = \frac{1}{4}$ on each positively (from black to white) directed edge. Likewise, on the hexagonal grid one can take $\omega_0 = \frac{1}{3}$ on each positively directed edge. 
However, with these standard choices, for a finite subgraph of a grid the boundary vertices will not have divergence $\pm 1$. Therefore, the form $\omega^*$ will not be closed around the boundary vertices. This is usually remedied by a modification of the dual graph, in which the edges $e^*$ dual to the boundary edges of $G$ are considered to have distinct ``remote'' vertices (of valence 1) rather than to have a common vertex corresponding to the outer (unbounded) face of $G$, see Fig.~\ref{Fig discr Abel}. In such a modified dual graph there are no faces dual to the boundary vertices of $G$, i.e., no cycles around the boundary vertices of $G$. Note that the height function at the boundary (valence 1) vertices of $G^*$ does not depend on the choice of the dimer configuration $D$. 
\medskip



\textbf{Diamond graph.} For the discussion of integrability properties of the dimer model, it is useful to introduce the concept of the \emph{diamond graph} $G^\diamond$ associated to $G$ and constructed as follows. The set of vertices of $G^\diamond$ is 
 $V\sqcup V^*$. Each pair of dual edges, say $e=(bw)\in E(G)$ and $e^*=(f_1f_2)\in E(G^*)$ define a quadrilateral $(bf_1wf_2)$, see Fig.~\ref{Fig discr Abel}. These quadrilaterals constitute the faces of the cell decomposition (quad-graph) 
$G^\diamond$. Let us stress that edges of $G^\diamond$ belong neither to $E(G)$ nor to $E(G^*)$ (the latter are diagonals of the quadrilaterals). 

A \emph{strip, or a train track} in $G^\diamond$ is a sequence of quadrilateral faces $q_j\in F(G^\diamond)$ such that any pair $q_{j-1}$, $q_j$ is adjacent along the edge $\fe_j = q_{j-1}\cap q_j$, and $\fe_j$, $\fe_{j+1}$ are opposite edges of $q_j$. The edges $\fe_j$ are called traverse edges of the strip. We visualize the strips as \emph{strands} passing through the midpoints of the diagonals $e=(bw)$ of the quadrilaterals, and entering and leaving each quadrilateral through two opposite sides. Through the midpoint of any edge $e\in E(G)$ there pass two strips. We use the convention that these strands are directed so that white vertices lie to the left, and black vertices lie to the right of the strands. See Figure~\ref{Fig discr Abel}. 

\begin{figure}[h]
\centering
\begin{subfigure}[p]{.45\textwidth}
	\centering
	\fontsize{10pt}{12pt}\selectfont
	\def\svgwidth{.9\textwidth}
	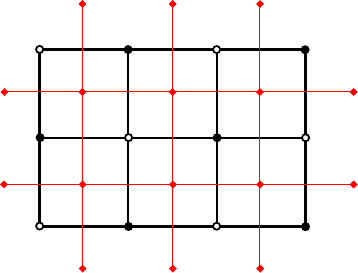
\end{subfigure}
\begin{subfigure}[p]{.35\textwidth}
	\centering
	\fontsize{10pt}{12pt}\selectfont
	\def\svgwidth{0.8\linewidth}
\begingroup%
  \makeatletter%
  \providecommand\color[2][]{%
    \errmessage{(Inkscape) Color is used for the text in Inkscape, but the package 'color.sty' is not loaded}%
    \renewcommand\color[2][]{}%
  }%
  \providecommand\transparent[1]{%
    \errmessage{(Inkscape) Transparency is used (non-zero) for the text in Inkscape, but the package 'transparent.sty' is not loaded}%
    \renewcommand\transparent[1]{}%
  }%
  \providecommand\rotatebox[2]{#2}%
  \newcommand*\fsize{\dimexpr\f@size pt\relax}%
  \newcommand*\lineheight[1]{\fontsize{\fsize}{#1\fsize}\selectfont}%
  \ifx\svgwidth\undefined%
    \setlength{\unitlength}{318.09269726bp}%
    \ifx\svgscale\undefined%
      \relax%
    \else%
      \setlength{\unitlength}{\unitlength * \real{\svgscale}}%
    \fi%
  \else%
    \setlength{\unitlength}{\svgwidth}%
  \fi%
  \global\let\svgwidth\undefined%
  \global\let\svgscale\undefined%
  \makeatother%
  \begin{picture}(1,1.41666335)%
    \lineheight{1}%
    \setlength\tabcolsep{0pt}%
    \put(0,0){\includegraphics[width=\unitlength,page=1]{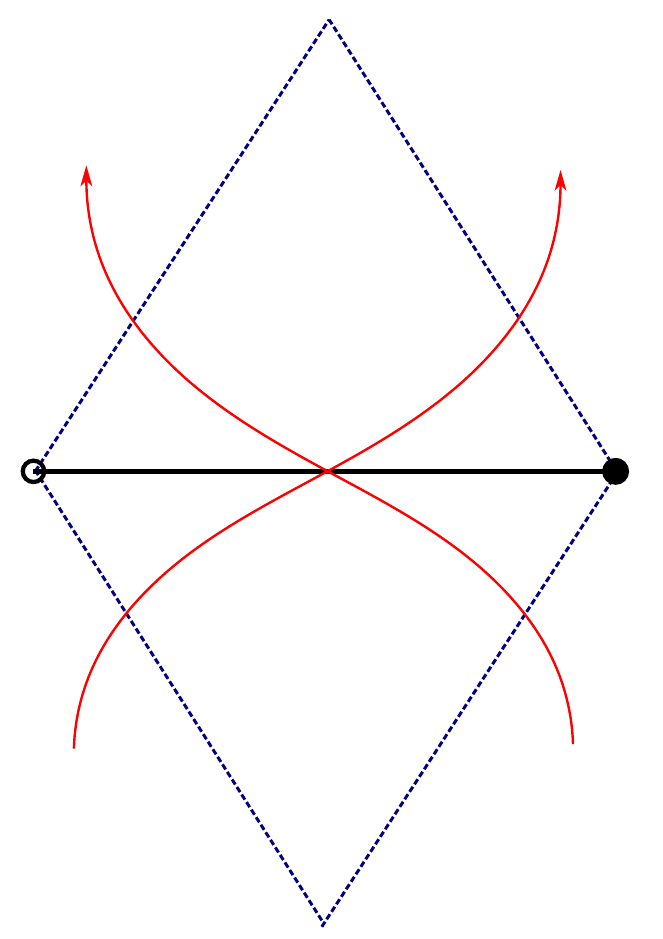}}%
    \put(-0.00214136,0.62226423){\color[rgb]{0,0,0}\makebox(0,0)[lt]{\lineheight{1.25}\smash{\begin{tabular}[t]{l}$w$\end{tabular}}}}%
    \put(0.94835858,0.62780658){\color[rgb]{0,0,0}\makebox(0,0)[lt]{\lineheight{1.25}\smash{\begin{tabular}[t]{l}$b$\end{tabular}}}}%
    \put(0.74329453,1.08504415){\color[rgb]{1,0,0}\makebox(0,0)[lt]{\lineheight{1.25}\smash{\begin{tabular}[t]{l}$\alpha$\end{tabular}}}}%
    \put(0.17798255,1.08504415){\color[rgb]{1,0,0}\makebox(0,0)[lt]{\lineheight{1.25}\smash{\begin{tabular}[t]{l}$\beta$\end{tabular}}}}%
    \put(0,0){\includegraphics[width=\unitlength,page=2]{Figures/FockWeightWithDiamond.pdf}}%
    \put(0.55762833,1.39264038){\color[rgb]{0,0,0}\makebox(0,0)[lt]{\lineheight{1.25}\smash{\begin{tabular}[t]{l}$f_2$\end{tabular}}}}%
    \put(0.55485712,0.01815644){\color[rgb]{0,0,0}\makebox(0,0)[lt]{\lineheight{1.25}\smash{\begin{tabular}[t]{l}$f_1$\end{tabular}}}}%
    \put(0,0){\includegraphics[width=\unitlength,page=3]{Figures/FockWeightWithDiamond.pdf}}%
  \end{picture}%
\endgroup%

\end{subfigure}
\caption{An example of a bipartite graph $G$ and its dual graph $G^*$ in red. The height function is defined on the vertices of $G^*$ (left). A quadrilateral face of $G^\diamond$, with two directed strips through it (right).}
\label{Fig discr Abel}
\end{figure}

\medskip
\textbf{Infinite graphs.} Our principal interest in this paper is towards understanding the thermodynamic limit of the dimer model on infinite regular lattices in $\bbR^2$, with the main examples being the regular square lattice and the regular hexagonal lattice. The diamond graphs for them are the regular square lattice again (rotated by 45$^\circ$), resp. the regular dual kagome lattice. Both are instances of \emph{quasicrystallic rhombic embeddings}, as introduced in \cite[Sect. 6.8]{belokolos_algebro-geometric_1994}. Those are quad-graphs embedded in the plane with all quadrilateral faces being rhombi with sides of the unit length from a finite set $\{ \pm \theta_1,\ldots,\pm \theta_d\}\subset\mathbb S^1$. As discussed in \cite[Sect. 6.8]{belokolos_algebro-geometric_1994}, such quad-graph can be seen as a sort of projection of a certain two-dimensional subcomplex (stepped surface) of a multi-dimensional regular square lattice $\bbZ^d$. While the presentation in this paper will be mainly restricted to two above mentioned lattices (square and hexagonal), it admits a generalization to arbitrary quasicrystallic rhombic embeddings. A still more general class of quad-graphs for which similar theory can be developed constitute minimal isoradial immersions studied in \cite{boutillier_elliptic_2020} and \cite{boutillier_minimal_2023}.

\section{Baker-Akhiezer functions and Fock weights}
\label{sec:3_baker_akhiezer}

In this section, we recall the definition of Fock weights \cite{fock_inverse_2015}, which can be constructed on an arbitrary bipartite graph, with a data provided by an arbitrary compact Riemann surface. After that, we introduce \emph{Baker-Akhiezer functions}, which are a key device in the theory of integrable systems, widely used in various contexts by I.M. Krichever (who also established this terminology).

Here are the main ingredients of the construction. 
\begin{itemize}
\item $\mathcal{R}$ a compact Riemann surface of genus $g$, with a fixed homology basis $a_1,\ldots,a_g$, $b_1,\ldots,b_g$.
\item $\boldsymbol{\omega}=(\omega_1,\ldots,\omega_g)$ the set of normalized holomorphic differentials, with $a$-periods $\int_{a_j}\omega_k=\delta_{jk}$ and $b$-periods $\int_{b_j}\omega_k=B_{jk}$;
\item $J(\mathcal{R})=\mathbb C^g/(\mathbb Z^g+B\mathbb Z^g)$ the Jacobi variety of $\mathcal{R}$;
\item $A:\mathcal{R}\to J(\mathcal{R})$, $P\mapsto A(P)=\int_{P_0}^{P}\boldsymbol{\omega}\pmod {\mathbb Z^g+B\mathbb Z^g}$ the Abel map.

\item $\theta(z)=\theta(z|B)$ the theta function on $\mathbb C^g$ with the period matrix $B$. It is defined by
\begin{equation}\label{theta}
\theta(z)=\theta(z|B)=\sum_{m\in\mathbb Z^g}\exp\Big(\pi i\langle Bm,m\rangle+2\pi i \langle z,m\rangle\Big).
\end{equation}
Its behavior under shifts of the argument by a period is given by
\begin{equation}\label{theta monodromy}
\theta(z+m+Bn)=\exp\big(-\pi i\langle n,Bn\rangle-2\pi i\langle z, n\rangle\big)\theta(z), \quad m,n\in\mathbb Z^g. 
\end{equation}
We will also need theta-functions with (half-integer) characteristics $\Delta=\left[\begin{array}{l} \Delta_1 \\ \Delta_2\end{array}\right]$, where
each of $\Delta_1,\Delta_2$ is of the form $\frac{1}{2}(\delta_1,\ldots,\delta_g)^{\rm T}$ with $\delta_j\in\{0,1\}$:
\begin{eqnarray}\label{theta with char}
\theta[\Delta](z) & = & \sum_{m\in \mathbb Z^g} \exp\Big(\pi i\langle B(m+\Delta_1),m+\Delta_1\rangle+2\pi i\langle z+\Delta_2,m+\Delta_1\rangle \Big)
  \nonumber\\
    & = & \exp\Big(\pi i\langle B\Delta_1,\Delta_1\rangle+2\pi i\langle z+\Delta_2,\Delta_1\rangle\Big) \theta(z+\Delta_2+B\Delta_1).
\end{eqnarray} 
 The characteristic $\Delta$ is called even (odd), if $4\langle\Delta_1,\Delta_2\rangle\equiv 0$ (resp. $\equiv 1\pmod 2$). The function $\theta[\Delta](z)$ is even (odd), iff the characteristic $\Delta$ is even (resp. odd). 

\item The prime form $E(\alpha,\beta)$, which can be expressed through a theta function with an odd characteristic $\Delta$:
\begin{equation}\label{E thru theta}
E(\alpha,\beta)=\frac{\theta[\Delta]\Big(\int_\alpha^\beta\boldsymbol{\omega}\Big)}{h_\Delta(\alpha)h_\Delta(\beta)}
\end{equation}
with a holomorphic spinor $h_\Delta$ (the result being independent of $\Delta$).
\end{itemize}

Given a planar bipartite graph $G$ (as discussed in Section~\ref{sect dimer models}), and a compact Riemann surface $\mathcal{R}$, we start with a \emph{labeling} of the strips of $G^\diamond$, assigning to each (directed) strip a point $\alpha\in \mathcal{R}$. The following notions were introduced by Fock in \cite{fock_inverse_2015}.

\begin{definition}\label{def discr Abel} 
For a labeled graph $G^\diamond$, the \emph{discrete Abel map}  $\eta:V\cup F =V(G^\diamond)\to J(\mathcal{R})$ is defined as follows: set $\eta(v_0)=0$ for some $v_0\in V(G^\diamond)$, and then extend it to the whole of $V(G^\diamond)$ according to the following recursive rule (we use here notations from Fig.~\ref{Fig discr Abel}): 
\begin{eqnarray}
\eta(b) & = & \eta(f_1)-A(\beta)=\eta(f_2)-A(\alpha), \\ 
\eta(w) & = & \eta(f_2)+A(\beta)=\eta(f_1)+A(\alpha),
\end{eqnarray}
which also implies
\begin{equation}
\eta(b)=\eta(w)-A(\alpha)-A(\beta), \qquad \eta(f_2)=\eta(f_1)+A(\alpha)-A(\beta).
\end{equation}
\end{definition}
In a more informal way, this rule can be formulated as follows: if a (directed) edge of $G^\diamond$ traverses the strip labelled by $\alpha$ from right to left, then it contributes $A(\alpha)$, while traversing the strip from left to right contributes $-A(\alpha)$ to the discrete Abel map. This is well defined (closed along cycles in $G^\diamond$).

\begin{definition}\label{def Fock weights}
For a labeled bipartite graph $G$, \emph{Fock weights} are coefficients on (directed) edges of $G$ depending on the labels $\alpha$, $\beta$ of the two strips intersecting the edge defined by the formula:
\begin{equation}\label{Fock coeff}
K_{wb}(\alpha,\beta)=\frac{E(\alpha,\beta)}{\theta(\eta(f_1)+D)\theta(\eta(f_2)+D)},
\end{equation}
where $\eta$ is the discrete Abel map, and $D\in \mathbb C^g$ is arbitrary. Here we use notations of Fig.~\ref{Fig Fock coeff}), the rule is as follows: going from $w$ to $b$, and then turning left at the midpoint, one first meets the strand labeled $\alpha$, then the strand labeled $\beta$. 
\end{definition}

\begin{figure}[h]
	\centering
	\fontsize{10pt}{12pt}\selectfont
	\def\svgwidth{0.35\linewidth}
\begingroup%
  \makeatletter%
  \providecommand\color[2][]{%
    \errmessage{(Inkscape) Color is used for the text in Inkscape, but the package 'color.sty' is not loaded}%
    \renewcommand\color[2][]{}%
  }%
  \providecommand\transparent[1]{%
    \errmessage{(Inkscape) Transparency is used (non-zero) for the text in Inkscape, but the package 'transparent.sty' is not loaded}%
    \renewcommand\transparent[1]{}%
  }%
  \providecommand\rotatebox[2]{#2}%
  \newcommand*\fsize{\dimexpr\f@size pt\relax}%
  \newcommand*\lineheight[1]{\fontsize{\fsize}{#1\fsize}\selectfont}%
  \ifx\svgwidth\undefined%
    \setlength{\unitlength}{295.17427603bp}%
    \ifx\svgscale\undefined%
      \relax%
    \else%
      \setlength{\unitlength}{\unitlength * \real{\svgscale}}%
    \fi%
  \else%
    \setlength{\unitlength}{\svgwidth}%
  \fi%
  \global\let\svgwidth\undefined%
  \global\let\svgscale\undefined%
  \makeatother%
  \begin{picture}(1,0.94844645)%
    \lineheight{1}%
    \setlength\tabcolsep{0pt}%
    \put(0,0){\includegraphics[width=\unitlength,page=1]{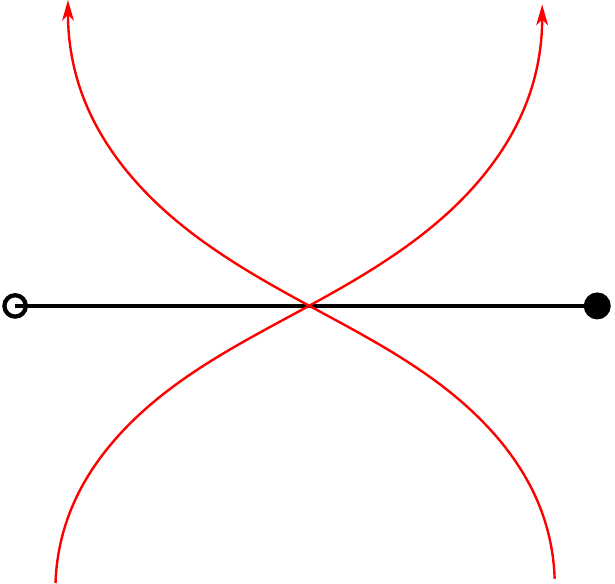}}%
    \put(-0.00230763,0.35244165){\color[rgb]{0,0,0}\makebox(0,0)[lt]{\lineheight{1.25}\smash{\begin{tabular}[t]{l}$w$\end{tabular}}}}%
    \put(0.94434895,0.35841433){\color[rgb]{0,0,0}\makebox(0,0)[lt]{\lineheight{1.25}\smash{\begin{tabular}[t]{l}$b$\end{tabular}}}}%
    \put(0.77114367,0.86011243){\color[rgb]{1,0,0}\makebox(0,0)[lt]{\lineheight{1.25}\smash{\begin{tabular}[t]{l}$\alpha$\end{tabular}}}}%
    \put(0.16193878,0.86011243){\color[rgb]{1,0,0}\makebox(0,0)[lt]{\lineheight{1.25}\smash{\begin{tabular}[t]{l}$\beta$\end{tabular}}}}%
    \put(0.40382899,0.31660609){\color[rgb]{0,0,0}\makebox(0,0)[lt]{\lineheight{1.25}\smash{\begin{tabular}[t]{l}$K_{wb}(\alpha,\beta)$\end{tabular}}}}%
    \put(0,0){\includegraphics[width=\unitlength,page=2]{Figures/FockWeight.pdf}}%
  \end{picture}%
\endgroup%

	\caption{Coefficient $K_{wb}(\alpha,\beta)$ of the Dirac operator on an edge $wb$}
	\label{Fig Fock coeff}
\end{figure}

The matrix $K_{WB}=(K_{wb}(\alpha,\beta))_{w\in W, \, b\in B}$ will play the role of the Kasteleyn matrix of a dimer model on $G$, as explained in Section~\ref{sect dimer models}, provided these weights are gauge equivalent to a system of positive weights. A construction enforcing this positivity property, based on $M$-curves $\mathcal{R}$, is the subject of Section 6.

In the next sections, we will be focusing on the integrability properties of the \emph{Dirac operator}
\begin{equation}\label{Dirac}
K_{WB}: \mathbb C^B\to\mathbb C^W
\end{equation}
defined, for $\psi\in\bbC^B$, by
\begin{equation}
(K_{WB} \psi)_w=\sum_{b\sim w} K_{wb}(\alpha,\beta)\psi_b.
\end{equation}
The study of integrability is based on the construction of \emph{Baker-Akhiezer functions} $(\psi_b(P))_{b\in B}$, which are functions on the Riemann surface $\mathcal{R}$. In the following their argument is denoted by $P\in \mathcal{R}$. These functions turn out to lie in the kernel of the Dirac operator $K_{WB}$.

We introduce more combinatorial structure for the set $B$. Namely, we define the \emph{black graph} $(B,E_B)$, by declaring $b_1,b_2\in B$ to be neighbors, connected by an edge $(b_1b_2)\in E_B$ , if there exists $w\in W$ such that the path $b_1wb_2$ is a part of the boundary of a face of $G$. 

The following two analytic facts are required for introducing the Baker-Akhiezer functions.

\begin{proposition}\label{Mumford on prime forms}
Let $\omega_{\alpha-\beta}$ be the unique Abelian differential of the third kind with vanishing $a$-periods and with exactly two poles at $\alpha$ and $\beta$ with the residues $+1$ resp. $-1$ at these points. Then
\begin{equation}
\exp\left(\int_Q^P \omega_{\alpha-\beta}\right)=\frac{E(P,\alpha)}{E(P,\beta)} \frac{E(Q,\beta)}{E(Q,\alpha)}.
\end{equation}
\end{proposition}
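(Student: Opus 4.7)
The plan is to compare the two sides of the claimed identity by taking the logarithmic derivative with respect to $P$. Let
\[
F(P) \;=\; \frac{E(P,\alpha)}{E(P,\beta)}\cdot\frac{E(Q,\beta)}{E(Q,\alpha)}
\]
and set $\omega(P) = d_P\log F(P)$. If I can show that $\omega = \omega_{\alpha-\beta}$, the result follows by integrating from $Q$ to $P$, using the fact that $F(Q)=1$ to fix the constant of integration.

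First, $\omega(P) = d_P\log E(P,\alpha) - d_P\log E(P,\beta)$, so the $Q$-dependent factor plays no role in this step. Using \eqref{E thru theta} with an odd characteristic $\Delta$, I would write
\[
d_P\log E(P,\alpha) \;=\; d_P\log \theta[\Delta]\bigl(A(P)-A(\alpha)\bigr) \;-\; d_P\log h_\Delta(P),
\]
and observe that the spinor term $d_P\log h_\Delta(P)$, which is the only part of the expression not invariantly defined on $\mathcal R$, cancels against the analogous term from $d_P\log E(P,\beta)$. Thus $\omega$ is a genuine meromorphic one-form on $\mathcal R$, independent of the chosen odd characteristic.

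Next I would verify the pole and period data of $\omega$. Since $E(P,\alpha)$ vanishes simply at $P=\alpha$ and is otherwise zero- and pole-free in $P$, the form $d_P\log E(P,\alpha)$ is holomorphic away from $\alpha$ with a simple pole of residue $+1$ at $\alpha$. Hence $\omega$ has exactly two simple poles with residues $+1$ at $\alpha$ and $-1$ at $\beta$. For the $a$-periods, the prime form is single-valued along $a_k$-cycles (the automorphy factor of $\theta[\Delta]$ under the shift $z\mapsto z+e_k$ is $P$-independent and cancels in the ratio $E(P,\alpha)/E(P,\beta)$), so $\oint_{a_k}\omega=0$. By uniqueness of the Abelian differential of the third kind with prescribed simple poles, residues, and vanishing $a$-periods, we conclude $\omega=\omega_{\alpha-\beta}$, and integration yields the stated formula.

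The main obstacle I anticipate is handling the multivaluedness carefully: neither side of the identity is single-valued as a function of $P$ on $\mathcal R$, but both must pick up the same $b$-cycle monodromy. As a consistency check I would verify that under $P\mapsto P+b_k$ the ratio $E(P,\alpha)/E(P,\beta)$ is multiplied by $\exp\bigl(2\pi i(A_k(\alpha)-A_k(\beta))\bigr)$ while, by the Riemann bilinear relations, $\oint_{b_k}\omega_{\alpha-\beta}=2\pi i\,(A_k(\alpha)-A_k(\beta))$, so the two monodromies match. Once the $a$-period computation and this monodromy check are in hand, no genuinely new calculation is required, and the statement is reduced to the uniqueness of the normalized third-kind differential.
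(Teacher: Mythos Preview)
Your argument is correct. The paper does not actually supply a proof of this proposition: it is introduced together with Proposition~\ref{BA mero} under the heading ``The following two analytic facts are required for introducing the Baker-Akhiezer functions'', and is quoted as a standard result (the label \texttt{Mumford on prime forms} points to Mumford's Tata Lectures). So you have provided a proof where the paper offers none.

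Your strategy---identify $d_P\log\bigl(E(P,\alpha)/E(P,\beta)\bigr)$ with $\omega_{\alpha-\beta}$ via the uniqueness of the normalized third-kind differential, then integrate---is exactly the standard one. Two small remarks. First, your observation that the spinor term $d_P\log h_\Delta(P)$ cancels is more naturally phrased as saying that the \emph{ratio} $E(P,\alpha)/E(P,\beta)$ is already a scalar function of $P$ (the $(-\tfrac12)$-form behavior in $P$ cancels), so its logarithmic derivative is automatically a meromorphic $1$-form; there is no need to split into theta and spinor pieces. Second, once you have established $\omega=\omega_{\alpha-\beta}$ from the pole data and vanishing $a$-periods, the $b$-period consistency check you mention is redundant for the proof itself---it follows automatically---though it is a nice sanity check.
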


\begin{proposition}\label{BA mero}
For any $D\in J(\mathcal{R})$, $\alpha,\beta\in \mathcal{R}$, the following expression defines a genuine meromorphic function on $\mathcal{R}$:
\begin{equation}
\psi(P)=\frac{\theta\big(A(P)+A(\alpha)-A(\beta)+D\big)}{\theta\big(A(P)+D\big)}\ \frac{E(P,\alpha)}{E(P,\beta)}.
\end{equation}
\end{proposition}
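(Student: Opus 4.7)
The claim is that $\psi(P)$, built from objects that individually live only on the universal cover of $\mathcal{R}$, descends to a single-valued meromorphic function on $\mathcal{R}$ itself. So the plan is essentially a monodromy check: compute how each factor transforms under analytic continuation around $a_k$ and $b_k$ and verify exact cancellation, then argue that the only singularities are poles.

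First I would fix a base point $P_0$ and work on the universal cover so that $A(P)$, $E(P,\alpha)$ and $E(P,\beta)$ are honestly defined as holomorphic functions (resp. as a section of an appropriate bundle for the prime form). The theta-ratio $\Theta(P):=\theta(A(P)+A(\alpha)-A(\beta)+D)/\theta(A(P)+D)$ is holomorphic away from the zero divisor of the denominator; its transformation under the lattice shifts is read off directly from \eqref{theta monodromy}. An $a_k$-cycle shifts $A(P)$ by $e_k$ and leaves both thetas invariant, so $\Theta$ has trivial $a$-monodromy. A $b_k$-cycle shifts $A(P)$ by $Be_k$, and the exponential prefactors in \eqref{theta monodromy} differ between numerator and denominator only in the linear-in-$A(P)$ piece; the difference collapses to the clean factor
\begin{equation*}
\Theta(P)\ \longmapsto\ \exp\!\big(-2\pi i\,(A(\alpha)-A(\beta))_k\big)\,\Theta(P).
\end{equation*}

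Next I would handle the prime-form ratio $E(P,\alpha)/E(P,\beta)$ using Proposition~\ref{Mumford on prime forms}. Rewriting that identity as
\begin{equation*}
\frac{E(P,\alpha)}{E(P,\beta)}=\frac{E(Q,\alpha)}{E(Q,\beta)}\exp\!\Big(\int_Q^P\omega_{\alpha-\beta}\Big),
\end{equation*}
the multi-valuedness of $E(P,\alpha)/E(P,\beta)$ is entirely encoded in the multi-valued exponential. Since $\omega_{\alpha-\beta}$ is normalized to have vanishing $a$-periods, the $a_k$-monodromy is trivial. The $b_k$-monodromy is $\exp(\int_{b_k}\omega_{\alpha-\beta})$, and by the reciprocity law for differentials of the third kind (Riemann bilinear relations) this equals $\exp(2\pi i\,(A(\alpha)-A(\beta))_k)$. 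Thus the prime-form ratio has $b_k$-monodromy exactly inverse to that of $\Theta$, and the product $\psi(P)=\Theta(P)\cdot E(P,\alpha)/E(P,\beta)$ is invariant under the full fundamental group of $\mathcal{R}$.

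Finally I would check that $\psi$ has no essential singularities. The theta function is entire on $\mathbb{C}^g$ and $A$ is holomorphic, so the numerator and denominator thetas contribute only isolated zeros (generically giving $g$ zeros of the denominator, inherited as poles of $\psi$). The prime form $E(P,\alpha)$ vanishes to first order at $P=\alpha$ and is otherwise non-vanishing and holomorphic in $P$, so $E(P,\alpha)/E(P,\beta)$ contributes a simple zero at $P=\alpha$ and a simple pole at $P=\beta$. Collecting this, $\psi$ is meromorphic on $\mathcal{R}$ as asserted.

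The main potentially subtle step is the sign and normalization in the $b_k$-monodromy of $\exp(\int_Q^P\omega_{\alpha-\beta})$: one has to be careful that the reciprocity law is applied with the conventions matching those fixing $\boldsymbol{\omega}$ and $B$ in the list of ingredients, so that the two monodromy factors are honest inverses rather than differing by a sign. Once the conventions are aligned, the cancellation is automatic and the rest of the argument is a direct inspection of singularities.
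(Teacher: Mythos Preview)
Your proof is correct and follows the standard route for such statements: track the $a$- and $b$-monodromies of the theta ratio and of the prime-form ratio separately, invoke the reciprocity law for third-kind differentials to compute the latter, and observe that the two contributions cancel. The paper itself does not supply a proof of this proposition; it is listed among ``analytic facts'' required for the construction and treated as standard background (cf.\ Fay or Mumford). So there is nothing to compare against, and your argument fills the gap cleanly.

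One small remark on presentation: the spinor factors $h_\Delta(P)$ in the definition \eqref{E thru theta} of the prime form cancel in the ratio $E(P,\alpha)/E(P,\beta)$, which is why this ratio is a genuine (multi-valued) function of $P$ rather than a section of a spinor bundle; you implicitly rely on this when you treat the ratio via Proposition~\ref{Mumford on prime forms}, and it would do no harm to say so explicitly.
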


\begin{definition}
For $b\in B$, we define the \emph{Baker-Akhiezer functions} $\psi_b:\mathcal{R}\to\mathbb C$ by the following rule. Fix some $b_0\in B$ and set $\psi_{b_0}(P)=1$. For any $b\in B$, let there be a path $b_0b_1\ldots b_n$ in the black graph connecting $b_0$ to $b_n=b$ (so that all $b_k\in B$, and there exist $w_k\in W$ such that each two-edge path $b_{k-1}w_kb_k$ is a part of the boundary of a face of $G$). Then, in notations of Fig.~\ref{Fig black transition},
\begin{equation}\label{black BA}
\psi_b(P)=\frac{\theta\big(A(P)+\eta(b)+D\big)}{\theta\big(A(P)+\eta(b_0)+D\big)}\prod_{k=1}^n\frac{E(P,\alpha_k^-)}{E(P,\alpha_k^+)}
\end{equation}
\end{definition}
Since 
\begin{equation}\label{discr Abel black}
\eta(b)-\eta(b_0)=\sum_{k=1}^n (A(\alpha_k^-)-A(\alpha_k^+)), 
\end{equation}
all $\psi_b(P)$ are, by Proposition~\ref{BA mero}, meromorphic functions on $\mathcal{R}$. The transition from $\psi_{b_{k-1}}(P)$ to $\psi_{b_k}(P)$ is described by the transition function
\begin{equation}\label{black BA transition}
\dfrac{\psi_{b_k}(P)}{\psi_{b_{k-1}}(P)}= \frac{\theta\big(A(P)+\eta(b_k)+D\big)}{\theta\big(A(P)+\eta(b_{k-1})+D\big)}\frac{E(P,\alpha_k^-)}{E(P,\alpha_k^+)},
\end{equation}
which adds a zero $\alpha_k^-$ and a pole $\alpha_k^+$ (all $\psi_b(P)$ have additionally poles at the zeros of $\theta(A(P)+\eta(b_0)+D)$, which form a fixed effective divisor governed by the point $D$).

\begin{rem}
    Note that this definition is independent of the choice of path $b_0 b_1 \ldots b_n$. Indeed, $\psi_{b_n}/\psi_{b_0}$ contains terms only for every train track separating $b_n$ and $b_0$.
\end{rem}

\begin{figure}
    \centering
    \fontsize{10pt}{12pt}\selectfont
    \def\svgwidth{0.4\linewidth}
\begingroup%
  \makeatletter%
  \providecommand\color[2][]{%
    \errmessage{(Inkscape) Color is used for the text in Inkscape, but the package 'color.sty' is not loaded}%
    \renewcommand\color[2][]{}%
  }%
  \providecommand\transparent[1]{%
    \errmessage{(Inkscape) Transparency is used (non-zero) for the text in Inkscape, but the package 'transparent.sty' is not loaded}%
    \renewcommand\transparent[1]{}%
  }%
  \providecommand\rotatebox[2]{#2}%
  \newcommand*\fsize{\dimexpr\f@size pt\relax}%
  \newcommand*\lineheight[1]{\fontsize{\fsize}{#1\fsize}\selectfont}%
  \ifx\svgwidth\undefined%
    \setlength{\unitlength}{540.6921315bp}%
    \ifx\svgscale\undefined%
      \relax%
    \else%
      \setlength{\unitlength}{\unitlength * \real{\svgscale}}%
    \fi%
  \else%
    \setlength{\unitlength}{\svgwidth}%
  \fi%
  \global\let\svgwidth\undefined%
  \global\let\svgscale\undefined%
  \makeatother%
  \begin{picture}(1,0.60420425)%
    \lineheight{1}%
    \setlength\tabcolsep{0pt}%
    \put(0,0){\includegraphics[width=\unitlength,page=1]{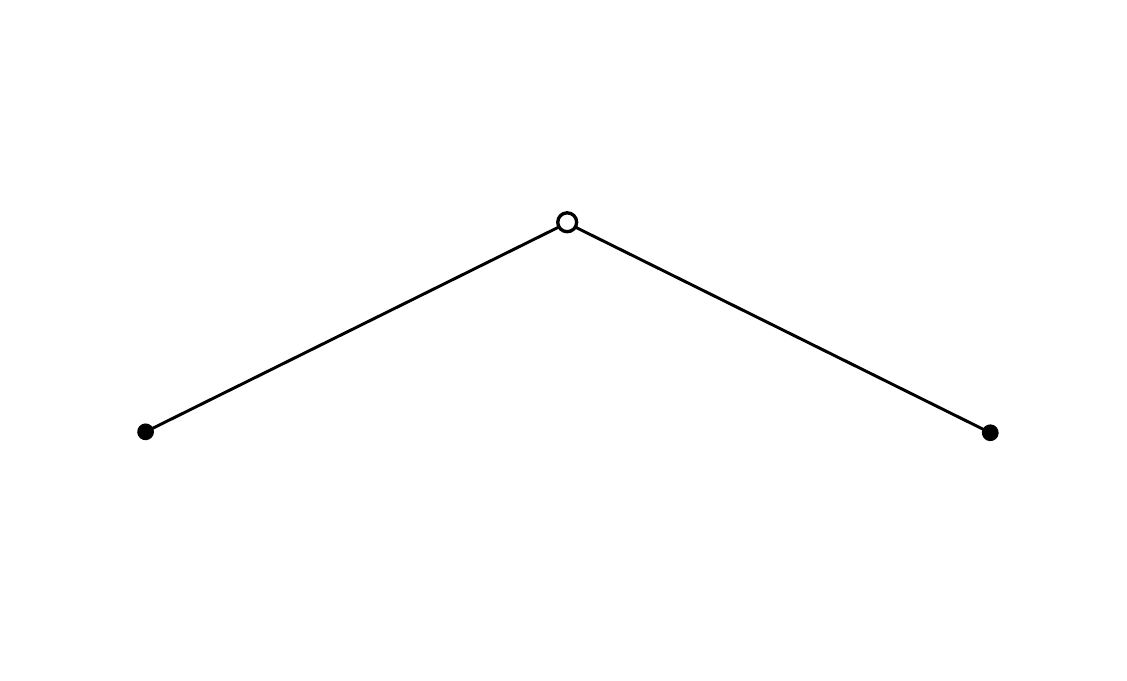}}%
    \put(0.10128983,0.16567287){\color[rgb]{0,0,0}\makebox(0,0)[lt]{\lineheight{1.25}\smash{\begin{tabular}[t]{l}$b_{k-1}$\end{tabular}}}}%
    \put(0.85433228,0.16567287){\color[rgb]{0,0,0}\makebox(0,0)[lt]{\lineheight{1.25}\smash{\begin{tabular}[t]{l}$b_k$\end{tabular}}}}%
    \put(0.48675117,0.44912836){\color[rgb]{0,0,0}\makebox(0,0)[lt]{\lineheight{1.25}\smash{\begin{tabular}[t]{l}$w$\end{tabular}}}}%
    \put(0,0){\includegraphics[width=\unitlength,page=2]{Figures/TransitionFunctionBlackToBlack.pdf}}%
    \put(0.93340874,0.37425249){\color[rgb]{1,0,0}\makebox(0,0)[lt]{\lineheight{1.25}\smash{\begin{tabular}[t]{l}$\beta$\end{tabular}}}}%
    \put(0.37626844,0.11355237){\color[rgb]{1,0,0}\makebox(0,0)[lt]{\lineheight{1.25}\smash{\begin{tabular}[t]{l}$\alpha^-_k$\end{tabular}}}}%
    \put(0.68768881,0.11355237){\color[rgb]{1,0,0}\makebox(0,0)[lt]{\lineheight{1.25}\smash{\begin{tabular}[t]{l}$\alpha^+_k$\end{tabular}}}}%
  \end{picture}%
\endgroup%

	\caption{Strips along a path in $G$ corresponding to an edge $(b_{k-1}b_k)$ in the black graph. The increment of the discrete Abel map $\eta(b_k)-\eta(b_{k-1})=A(\alpha_k^-)-A(\alpha_k^+)$ only depends on the strands $\alpha_k^-$ and $\alpha_k^+$ intersecting that edge}
	\label{Fig black transition}
\end{figure}

\begin{figure}[h]
    \centering
    \fontsize{10pt}{12pt}\selectfont
    \def\svgwidth{0.5\textwidth}
    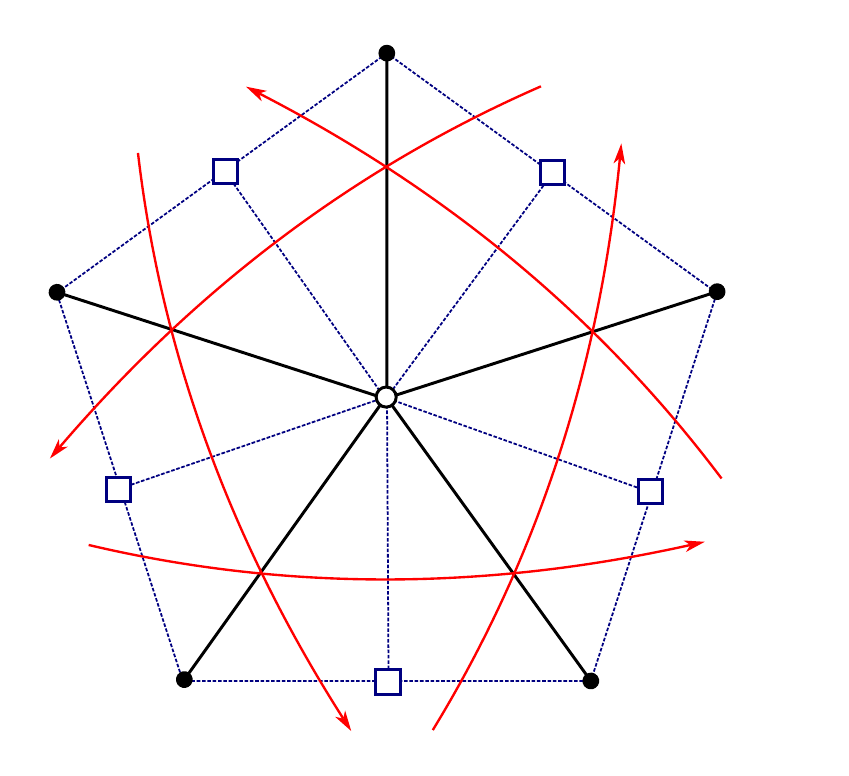
	\caption{Star of a white vertex}
	\label{Fig white star}
\end{figure}
\begin{theorem}\label{Th black Dirac}
For any $P\in \mathcal{R}$, the vector $(\psi_b(P))_{b\in B}\in\bbC^B$ is in the kernel of the Dirac operator $K_{WB}$. In other words: consider the star of a white vertex $w\in W$, consisting of $b_1,b_2,\ldots,b_n\in B$, see Fig.~\ref{Fig white star}. The associated Baker-Akhiezer functions satisfy the following relation: 
\begin{equation}\label{black Dirac}
\sum_{k=1}^n K_{wb_k}(\alpha_{k-1},\alpha_k)\psi_{b_k}(P)=0,
\end{equation}
where
\begin{equation}\label{black Dirac weights}
K_{wb_k}(\alpha_{k-1},\alpha_k)=\frac{E(\alpha_{k-1},\alpha_k)}{\theta\big(\eta(f_{k-1})+D\big)\theta\big(\eta(f_k)+D\big)}.
\end{equation}
\end{theorem}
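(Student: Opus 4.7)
The strategy is to prove that the meromorphic function
\begin{equation*}
\Psi(Q) := \sum_{k=1}^n K_{wb_k}(\alpha_{k-1},\alpha_k)\,\psi_{b_k}(Q)
\end{equation*}
of $Q\in\mathcal{R}$ vanishes identically; setting $Q=P$ then gives the stated identity.

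First, I bring all the $\psi_{b_k}$ to a common form. By the path-independence noted in the Remark, one can compute $\psi_{b_k}$ by walking from a chosen base-point $b_1$ around the star of $w$ along the black-graph edges $b_1,b_2,\ldots,b_k$ (each such edge $b_jb_{j+1}$ comes from the two-edge path $b_jwb_{j+1}\subset\partial f_j$). From the rule $\eta(b)=\eta(w)-A(\alpha)-A(\beta)$ applied at every edge of the star one gets $\eta(b_{j+1})-\eta(b_j)=A(\alpha_{j-1})-A(\alpha_{j+1})$, so the transition \eqref{black BA transition} picks up the factor $E(Q,\alpha_{j-1})/E(Q,\alpha_{j+1})$. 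The resulting product telescopes to
\begin{equation*}
  \psi_{b_k}(Q) \;=\; \psi_{b_1}(Q)\,\frac{\theta\bigl(A(Q)+\eta(b_k)+D\bigr)}{\theta\bigl(A(Q)+\eta(b_1)+D\bigr)}\,\frac{E(Q,\alpha_n)\,E(Q,\alpha_1)}{E(Q,\alpha_{k-1})\,E(Q,\alpha_k)},
\end{equation*}
with cyclic convention $\alpha_0:=\alpha_n$. Substituting this together with $\eta(f_k)=\eta(w)-A(\alpha_k)$ and $\eta(b_k)=\eta(w)-A(\alpha_{k-1})-A(\alpha_k)$ into the definition of $\Psi$, and factoring out the $k$-independent prefactor, the identity $\Psi\equiv 0$ becomes equivalent to
\begin{equation*}
  \sum_{k=1}^n \frac{E(\alpha_{k-1},\alpha_k)\,\theta\bigl(A(Q)+z-A(\alpha_{k-1})-A(\alpha_k)\bigr)}{\theta\bigl(z-A(\alpha_{k-1})\bigr)\,\theta\bigl(z-A(\alpha_k)\bigr)\,E(Q,\alpha_{k-1})\,E(Q,\alpha_k)} \;=\;0,
\end{equation*}
with $z:=\eta(w)+D\in\mathbb{C}^g$ arbitrary. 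This is the essential algebro-geometric content of the theorem and depends only on the data $\{Q,\alpha_1,\ldots,\alpha_n,z\}$ on $\mathcal{R}$.

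To establish this reduced identity I apply the residue theorem on the compact surface $\mathcal{R}$. The plan is to construct a meromorphic $1$-form $\Omega(R)$ on $\mathcal{R}$ of the shape
\begin{equation*}
  \Omega(R) \;=\; \frac{\theta\bigl(A(R)+A(Q)+z-\sum_{\ell=1}^{n}A(\alpha_\ell)\bigr)\,\sigma(R)}{\theta\bigl(z-A(R)\bigr)\,E(R,Q)\,\prod_{k=1}^n E(R,\alpha_k)},
\end{equation*}
with $\sigma(R)$ a holomorphic $\tfrac{n+1}{2}$-differential tuned so that $\Omega$ is a genuine single-valued meromorphic $1$-form: the shift of $A(R)$ across any $b$-cycle multiplies the $\theta$-numerator and the prime forms $E(R,\cdot)$ by matching exponential factors that cancel, exactly as in Proposition~\ref{BA mero}. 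The form $\Omega$ then has simple poles only at $R=\alpha_1,\ldots,\alpha_n$ and $R=Q$, the apparent poles at zeros of $\theta(z-A(R))$ being killed by matching zeros of the numerator. Using Proposition~\ref{Mumford on prime forms} to rewrite the surviving products of prime forms, $\operatorname{Res}_{R=\alpha_k}\Omega$ is proportional to the $k$-th summand of the reduced identity (with a common, $k$-independent coefficient), while $\operatorname{Res}_{R=Q}\Omega$ is arranged to vanish by the gauge choice of $\sigma$. Then $\sum_R\operatorname{Res}_R\Omega=0$ yields the identity.

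\textbf{Main obstacle.} The hard part is this last step — pinning down the precise single-valued form $\Omega(R)$ and checking the auxiliary residue cancellations. The bookkeeping of the quasi-periodicity \eqref{theta monodromy} against the $(-\tfrac12,-\tfrac12)$-differential nature of the prime form is delicate, and one must verify that the residues at $R=Q$ and at zeros of $\theta(z-A(R))$ really do drop out. Equivalently — and this is the viewpoint the authors invoke when they say that Fay's identity emerges as a special case of the construction — the reduced identity is the generalized Fay trisecant identity applied to the $(n+1)$-point configuration $\{Q,\alpha_1,\ldots,\alpha_n\}$ with auxiliary vector $z\in\mathbb{C}^g$.
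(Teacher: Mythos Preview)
Your reduction to the symmetric identity is correct, but the proof is incomplete: you yourself flag the construction of the auxiliary form $\Omega(R)$ and the residue check as the ``main obstacle'' and do not carry it out. Absent that, what remains is an appeal to the generalized Fay identity, which in the logic of this paper is a \emph{corollary} of the theorem (Section~\ref{sec:4_fay_and_spider} derives Fay from the $n=3$ case), so invoking it here would be circular.

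The paper's proof avoids this entirely by a much more elementary argument. Rather than build an auxiliary $1$-form, it views $1+\sum_{k=2}^n c_k\,\psi_k/\psi_1$ as a meromorphic function of $P$ with simple poles at $\alpha_2,\ldots,\alpha_{n-1}$ (beyond the $g$ common poles coming from $\theta(A(P)+\eta(b_1)+D)$). Each pole $\alpha_j$ is shared by exactly two terms, $\psi_j/\psi_1$ and $\psi_{j+1}/\psi_1$, so killing it fixes the ratio $c_j/c_{j+1}$. Doing this for $j=2,\ldots,n-1$ determines $c_2,\ldots,c_n$ up to a common scale. After these $n-2$ cancellations the expression, multiplied by $\theta(A(P)+\eta(b_1)+D)$, is a holomorphic section of a degree-$g$ line bundle and hence a constant multiple of $\theta(A(P)+\eta(b_1)+D)$; equivalently, $1+\sum c_k\psi_k/\psi_1$ is a constant. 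Imposing one more condition, vanishing at $P=\alpha_n$ (where only the terms $k=1$ and $k=n$ survive), forces that constant to zero and pins down $c_n$. The resulting $c_k$ are then read off as the Fock weights \eqref{black Dirac weights}. No auxiliary differential, no Fay identity --- just pole-by-pole residue matching and a dimension count.
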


\noindent
\emph{Proof.} For the sake of brevity, we write in this proof $\psi_k$ for $\psi_{b_k}$. Since the transition from $\psi_{k-1}$ to $\psi_k$ adds the zero $\alpha_{k-2}$ and the pole $\alpha_k$, and due to the corresponding cancellations, we find:
\begin{eqnarray*}
\frac{\psi_2(P)}{\psi_1(P)} & = & \frac{\theta\big(A(P)+\eta(b_2)+D\big)}{\theta\big(A(P)+\eta(b_1)+D\big)}\ \frac{E(P,\alpha_n)}{E(P,\alpha_2)}, \\
\frac{\psi_k(P)}{\psi_1(P)} & = & \frac{\theta\big(A(P)+\eta(b_k)+D\big)}{\theta\big(A(P)+\eta(b_1)+D\big)}\ \frac{E(P,\alpha_n)}{E(P,\alpha_k)}\ \frac{E(P,\alpha_1)}{E(P,\alpha_{k-1})} \quad {\rm for}\quad 3\le k\le n-1,\\
\frac{\psi_n(P)}{\psi_1(P)} & = & \frac{\theta\big(A(P)+\eta(b_n)+D\big)}{\theta\big(A(P)+\eta(b_1)+D\big)}\ \frac{E(P,\alpha_1)}{E(P,\alpha_{n-1})}.
\end{eqnarray*}
We determine the coefficients $c_2,\ldots,c_n$ in
$$
1+\sum_{k=2}^n c_k\frac{\psi_k(P)}{\psi_1(P)} = 0
$$
requiring that the residues at the poles $\alpha_2,\ldots,\alpha_{n-1}$ vanish (this gives $n-2$ conditions), then, upon multiplication with $\theta(A(P)+\eta(b_1)+D)$, this linear combination equals a constant, and then requiring that this linear combination vanishes at $\alpha_n$ (the $(n-1)$-st condition), then it vanishes identically. 

The point $\alpha_2$ is the pole of $\psi_2/\psi_1$ and of $\psi_3/\psi_1$, with the residues 
$$
\frac{\theta\big(\eta(b_2)+A(\alpha_2)+D\big)}{\theta\big(\eta(b_1)+A(\alpha_2)+D\big)}\ E(\alpha_2,\alpha_n)
\quad{\rm and}\quad
\frac{\theta\big(\eta(b_3)+A(\alpha_2)+D\big)}{\theta\big(\eta(b_1)+A(\alpha_2)+D\big)}\ \frac{E(\alpha_2,\alpha_n)}{E(\alpha_2,\alpha_3)}\ E(\alpha_2,\alpha_1),
$$
respectively. Therefore, we kill this pole by choosing
$$
\frac{c_2}{c_3}=\frac{\theta\big(\eta(b_3)+A(\alpha_2)+D\big)}{\theta\big(\eta(b_2)+A(\alpha_2)+D\big)}\ \frac{E(\alpha_1,\alpha_2)}{E(\alpha_2,\alpha_3)}
=\frac{\theta\big(\eta(f_3)+D\big)}{\theta\big(\eta(f_1)+D\big)}\ \frac{E(\alpha_1,\alpha_2)}{E(\alpha_2,\alpha_3)}. 
$$
For $3\le k\le n-2$, the point $\alpha_k$ is the pole of $\psi_k/\psi_1$ and of $\psi_{k+1}/\psi_1$, with the residues
$$
\frac{\theta\big(\eta(b_k)+A(\alpha_k)+D\big)}{\theta\big(\eta(b_1)+A(\alpha_k)+D\big)}\ E(\alpha_k,\alpha_n)\ \frac{E(\alpha_k,\alpha_1)}{E(\alpha_k,\alpha_{k-1})}
$$
and
$$
\frac{\theta\big(\eta(b_{k+1}+A(\alpha_k)+D\big)}{\theta\big(\eta(b_1)+A(\alpha_k)+D\big)}\ \frac{E(\alpha_k,\alpha_n)}{E(\alpha_k,\alpha_{k+1})}\ E(\alpha_k,\alpha_1),
$$
respectively. Therefore, we kill this pole by choosing
$$
\frac{c_k}{c_{k+1}}=\frac{\theta\big(\eta(b_{k+1})+A(\alpha_k)+D\big)}{\theta\big(\eta(b_k)+A(\alpha_k)+D\big)}\ \frac{E(\alpha_{k-1},\alpha_k)}{E(\alpha_k,\alpha_{k+1})}=\frac{\theta\big(\eta(f_{k+1})+D\big)}{\theta\big(\eta(f_{k-1})+D\big)}\ \frac{E(\alpha_{k-1},\alpha_k)}{E(\alpha_k,\alpha_{k+1})}.
$$
Finally, the point $\alpha_{n-1}$ is the pole of $\psi_{n-1}/\psi_1$ and of $\psi_n/\psi_1$, with the residues
$$
\frac{\theta\big(\eta(b_{n-1})+A(\alpha_{n-1})+D\big)}{\theta\big(\eta(b_1)+A(\alpha_{n-1})+D\big)}\ E(\alpha_{n-1},\alpha_n)\ \frac{E(\alpha_{n-1},\alpha_1)}{E(\alpha_{n-1},\alpha_{n-2})} 
$$
and
$$
\frac{\theta\big(\eta(b_n)+A(\alpha_{n-1})+D\big)}{\theta\big(\eta(b_{n-1})+A(\alpha_{n-1})+D\big)}\ E(\alpha_{n-1},\alpha_1).
$$
Therefore, we kill this pole by choosing
$$
\frac{c_{n-1}}{c_n}=
\frac{\theta\big(\eta(b_n)+A(\alpha_{n-1})+D\big)}{\theta\big(\eta(b_{n-1})+A(\alpha_{n-1})+D\big)}\ \frac{E(\alpha_{n-2},\alpha_{n-1})}{E(\alpha_{n-1},\alpha_n)}=
\frac{\theta\big(\eta(f_n)+D\big)}{\theta\big(\eta(f_{n-2})+D\big)}\ \frac{E(\alpha_{n-2},\alpha_{n-1})}{E(\alpha_{n-1},\alpha_n)}.
$$
The final condition of vanishing at $\alpha_n$ gives:
$$
1+c_n \frac{\theta\big(\eta(b_n)+A(\alpha_n)+D\big)}{\theta\big(\eta(b_1)+A(\alpha_n)+D\big)}\ \frac{E(\alpha_n,\alpha_1)}{E(\alpha_n,\alpha_{n-1})}=0,
$$
or
$$
c_n=\frac{\theta\big(\eta(b_1)+A(\alpha_n)+D\big)}{\theta\big(\eta(b_n)+A(\alpha_n)+D\big)}\ \frac{E(\alpha_{n-1},\alpha_n)}{E(\alpha_n,\alpha_1)}=
\frac{\theta\big(\eta(f_1)+D\big)}{\theta\big(\eta(f_{n-1})+D\big)}\ \frac{E(\alpha_{n-1},\alpha_n)}{E(\alpha_n,\alpha_1)}.
$$
Dividing all coefficients by a common factor $\theta\big(\eta(f_n)+D\big)\theta\big(\eta(f_1)+D\big)/E(\alpha_n,\alpha_1)$, we find:
$$
c_k=\frac{E(\alpha_{k-1},\alpha_k)}{\theta\big(\eta(f_{k-1})+D\big)\theta\big(\eta(f_k)+D\big)}, \quad k=1,\ldots,n. \qquad\qed
$$
\medskip

\begin{rem}
	One can introduce Baker-Akhiezer ``functions'' on white vertices, which will lie in the kernel of the adjoint Dirac operator with the Fock weights. However, these dual objects are meromorphic 1-forms rather than functions on $\mathcal{R}$.
\end{rem} 

\section{Fay identity and spider move}
\label{sec:4_fay_and_spider}

A particular case of Theorem~\ref{Th black Dirac} for $n=3$ leads to a simple proof of the famous Fay formula. In this case, a different enumeration of the strips and of the faces leads to a more symmetric notation, see Fig.~\ref{Fig Fay}.

\begin{figure}[h]
    \centering
    \fontsize{10pt}{12pt}\selectfont
    \def\svgwidth{0.6\textwidth}
    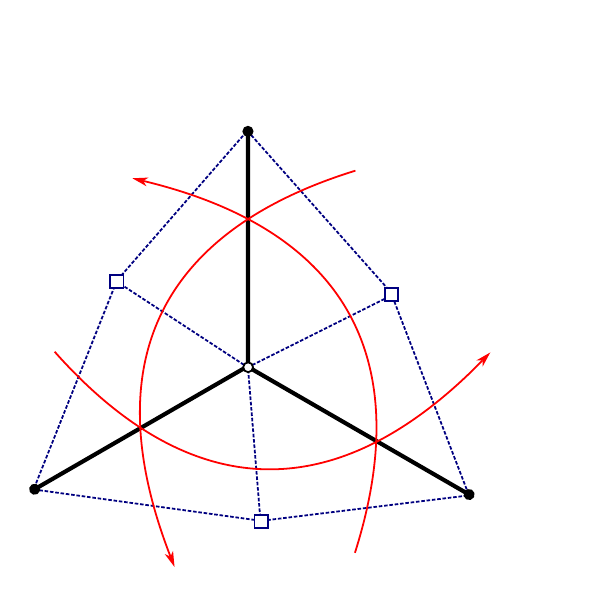
	\caption{Coeffficients of the Dirac operator on a triangle}
	\label{Fig Fay}
\end{figure}

Moreover, one gets a symmetric expression for the discrete Abel map on the faces of $G$, defined by
\begin{eqnarray*}
\eta(f_2)-\eta(f_1) & = & A(\alpha_1)-A(\alpha_2), \\ 
\eta(f_3)-\eta(f_3) & = & A(\alpha_2)-A(\alpha_3), \\
\eta(f_1)-\eta(f_3) & = & A(\alpha_3)-A(\alpha_1), 
\end{eqnarray*}
if one sets
$$
\eta(f_1)=A(\alpha_2)+A(\alpha_3), \quad \eta(f_2)=A(\alpha_3)+A(\alpha_1), \quad \eta(f_3)=A(\alpha_1)+A(\alpha_2).
$$
This corresponds also to
\begin{equation}\label{Fay dAbel b}
\eta(b_1)=A(\alpha_1), \quad \eta(b_2)=A(\alpha_2), \quad \eta(b_3)=A(\alpha_3).
\end{equation}
Thus, we get:

\begin{proposition}
The Baker-Akhiezer functions $\psi_1(P)$, $\psi_2(P)$, $\psi_3(P)$ are linearly dependent:
\begin{equation}\label{Fock Dirac triang}
K_{wb_1}(\alpha_2,\alpha_3)\psi_1+K_{wb_2}(\alpha_3,\alpha_1)\psi_2+K_{wb_3}(\alpha_1,\alpha_2)\psi_3=0,
\end{equation}
or, more detailed:
\begin{eqnarray} \label{Fay denom}
\frac{E(\alpha_2,\alpha_3)}{\theta\big(A(\alpha_1)+A(\alpha_2)+D\big)\theta\big(A(\alpha_1)+A(\alpha_3)+D\big)} \psi_1(P)\nonumber\\
+\frac{E(\alpha_3,\alpha_1)}{\theta\big(A(\alpha_2)+A(\alpha_3)+D\big)\theta\big(A(\alpha_1)+A(\alpha_2)+D\big)} \psi_2(P)\nonumber\\
+\frac{E(\alpha_1,\alpha_2)}{\theta\big(A(\alpha_1)+A(\alpha_3)+D\big)\theta\big(A(\alpha_2)+A(\alpha_3)+D\big)} \psi_3(P) & = & 0. \qquad
\end{eqnarray}
\end{proposition}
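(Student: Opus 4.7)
The plan is to obtain this proposition as a direct specialization of Theorem~\ref{Th black Dirac} to the case $n=3$, where the star of $w$ is the triangle with black vertices $b_1,b_2,b_3$. The theorem already gives a linear dependence of the three Baker-Akhiezer values $\psi_1(P), \psi_2(P), \psi_3(P)$ with explicit coefficients; all that remains is to check that with the symmetric choice of $\eta$ indicated before the proposition, those coefficients take the form displayed in \eqref{Fay denom}.

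First I would verify that the assignment
\[
\eta(f_1)=A(\alpha_2)+A(\alpha_3),\quad \eta(f_2)=A(\alpha_3)+A(\alpha_1),\quad \eta(f_3)=A(\alpha_1)+A(\alpha_2),
\]
together with $\eta(b_k)=A(\alpha_k)$ for $k=1,2,3$, is consistent with Definition~\ref{def discr Abel}. Concretely, one checks $\eta(w)-\eta(b_k)=A(\alpha_{k-1})+A(\alpha_{k+1})$ (indices mod $3$) at the shared white vertex $w$, and that the three face-to-face increments $\eta(f_{k+1})-\eta(f_k)=A(\alpha_{k+1})-A(\alpha_k)$ match the defining rule (the edge of $G^\diamond$ between $f_k$ and $f_{k+1}$ crosses the strands $\alpha_k$ and $\alpha_{k+1}$ with the appropriate orientations). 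Any global shift arising from a different choice of base vertex $v_0$ can be absorbed into the free parameter $D$, so the symmetric assignment is legitimate.

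Next I would substitute these values into the weight formula \eqref{black Dirac weights}. With the cyclic labeling of Fig.~\ref{Fig Fay}, the coefficient of $\psi_k(P)$ in the relation \eqref{black Dirac} around $w$ is $K_{wb_k}(\alpha_{k+1},\alpha_{k-1})=E(\alpha_{k+1},\alpha_{k-1})/\big[\theta(\eta(f_{k-1})+D)\,\theta(\eta(f_{k+1})+D)\big]$. For $k=1$ this gives
\[
\frac{E(\alpha_2,\alpha_3)}{\theta\big(A(\alpha_1)+A(\alpha_2)+D\big)\,\theta\big(A(\alpha_1)+A(\alpha_3)+D\big)},
\]
and the coefficients of $\psi_2$ and $\psi_3$ are obtained by cyclic permutation $1\to 2\to 3\to 1$. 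These are exactly the three coefficients in \eqref{Fay denom}, so Theorem~\ref{Th black Dirac} yields \eqref{Fock Dirac triang} in its detailed form.

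There is essentially no obstacle here beyond careful index bookkeeping: the only delicate point is to match the pair $(\alpha_{k-1},\alpha_k)$ that appears in the general $n$-gon formula with the symmetric pair $(\alpha_{k+1},\alpha_{k-1})$ natural for the triangle, and to confirm that the symmetric choice of $\eta(f_j)$ is indeed compatible with the definition of the discrete Abel map. Once these are in place, the proposition follows as a corollary with no further computation.
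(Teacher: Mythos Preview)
Your proposal is correct and matches the paper's approach exactly: the proposition is stated immediately after the symmetric assignment of $\eta(f_k)$ and $\eta(b_k)$ with the words ``Thus, we get,'' so it is indeed presented as the $n=3$ specialization of Theorem~\ref{Th black Dirac} with those values substituted into the weight formula. The only work is the index bookkeeping you describe, and your identification of the adjacent faces and strands for each edge $wb_k$ is correct.
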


Clearing denominators, this can be written as
\begin{eqnarray} \label{Fay num}
\theta\big(A(\alpha_2)+A(\alpha_3)+D\big)E(\alpha_2,\alpha_3) \psi_1(P)\nonumber\\
+\theta\big(A(\alpha_1)+A(\alpha_3)+D\big)E(\alpha_3,\alpha_1) \psi_2(P)\nonumber\\
+\theta\big(A(\alpha_1)+A(\alpha_2)+D\big)E(\alpha_1,\alpha_2) \psi_3(P) & = & 0. \qquad
\end{eqnarray}
\medskip

\begin{rem}
    Recall that, according to \eqref{black BA transition} and \eqref{Fay dAbel b},
    \begin{eqnarray}\label{Fay psi}
    \frac{\psi_2(P)}{\psi_1(P)} & = & \frac{\theta\big(A(P)+A(\alpha_2)+D\big)}{\theta\big(A(P)+A(\alpha_1)+D\big)}\ \frac{E(P,\alpha_2)}{E(P,\alpha_1)}, \nonumber\\
    \frac{\psi_3(P)}{\psi_1(P)} & = & \frac{\theta\big(A(P)+A(\alpha_3)+D\big)}{\theta\big(A(P)+A(\alpha_1)+D\big)}\ \frac{E(P,\alpha_3)}{E(P,\alpha_1)}.
    \end{eqnarray}
    Substituting \eqref{Fay psi} into \eqref{Fay num} and clearing denominators again, we find the \emph{Fay identity} in the following form:
    \begin{eqnarray} \label{Fay bilinear}
    \theta\big(A(\alpha_2)+A(\alpha_3)+D\big)\theta\big(A(P)+A(\alpha_1)+D\big)E(\alpha_2,\alpha_3)E(P,\alpha_1) \nonumber\\
    +\theta\big(A(\alpha_1)+A(\alpha_3)+D\big)\theta\big(A(P)+A(\alpha_2)+D\big)E(\alpha_3,\alpha_1)E(P,\alpha_2) \nonumber\\
    +\theta\big(A(\alpha_1)+A(\alpha_2)+D\big)\theta\big(A(P)+A(\alpha_3)+D\big)E(\alpha_1,\alpha_2)E(P,\alpha_3)  & = & 0, \qquad
    \end{eqnarray}
    where the four points $P,\alpha_1,\alpha_2,\alpha_3$ play symmetric roles.
\end{rem}

\begin{figure}[h]
\centering
\begin{subfigure}[p]{.49\textwidth}
	\centering
    \fontsize{10pt}{12pt}\selectfont
    \def\svgwidth{0.9\linewidth}
    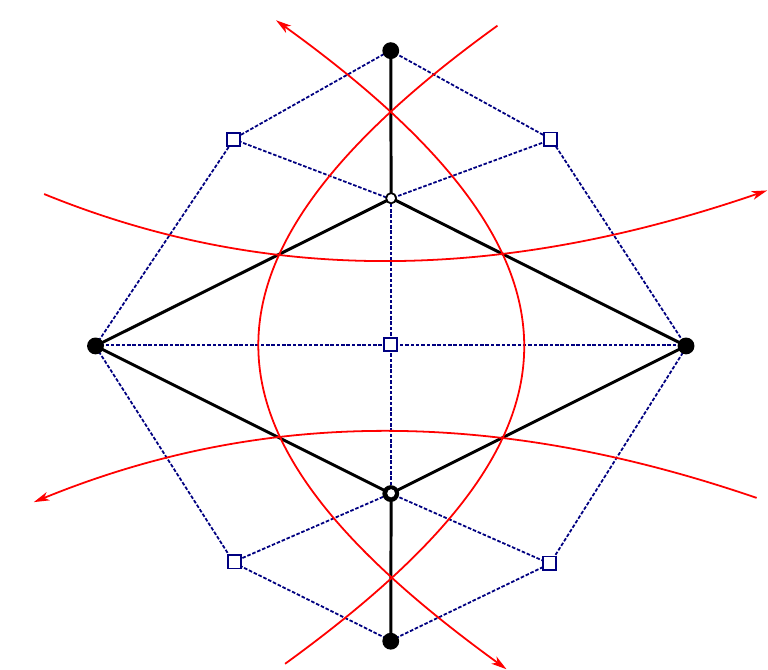
\end{subfigure}
\begin{subfigure}[p]{.49\textwidth}
	\centering
    \fontsize{10pt}{12pt}\selectfont
    \def\svgwidth{0.9\linewidth}
    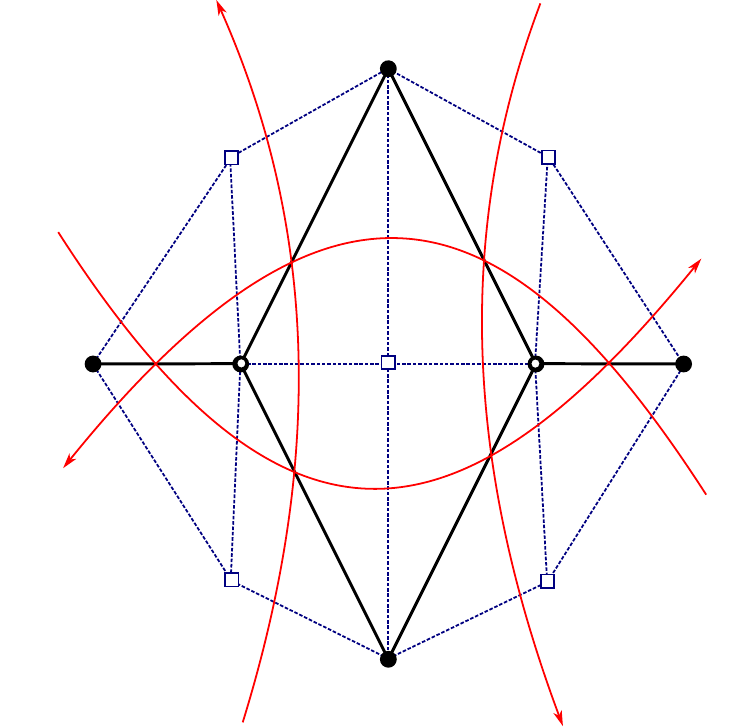
\end{subfigure}
\caption{Spider move}
\label{Fig spider}
\end{figure}

\medskip
\begin{proposition}\label{prop spider}
Equation \eqref{Fock Dirac triang} is consistent with the spider move.
\end{proposition}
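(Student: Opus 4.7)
The plan is to reduce the claim to the Fay identity \eqref{Fay bilinear} combined with the general fact that the Baker--Akhiezer functions $\psi_{b_i}$ depend only on the strand labels and the discrete Abel map values at the black vertices. Since the spider move alters only the interior of the quadrilateral region bounded by $b_1, b_2, b_3, b_4$ while preserving the four strand labels $\alpha, \beta, \gamma, \delta$ and the boundary Abel map values $\eta(b_1), \ldots, \eta(b_4)$ (the latter are read off from any path around the boundary via Definition~\ref{def discr Abel}), the four BA functions are literally the same meromorphic functions on $\mathcal R$ on both sides of the move.

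In each of the two configurations of Fig.~\ref{Fig spider}, the two interior white vertices have valence $3$, and Theorem~\ref{Th black Dirac} with $n=3$ provides a pair of triangle relations \eqref{Fock Dirac triang} among $\psi_{b_1}, \ldots, \psi_{b_4}$. The two relations of one configuration imply that, generically, $\psi_{b_1}, \ldots, \psi_{b_4}$ span a two-dimensional subspace of meromorphic functions on $\mathcal R$; hence every triple among them is linearly dependent, with dependency coefficients that are unique up to scalar (they are pinned down by the positions of the zeros and poles of the $\psi_{b_i}$). The content of the proposition is that the two triangle relations of the \emph{other} configuration, whose coefficients are prescribed by \eqref{black Dirac weights}, realize precisely these unique dependencies.

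Concretely I would (i) use the two triangle relations of the left panel to solve for $\psi_{b_3}, \psi_{b_4}$ in terms of $\psi_{b_1}, \psi_{b_2}$, substitute into the two triangle relations of the right panel, clear denominators, and (ii) recognize the resulting identity as an instance of Fay's identity \eqref{Fay bilinear} applied to the quadruple of points $\alpha, \beta, \gamma, \delta$ with a suitable shift of the theta-argument by $D$. Equivalently, one can apply Fay to each of the four triples of strand labels and combine the four resulting identities.

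The main obstacle is pure bookkeeping of the face Abel values: the interior faces $f, f'$ of the quadrilateral are different in the two configurations, and their labels $\eta(f), \eta(f')$ genuinely change under the spider move. The key algebraic simplification is that the \emph{sum} $\eta(f) + \eta(f')$ is invariant, as a direct consequence of Definition~\ref{def discr Abel} applied along the boundary of the quadrilateral, while only the difference $\eta(f) - \eta(f')$ is altered. Once this is accounted for in the coefficients \eqref{black Dirac weights}, the matching of the two pairs of relations is precisely Fay's identity, and no analytic input beyond what was already established in this section is required.
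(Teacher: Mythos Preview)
Your proposal is correct, and the first two paragraphs already contain the complete argument the paper uses: the discrete Abel map at $b_1,\ldots,b_4$ is unchanged by the spider move, so the BA functions $\psi_{b_i}(P)$ are literally the same; Theorem~\ref{Th black Dirac} with $n=3$ then says they satisfy the triangle relations of \emph{both} configurations. At that point you are done --- since the boundary value problem (prescribe $\psi$ at $b_1,b_2$, solve for $b_3,b_4$) has a two-dimensional solution space on each graph, and the BA family provides infinitely many linearly independent solutions common to both, every solution on one graph is a linear combination of BA solutions and hence a solution on the other. This is exactly the paper's three-bullet argument.

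Where you diverge is in not recognizing that this suffices. Your third and fourth paragraphs propose an additional explicit verification: eliminate $\psi_{b_3},\psi_{b_4}$ using one pair of relations, substitute into the other pair, and identify the result as Fay's identity \eqref{Fay bilinear} in the four points $\alpha,\beta,\gamma,\delta$. This is a valid alternative route (and your observation that $\eta(f)+\eta(f')$ is spider-invariant is the right bookkeeping ingredient for it), but it is unnecessary labor. The paper's dimension-counting argument sidesteps any coefficient matching: once you know both systems are satisfied by the same two-dimensional family, consistency is automatic and no Fay computation is required. The computational approach would buy you explicit formulas relating the face weights before and after the move, which the paper does not need for the statement as phrased.
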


The statement means that for the system of Dirac equations on the black graph, solutions do not change under local changes of the graph called `´spider moves'' and depicted on Fig.~\ref{Fig spider}. This is a local statement, and to demonstrate it, one can consider suitable boundary value problems for two systems consisting of equations \eqref{Fock Dirac triang} on the both graphs, and show that they admit the same solution. A suitable boundary problem, defining the solution uniquely, consists in prescribing the values at any two points connected by an edge of the black graph, say at $b_1$ and $b_2$. To show is that the solutions at $b_3$ and $b_4$ for both graphs coincide.

\begin{proof}
The argument is a combination of three observations.
\begin{itemize}
\item The discrete Abel maps at the black vertices of the both graphs coincide. This follows from eq. \eqref{discr Abel black}. 
\item The BA functions $\psi_b(P)$ provide us then with an infinite-dimensional family of (linearly independent) solutions of the system of equations \eqref{Fock Dirac triang} on both graphs. 
\item For each of the graphs, the space of solutions with prescribed initial data at $b_1$, $b_2$ is two-dimensional, therefore the solutions can be represented as linear combination of two linearly independent solutions from the infinite-dimensional family of BA functions. Therefore the solutions of the boundary value problem at $b_3$, $b_4$ coincide for both graphs. 
\end{itemize}
\end{proof}

This argument (solutions of suitable boundary value problems for two systems of linear equations coincide, if these systems admit a common infinite-dimensional family of linearly independent solutions) works in similar situations below, for instance, for the proof of the 3D consistency of the system of Dirac equations on the square lattice in the next section.

\section{Dirac operator on quad-graphs and multidimensional consistency} 
\label{sec:5_quadGraphs_and_consistency}

Consider the graph $G$ for which the black vertices form the lattice $\mathbb Z^2$, and the white lattices form the lattice $(\mathbb Z+\frac{1}{2})^2$, compare Fig.~\ref{Fig sq lattice}.


Setting the zero of the discrete Abel map at $b_0=(0,0)$, one easily computes, with the help of eq. \eqref{discr Abel black},
$$
\eta(b_{mn})=\sum_{i=0}^{m-1} A(\alpha_i^--\alpha_i^+)+\sum_{j=0}^{n-1} A(\beta_j^--\beta_j^+).
$$
The Baker-Akhiezer functions on $B=\mathbb Z^2$ are given by
\begin{equation}\label{BA Z^2}
\psi_{mn}(P)=\frac{\theta\big(A(P)+\eta(b_{mn})+D\big)}{\theta\big(A(P)+D\big)}
\prod_{i=0}^{m-1}\frac{E(P,\alpha_i^-)}{E(P,\alpha_i^+)}\prod_{j=0}^{n-1}\frac{R(P,\beta_j^-)}{E(P,\beta_j^+)}.
\end{equation}
On each elementary square of $B=\mathbb Z^2$, the BA functions $\psi_{mn}(P)$ satisfy the four-point Dirac equation (quad-equation) \eqref{black Dirac}. This system of quad-equations in the case $g=1$ was the subject of \cite{BS_linear}. In this case, if all $\alpha_i^{\pm}=\alpha^{\pm}$, $\beta_j^{\pm}=\beta^{\pm}$ are independent of $i$, resp. of $j$, and if $2A(\alpha^--\alpha^+)=2A(\beta^--\beta^+)\equiv 0$ in $J(\mathcal{R})$, then the coefficients of the quad-equations are 2-periodic in both directions. 

Remarkably, one can extend the construction of BA functions \eqref{BA Z^2} to a multi-dimensional lattice $\mathbb Z^d$. Each edge $e$ of $\mathbb Z^d$ is labelled by a two-point divisor $\alpha_e^--\alpha_e^+$, so that opposite edges of any elementary square have the same labels. Thus, now the strips of equally labeled edges are associated with hyperplanes orthogonally bisecting the edges. For any black point $b\in \mathbb  Z^d$ connected to $b_0=0$ by a path of edges $e_1,\ldots, e_p$, the discrete Abel map is given by
$$
\eta(b)=\sum_{k=1}^p A(\alpha_{e_k}^--\alpha_{e_k}^+).
$$
Obviously, it is independent of the path. The same is true for the BA functions, which are defined by
$$
\psi_b(P)=\frac{\theta\big(A(P)+\eta(b)+D\big)}{\theta\big(A(P)+D\big)}
\prod_{k=1}^{p}\frac{E(P,\alpha_{e_k}^-)}{E(P,\alpha_{e_k}^+)}.
$$
Quadruples of the BA functions at the vertices elementary squares of $\mathbb Z^d$ still satisfy four-point Dirac equations \eqref{black Dirac}. The system of Dirac equations is multi-dimensionally consistent in the sense of \cite{bobenko_discrete_2008}. As explained in this book, multi-dimensional consistency of quad-equations essentially follows from their 3D consistency.

\begin{proposition}
The system of four-point Dirac equations on $\mathbb Z^d$ is 3D consistent.
\end{proposition}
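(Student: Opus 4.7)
The plan is to reuse the argument template from the proof of Proposition~\ref{prop spider}: the BA functions form an infinite-dimensional family of solutions common to the Dirac equations on every elementary square of $\mathbb Z^d$, so 3D consistency will follow once I verify that the initial value problem on an elementary 3-cube is uniquely solved by four base values and that enough BA functions are available to span this 4-dimensional data space.

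First I would fix three coordinate directions $e_1,e_2,e_3$ of $\mathbb Z^d$ and consider the cube with vertices $b_\varepsilon = b + \sum_{i=1}^3 \varepsilon_i e_i$, $\varepsilon \in \{0,1\}^3$. Each of its six faces is an elementary square of $\mathbb Z^d$ and hence carries a four-point Dirac equation with Fock coefficients via Theorem~\ref{Th black Dirac}. The 3D consistency statement then reads: prescribe arbitrary values at the four base vertices $b, b_{e_1}, b_{e_2}, b_{e_3}$; use the three faces through $b$ to solve for $b_{e_1+e_2}, b_{e_1+e_3}, b_{e_2+e_3}$; then the three remaining top-face equations each express $b_{e_1+e_2+e_3}$ as a function of the initial data, and these three expressions must agree. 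The ``common solutions'' half of the argument is immediate: for every $P \in \mathcal R$ the assignment $b \mapsto \psi_b(P)$, defined by the natural extension of \eqref{BA Z^2} to $\mathbb Z^d$, simultaneously satisfies all six face equations by applying Theorem~\ref{Th black Dirac} face by face.

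To finish the proof I would choose four spectral points $P_1,\ldots,P_4 \in \mathcal R$ such that the $4 \times 4$ matrix $M = \bigl[\psi_v(P_k)\bigr]$ with $v \in \{b, b_{e_1}, b_{e_2}, b_{e_3}\}$ is non-singular. Any initial data can then be written uniquely in the form $\sum_{k=1}^4 c_k \psi(P_k)$ at the base vertices, and linearity of the face equations forces the values at the remaining four cube vertices --- and in particular the three separate determinations of $\psi_{b_{e_1+e_2+e_3}}$ --- to coincide with the corresponding linear combination of $\psi_{b_\varepsilon}(P_k)$, which is a single well-defined value.

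The delicate step I expect to spend the most effort on is the non-degeneracy of $M$ for some choice of the $P_k$. I would extract it from the explicit divisor structure visible in \eqref{BA Z^2}: crossing each edge of $\mathbb Z^d$ multiplies $\psi$ by a ratio of theta functions and prime forms whose pole and zero in the spectral variable $P$ are edge-dependent. Hence the four meromorphic functions $P \mapsto \psi_v(P)$ cannot satisfy any non-trivial global $\mathbb C$-linear relation on $\mathcal R$, and a generic choice of $P_1,\ldots,P_4$ therefore yields a non-singular $M$, closing the argument.
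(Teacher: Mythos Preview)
Your proposal is correct and follows essentially the same template as the paper: BA functions give an infinite family of common solutions, the initial value problem on the cube is four-dimensional, so four linearly independent BA solutions suffice to represent any solution uniquely, forcing consistency.

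The only noteworthy difference is in the choice of initial data. You take the standard ``tetrahedral'' data $\{b, b_{e_1}, b_{e_2}, b_{e_3}\}$ and verify that the three top-face determinations of $b_{e_1+e_2+e_3}$ coincide. The paper instead prescribes data on a zigzag $\{b_0, b_1, b_2, b_{13}\}$ and phrases consistency as the agreement, at the remaining common vertices, of two seven-vertex stepped surfaces in the cube related by a flip (one containing $b_{12}$, the other $b_3$). These two formulations are equivalent---each implies the other---and the underlying BA argument is identical in both. Your explicit discussion of why four generic spectral points give a non-singular $4\times 4$ matrix (via the distinct pole/zero structure of the transition factors) is in fact more detailed than the paper's, which simply asserts the existence of four linearly independent BA solutions.
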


The statement means that a suitable boundary value problem for the system consisting of four-point Dirac equations on the six faces of a 3D cube admits a unique solution. The labelling for this system is shown on Fig.~\ref{Fig 3D consistency}. A suitable boundary problem consists in prescribing four values, say at $b_0=(000)$, $b_1=(100)$, $b_2=(010)$ and $b_{13}=(101)$. To show is that the solutions at $b_{23}=(011)$ and $b_{123}=(111)$ for both graphs coincide.

\begin{figure}[h]
\centering
\begin{subfigure}[p]{.45\textwidth}
    \centering
    \fontsize{10pt}{12pt}\selectfont
    \def\svgwidth{0.9\linewidth}
    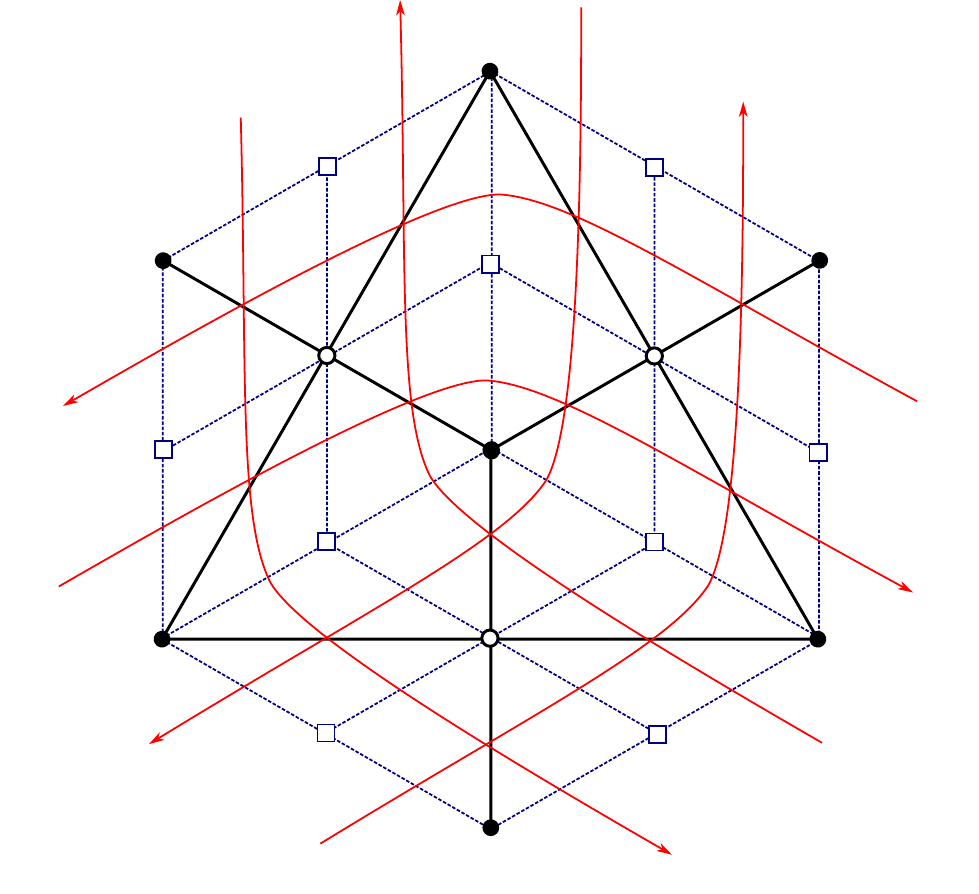
\end{subfigure}
\begin{subfigure}[p]{.45\textwidth}
    \centering
    \fontsize{10pt}{12pt}\selectfont
    \def\svgwidth{0.9\linewidth}
    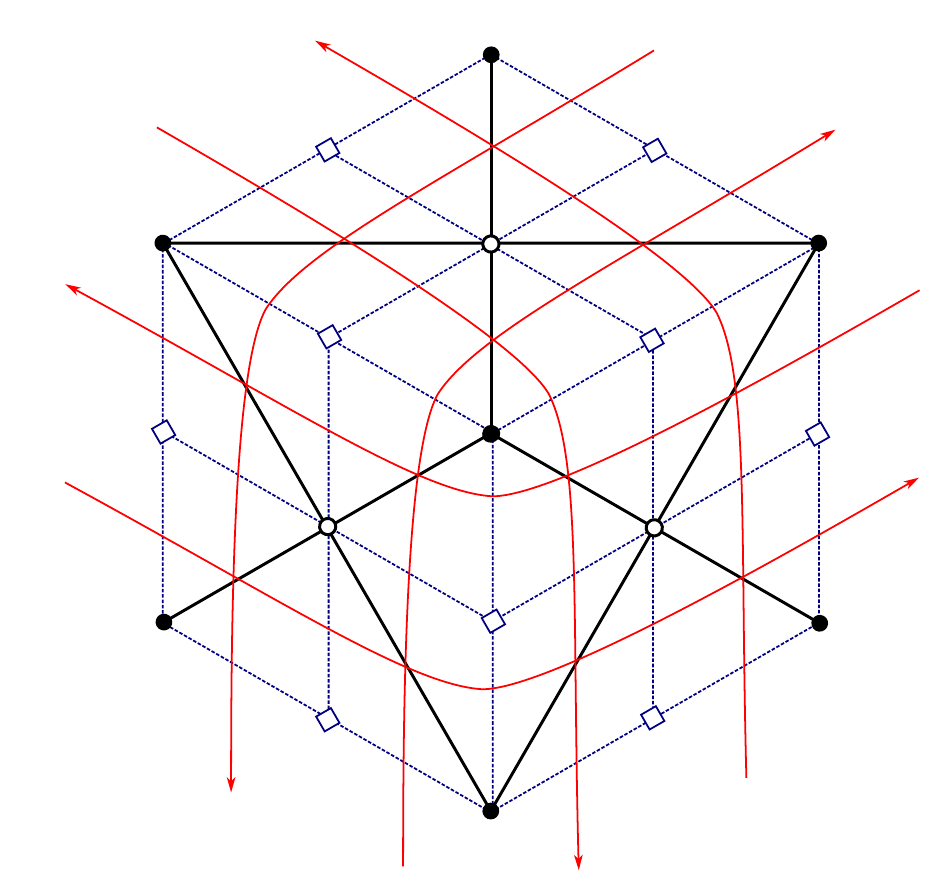
\end{subfigure}
\caption{3D consistency}
\label{Fig 3D consistency}
\end{figure}

\begin{proof} It goes along the same lines as the proof of Proposition \ref{prop spider}.
\begin{itemize}
\item The discrete Abel maps at the black vertices of the both graphs coincide, as pointed out above. 
\item The BA functions $\psi_b(P)$ provide us with an infinite-dimensional family of (linearly independent) solutions of the system of Dirac equations on all six faces of the cube. 
\item For each of the graphs, the space of solutions with prescribed initial data at $b_0$, $b_1$, $b_2$ and $b_{13}$ is four-dimensional, therefore the solutions can be represented as linear combination of four linearly independent solutions from the infinite-dimensional family of BA functions. Therefore the solutions of the boundary value problem at the common vertices $b_{12}=(110)$ and $b_{23}=(011)$ coincide for both graphs. 
\end{itemize}
\end{proof}

Since every quasicrystallic rhombic embedding can be realized as a combinatorial surface in $\mathbb Z^d$ whose faces are elementary squares of $\mathbb Z^d$, we can restrict the above construction to any such surface. Obviously, functions $\psi_b(P)$ at the vertices of every face of the surface satisfy the four-point Dirac equation. Strips on the surface are the intersections of combinatorial coordinate hyperplanes in $\mathbb Z^d$ with the surface.

Such surfaces (with their respective systems of strips and Dirac equations) can be transformed one into another by local flips like on Fig. \ref{Fig 3D consistency}. These flips do not affect physical properties of the corresponding dimer model.

\section{Kasteleyn property of Fock weights from M-curves}
\label{sec:6_M-curves}

We now turn to the study of the dimer model on $G$ with the Fock weights given by \eqref{Fock coeff}. Recall that there exists a gauge equivalent real-valued weight function whose signs define a Kasteleyn orientation, if for every face of $G$ condition \eqref{Kasteleyn cond} is satisfied. In the present context, including the labeling of strands by points of $\mathcal{R}$, this condition looks as follows:
\begin{equation}\label{Kasteleyn cond 1}
{\rm sign}\left(\frac{K_{w_1b_1}(\alpha_1,\beta_1)}{K_{w_2b_1}(\beta_2,\alpha_1)} \cdot\frac{K_{w_2b_2}(\alpha_2,\beta_2)}{K_{w_3b_2}(\beta_3,\alpha_2)}\cdots \frac{K_{w_nb_n}(\alpha_n,\beta_n)}{K_{w_1b_n}(\beta_1,\alpha_{n})}\right)=(-1)^{n+1}.
\end{equation}
for any face of the graph $G$, see Fig.~\ref{Fig face}.

\begin{figure}[h]
    \centering
    \fontsize{10pt}{12pt}\selectfont
    \def\svgwidth{0.5\textwidth}
    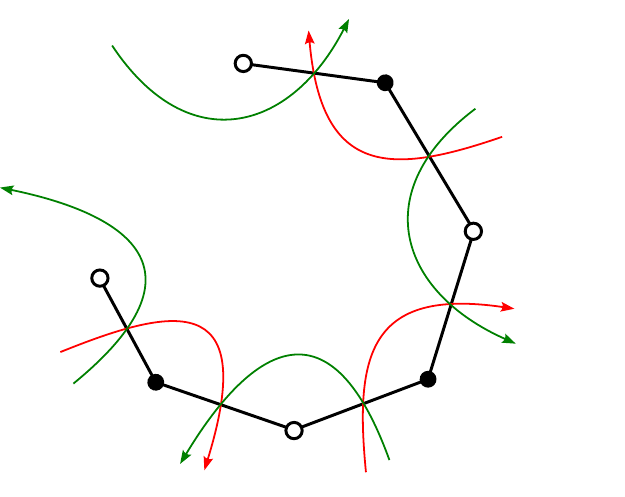
  \caption{A face of the graph $G$}
  \label{Fig face}
\end{figure}

We show how construct the Fock weights with the Kasteleyn property from $M$-curves.

\begin{definition}
A Riemann surface $\mathcal{R}$ of genus $g$ with an anti-holomorphic involution $\tau:\mathcal{R}\to \mathcal{R}$ is called an $M$-\emph{curve} if it has the maximal possible number $g+1$ of real ovals $X_0,\ldots,X_g$, see Fig.~\ref{Fig M-curve}.  
\end{definition}
\begin{figure}[h]
  \centering
  \fontsize{10pt}{12pt}\selectfont
  \def\svgwidth{0.5\textwidth}
\begingroup%
  \makeatletter%
  \providecommand\color[2][]{%
    \errmessage{(Inkscape) Color is used for the text in Inkscape, but the package 'color.sty' is not loaded}%
    \renewcommand\color[2][]{}%
  }%
  \providecommand\transparent[1]{%
    \errmessage{(Inkscape) Transparency is used (non-zero) for the text in Inkscape, but the package 'transparent.sty' is not loaded}%
    \renewcommand\transparent[1]{}%
  }%
  \providecommand\rotatebox[2]{#2}%
  \newcommand*\fsize{\dimexpr\f@size pt\relax}%
  \newcommand*\lineheight[1]{\fontsize{\fsize}{#1\fsize}\selectfont}%
  \ifx\svgwidth\undefined%
    \setlength{\unitlength}{344.28048744bp}%
    \ifx\svgscale\undefined%
      \relax%
    \else%
      \setlength{\unitlength}{\unitlength * \real{\svgscale}}%
    \fi%
  \else%
    \setlength{\unitlength}{\svgwidth}%
  \fi%
  \global\let\svgwidth\undefined%
  \global\let\svgscale\undefined%
  \makeatother%
  \begin{picture}(1,0.41784042)%
    \lineheight{1}%
    \setlength\tabcolsep{0pt}%
    \put(0,0){\includegraphics[width=\unitlength,page=1]{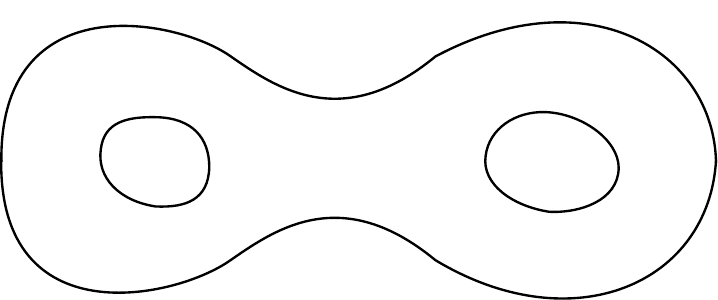}}%
    \put(0.90843802,0.39607704){\color[rgb]{0,0,0}\makebox(0,0)[lt]{\lineheight{1.25}\smash{\begin{tabular}[t]{l}$X_0$\end{tabular}}}}%
    \put(0.30707433,0.24634845){\color[rgb]{0,0,0}\makebox(0,0)[lt]{\lineheight{1.25}\smash{\begin{tabular}[t]{l}$X_1$\end{tabular}}}}%
    \put(0.87336741,0.24634845){\color[rgb]{0,0,0}\makebox(0,0)[lt]{\lineheight{1.25}\smash{\begin{tabular}[t]{l}$X_2$\end{tabular}}}}%
    \put(0.47011299,0.19367473){\color[rgb]{0,0,0}\makebox(0,0)[lt]{\lineheight{1.25}\smash{\begin{tabular}[t]{l}$\mathcal{R}_+$\end{tabular}}}}%
  \end{picture}%
\endgroup%

  \caption{Upper half $\mathcal{R}_+$ of an $M$-curve $\mathcal{R}$.}
  \label{Fig M-curve}
\end{figure}

Let $\mathcal{R}$ be an $M$-curve. Choose the canonical base of cycles: $a_1,\ldots,a_g,b_1,\ldots,b_g$ with the intersection numbers $a_j\circ a_k=b_j\circ b_k=0$, $a_j\circ b_k=\delta_{jk}$ as follows: $a_1=X_1, \ldots,a_g=X_g$, so that $\tau(a_j)=a_j$, and $\tau(b_j)=-b_j$. Choose a basis of holomorphic differentials $\omega_1,\ldots,\omega_g$ on $\mathcal{R}$ normalized by the condition $\int_{a_k}\omega_j=\delta_{jk}$, then $\overline{\tau^*\omega_j}=\omega_j$ on $a_k$, while the $b$-periods $B_{jk}=\int_{b_k}\omega_j$ are purely imaginary. As usual, the matrix $B=(B_{jk})_{j,k=1}^g$ is symmetric with a positive definite imaginary part (which is in this case just the matrix $-iB$).

\begin{lemma}\label{lemma thetas real}
\quad

\begin{itemize}
\item For any $x\in\mathbb R^g$, we have $\theta(x)>0$.
\item For any $x\in\mathbb R^g$ and for any half-integer characteristic $\Delta$, we have $\theta[\Delta](x)\in\mathbb R$.
\end{itemize}
\end{lemma}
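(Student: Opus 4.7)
The plan is to exploit the fact that on an M-curve with the chosen homology basis the period matrix is purely imaginary. Writing $B=iT$ with $T\in \mathbb{R}^{g\times g}$ symmetric, the condition that $\mathrm{Im}\,B=-iB$ is positive definite becomes the statement that $T$ is positive definite. I would open the proof by fixing this notation so that the series \eqref{theta} becomes
\begin{equation*}
\theta(x)=\sum_{m\in\mathbb Z^g}\exp\!\bigl(-\pi\langle Tm,m\rangle+2\pi i\langle x,m\rangle\bigr),
\end{equation*}
which is manifestly a Fourier series on the torus $\mathbb{R}^g/\mathbb{Z}^g$.

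For the first item, I would apply the Poisson summation formula. Taking $f(y)=e^{-\pi\langle T^{-1}y,y\rangle}$, a standard Gaussian Fourier transform gives $\widehat{f}(m)=\sqrt{\det T}\,e^{-\pi\langle Tm,m\rangle}$, so Poisson summation yields
\begin{equation*}
\theta(x)=\frac{1}{\sqrt{\det T}}\sum_{n\in\mathbb Z^g}\exp\!\bigl(-\pi\langle T^{-1}(x+n),x+n\rangle\bigr).
\end{equation*}
Because $T^{-1}$ is positive definite, each summand is a positive real number, so $\theta(x)>0$ for every $x\in\mathbb{R}^g$. (This is the classical Jacobi imaginary transformation viewpoint and is the cleanest route; a pairing argument $m\leftrightarrow -m$ in the original series shows reality but not positivity.)

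For the second item, I would start from the definition \eqref{theta with char} and substitute $B=iT$. Completing the square and using that $2\Delta_1\in \mathbb{Z}^g$ gives, for real $x$,
\begin{equation*}
\theta[\Delta](x)=\sum_{m\in\mathbb Z^g}\exp\!\bigl(-\pi\langle T(m+\Delta_1),m+\Delta_1\rangle\bigr)\exp\!\bigl(2\pi i\langle x+\Delta_2,m+\Delta_1\rangle\bigr),
\end{equation*}
and then I would exhibit the involution $m\mapsto m'=-m-2\Delta_1$ on the summation index (which is well-defined since $2\Delta_1$ is integer). Under this involution the quadratic real factor is preserved while the phase is complex-conjugated, because $x+\Delta_2\in \mathbb{R}^g$. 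Hence the sum equals its own complex conjugate and $\theta[\Delta](x)\in \mathbb{R}$.

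I do not expect any serious obstacle; the only delicate points are to verify the sign of $T$ that follows from the M-curve normalization and to track the shift $\Delta_1$ as a half-integer correctly, so that the pairing $m\leftrightarrow -m-2\Delta_1$ permutes $\mathbb{Z}^g$ without loss. Both are bookkeeping matters rather than genuine difficulties.
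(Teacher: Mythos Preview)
Your proposal is correct and essentially matches the paper's proof. For the first item, the paper invokes the multidimensional imaginary Jacobi transformation, which is exactly your Poisson summation computation (writing $B=iT$, the paper's identity $\theta(x|B)=(\det(-iB))^{-1/2}\exp(-\pi i\langle x,B^{-1}x\rangle)\theta(B^{-1}x|-B^{-1})$ unpacks to your sum of positive Gaussians). For the second item, the paper argues via the functional equation $\overline{\theta[\Delta](z)}=\theta[\Delta](\bar z)$, which is precisely the symmetry your pairing $m\mapsto -m-2\Delta_1$ implements term by term; the two presentations are equivalent, yours being slightly more explicit.
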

\noindent
\begin{proof}
The fact that $\theta(x)\in \mathbb R$ for $x\in\mathbb R^g$ follows from $\overline{\theta(z)}=\theta(-\bar{z})=\theta(\bar{z})$ (the latter equality due to $\theta(z)$ being an even function). To prove the stronger positivity statement, we use the multi-dimensional analog of the imaginary Jacobi transformation:
$$
\theta(x|B)=\frac{1}{\sqrt{\det(-iB)}}\exp\big(-\pi i \langle x,B^{-1}x\rangle\big)\theta(B^{-1}x|-B^{-1}).
$$
Here the first factor is positive, since the matrix $-iB$ is positive definite, the second factor is positive, since the matrix $-iB^{-1}$ is real, and all terms in the defining series of the third factor,
$$
\theta(B^{-1}x|-B^{-1})=\sum_{m\in\mathbb Z^g}\exp\Big(-\pi \langle iB^{-1}m,m\rangle+2\pi \langle iB^{-1}x,m\rangle\Big),
$$ 
 are positive for the same reason. To prove the second statement, we derive from \eqref{theta with char}, \eqref{theta monodromy} that $\overline{\theta[\Delta](z)}=\theta[\Delta](\bar{z})$. 
\end{proof}
%

Choose $\alpha_i,\beta_i\in X_0$ (all points associated to strips are on the real oval which does not belong to the chosen homology basis of $\mathcal{R}$). Choose the initial integration point of the Abel map $P_0\in X_0$, as well. Then all $A(\alpha_i), A(\beta_i)\in \mathbb R^g$. This implies that the discrete Abel map $\eta$ is $\mathbb R^g$-valued.  Further, choose $D\in\mathbb R^g$. Then all theta-functions $\theta(\eta(f)+D)$ in the denominators of the Fock weights are positive, according to Lemma~\ref{lemma thetas real}. The Kasteleyn condition \eqref{Kasteleyn cond 1} turns into

\begin{align}\label{Kasteleyn cond 2}
& {\rm sign}\left(\frac{E(\alpha_1,\beta_1)E(\alpha_2,\beta_2)\cdots E(\alpha_n,\beta_n)}{E(\beta_2,\alpha_1)E(\beta_3,\alpha_2)\cdots E(\beta_1,\alpha_n)}\right) \nonumber\\
& ={\rm sign}\left(\frac{\theta[\Delta]\Big(\int_{\alpha_1}^{\beta_1}\boldsymbol{\omega}\Big)
\theta[\Delta]\Big(\int_{\alpha_2}^{\beta_2}\boldsymbol{\omega}\Big)\cdots
\theta[\Delta]\Big(\int_{\alpha_n}^{\beta_n}\boldsymbol{\omega}\Big)}
{\theta[\Delta]\Big(\int_{\beta_2}^{\alpha_1}\boldsymbol{\omega}\Big)
\theta[\Delta]\Big(\int_{\beta_3}^{\alpha_2}\boldsymbol{\omega}\Big)\cdots 
\theta[\Delta]\Big(\int_{\beta_1}^{\alpha_{n}}\boldsymbol{\omega}\Big)}\right)  =  (-1)^{n+1}.
\end{align}
The expression in the parentheses on the l.h.s. is real by Lemma~\ref{lemma thetas real}.

\begin{theorem}
Consider an orientation preserving homeomorphism $X_0\to S^1$, and denote by $\hat\alpha_i,\hat\beta_i\in S^1$ the images of $\alpha_i,\beta_i$ under this map. Then the Kasteleyn condition is equivalent to
\begin{equation}\label{Kasteleyn cond 3}
{\rm sign}\left(\frac{(\hat\beta_1-\hat\alpha_1)(\hat\beta_2-\hat\alpha_2)\cdots(\hat\beta_n-\hat\alpha_n)}
{(\hat\alpha_1-\hat\beta_2)(\hat\alpha_2-\hat\beta_3)\cdots (\hat\alpha_{n}-\hat\beta_1)}\right)=(-1)^{n+1}
\end{equation}
for all faces $f$ of $G$.
\end{theorem}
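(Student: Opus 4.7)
The strategy is to regard both sides of the claimed equivalence as sign functions on the configuration space $X_0^{2n}_{\mathrm{dist}}$ of ordered distinct tuples $(\alpha_1,\beta_1,\ldots,\alpha_n,\beta_n)$ on the oval $X_0$, and to match them by a continuity plus wall-crossing argument. First I would record that the expression inside the sign on the left-hand side of \eqref{Kasteleyn cond 2} is a well-defined real number: each label $\alpha_i$ and $\beta_i$ occurs exactly once in the numerator and once in the denominator of the prime-form ratio, so all the half-spinor factors $h_\Delta$ in \eqref{E thru theta} cancel, and what remains is a ratio of real values $\theta[\Delta]\big(\int_{\cdot}^{\cdot}\boldsymbol\omega\big)$ (real by Lemma~\ref{lemma thetas real}, since all Abel-map differences lie in $\mathbb R^g$). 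Moreover it is nonzero whenever the $2n$ labels are distinct, because the prime form $E(P,Q)$ vanishes only on the diagonal.

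Next I would analyse wall-crossings. Consider a continuous deformation of the configuration along $X_0\cong S^1$ in which a single label, say $\alpha_1$, is transported past some other label $\gamma\in\{\alpha_j,\beta_k\}_{j,k\ne 1}\cup\{\beta_1\}$. During this move exactly one prime form appearing in the ratio changes sign: the factor $E(\alpha_1,\gamma)$ or $E(\gamma,\alpha_1)$ is the unique one that passes through zero and flips sign (all other theta values stay bounded away from zero by genericity). Therefore the sign of the bracket on the left of \eqref{Kasteleyn cond 2} flips. The same wall-crossing behaviour holds trivially for the right-hand expression \eqref{Kasteleyn cond 3}: moving $\hat\alpha_1$ past $\hat\gamma$ on $S^1$ flips the sign of exactly one factor $(\hat\alpha_1-\hat\gamma)$ or $(\hat\gamma-\hat\alpha_1)$. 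Hence the two signs differ by a constant over each connected chamber of $X_0^{2n}_{\mathrm{dist}}$, and since the homeomorphism $X_0\to S^1$ sends chambers to chambers and walls to walls, the two global sign functions differ only by an overall constant $\epsilon\in\{\pm1\}$ independent of the configuration.

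The remaining task is to pin down $\epsilon=+1$ by checking one reference configuration. The cleanest way is the genus-zero limit: under a suitable degeneration of $\mathcal{R}$ (pinching the $b$-cycles so the $B$-matrix diverges and $\theta\to 1$), the prime form satisfies $E(\alpha,\beta)=(\hat\alpha-\hat\beta)/\sqrt{d\hat\alpha\,d\hat\beta}$ in a real coordinate $\hat{\,\cdot\,}$ on $\mathbb{RP}^1\cong S^1$; the spinor weights cancel exactly as above, so the left side of \eqref{Kasteleyn cond 2} literally becomes the expression inside the sign on the right side of \eqref{Kasteleyn cond 3}. Continuity of both signs along a path of M-curves (avoiding the discriminant locus where some $E(\alpha_i,\beta_j)$ or $\theta(\eta(f)+D)$ vanishes, which can be arranged for the chosen labels) then yields $\epsilon=+1$. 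Combined with the original Kasteleyn condition \eqref{Kasteleyn cond 2}, this gives \eqref{Kasteleyn cond 3}.

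The main obstacle I anticipate is the wall-crossing step: one must check that when a single label crosses another, exactly one prime form among the $2n$ factors changes sign, and no spurious sign jumps are hidden in the theta-function part of $E$ via the choice of odd characteristic $\Delta$ or the monodromy of $h_\Delta$ around $X_0$. This is where the M-curve hypothesis is essential, since the chosen canonical basis $a_j=X_j$ forces the spinor structure to be compatible with the real involution $\tau$, so that $h_\Delta$ has a consistent real sign on $X_0$ away from its fixed zero divisor. Once this compatibility is recorded, the cancellation of all $h_\Delta$ factors in the ratio makes the wall-crossing count unambiguous and the rest of the argument reduces to the combinatorics already encoded by \eqref{Kasteleyn cond 3}.
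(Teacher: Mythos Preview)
Your overall strategy---observe that both signs depend only on the cyclic order of the labels on $X_0$, then match them at a reference configuration---is the same as the paper's. One correction: your wall-crossing claim that ``exactly one prime form changes sign'' when $\alpha_1$ passes some other label $\gamma$ is false if $\gamma$ is not $\beta_1$ or $\beta_2$, since $\alpha_1$ occurs only in $E(\alpha_1,\beta_1)$ and $E(\beta_2,\alpha_1)$. The correct statement is that \emph{zero or one} factor changes sign, and the same count holds for the rational expression \eqref{Kasteleyn cond 3}; this still yields that the two signs differ by a global constant.

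Where you genuinely diverge from the paper is in pinning down that constant. You propose a genus-zero degeneration of $\mathcal R$, which requires controlling the sign along a one-parameter family of M-curves. The paper does something much cheaper and entirely intrinsic to the given curve: having established invariance under cyclic-order-preserving motions, it slides all $2n$ points into a small arc of $X_0$ covered by a single real coordinate chart $z$ (with $\tau^*z=\bar z$). In that chart $\theta[\Delta]\big(\int_\alpha^\beta\boldsymbol\omega\big)$ has the sign of $z(\beta)-z(\alpha)$ up to a common nonzero factor that cancels in the ratio, so the left side of \eqref{Kasteleyn cond 2} literally equals the rational multi-ratio in the local coordinates $\tilde\alpha_i=z(\alpha_i),\tilde\beta_i=z(\beta_i)$. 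Since that multi-ratio is M\"obius invariant and its sign depends only on cyclic order, any orientation-preserving identification $X_0\to S^1$ gives the same sign, yielding \eqref{Kasteleyn cond 3}. This local-chart normalization avoids the moduli-space continuity argument altogether and also makes your worries about the spinor $h_\Delta$ disappear, since the $h_\Delta$'s never enter once one works directly with the theta values.
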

\begin{proof}
The expression on the l.h.s. of \eqref{Kasteleyn cond 2} does not change sign if one moves $\alpha_i,\beta_i$ on $X_0$ preserving their cyclic order. Indeed, by such transformations $\theta[\Delta]\Big(\int_{\alpha_i}^{\beta_i}\boldsymbol{\omega}\Big)$ never vanish. One can move all the points (preserving their cyclic order) in a small neighborhood of some $P\in X_0$, covered by some coordinate chart $z$ of $\mathcal{R}$ with $\tau^*z=\bar z$. Setting $\tilde\alpha_i=z(\alpha_i), \tilde\beta_i=z(\beta_i)\in \mathbb R$, we obtain for the expression in question a formula like \eqref{Kasteleyn cond 3} with hats replaces by tildes. But this multi-ratio is, first,  M\"obius invariant and second, does not change sign under moving the points on $S^1$ while preserving their cyclic order.
\end{proof}

\subsection{Example: regular hexagonal lattice}

See Fig.~\ref{Fig hex lattice} for notations. Considering a typical hexagon, we see that the Kasteleyn condition reads:
\begin{equation}
{\rm sign} \frac{(\hat\gamma_1-\hat\beta_1)(\hat\alpha_2-\hat\gamma_2)(\hat\beta_2-\hat\alpha_1)}{(\hat\beta_1-\hat\alpha_2)(\hat\gamma_2-\hat\beta_2)(\hat\alpha_1-\hat\gamma_1)}=1.
\end{equation}

\begin{figure}[h]
\centering
\begin{subfigure}[p]{.7\textwidth}
  \centering
    \fontsize{10pt}{12pt}\selectfont
    \def\svgwidth{0.9\linewidth}
    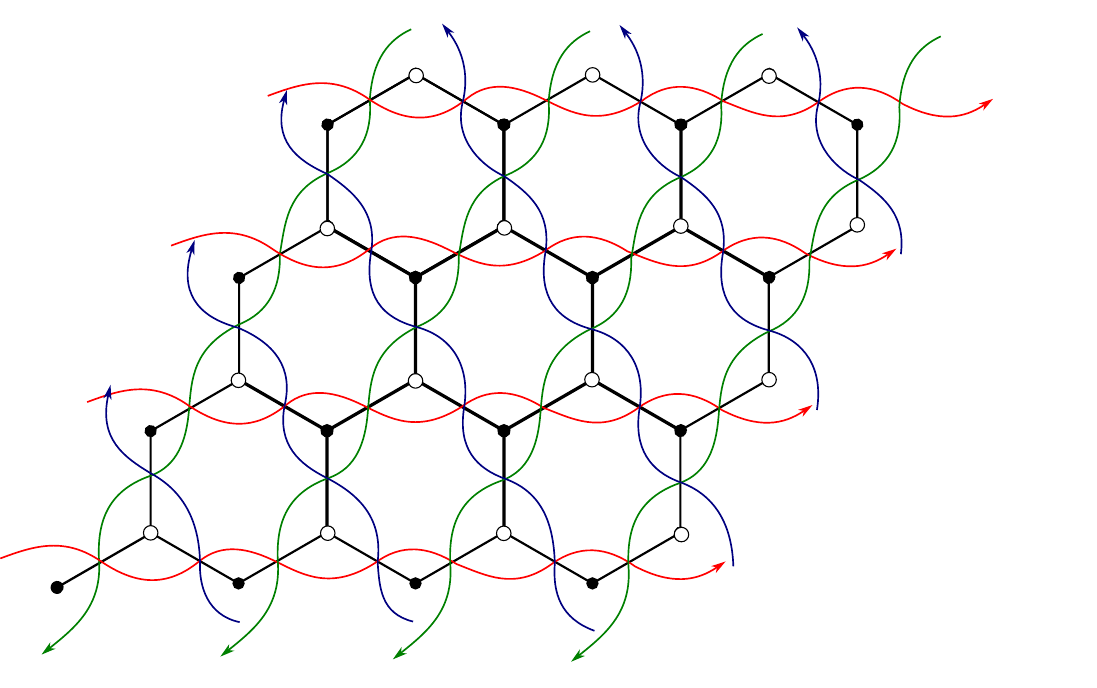
\end{subfigure}
\begin{subfigure}[p]{.28\textwidth}
  \centering
    \fontsize{10pt}{12pt}\selectfont
    \def\svgwidth{0.9\linewidth}
\begingroup%
  \makeatletter%
  \providecommand\color[2][]{%
    \errmessage{(Inkscape) Color is used for the text in Inkscape, but the package 'color.sty' is not loaded}%
    \renewcommand\color[2][]{}%
  }%
  \providecommand\transparent[1]{%
    \errmessage{(Inkscape) Transparency is used (non-zero) for the text in Inkscape, but the package 'transparent.sty' is not loaded}%
    \renewcommand\transparent[1]{}%
  }%
  \providecommand\rotatebox[2]{#2}%
  \newcommand*\fsize{\dimexpr\f@size pt\relax}%
  \newcommand*\lineheight[1]{\fontsize{\fsize}{#1\fsize}\selectfont}%
  \ifx\svgwidth\undefined%
    \setlength{\unitlength}{298.53233119bp}%
    \ifx\svgscale\undefined%
      \relax%
    \else%
      \setlength{\unitlength}{\unitlength * \real{\svgscale}}%
    \fi%
  \else%
    \setlength{\unitlength}{\svgwidth}%
  \fi%
  \global\let\svgwidth\undefined%
  \global\let\svgscale\undefined%
  \makeatother%
  \begin{picture}(1,0.98848374)%
    \lineheight{1}%
    \setlength\tabcolsep{0pt}%
    \put(0,0){\includegraphics[width=\unitlength,page=1]{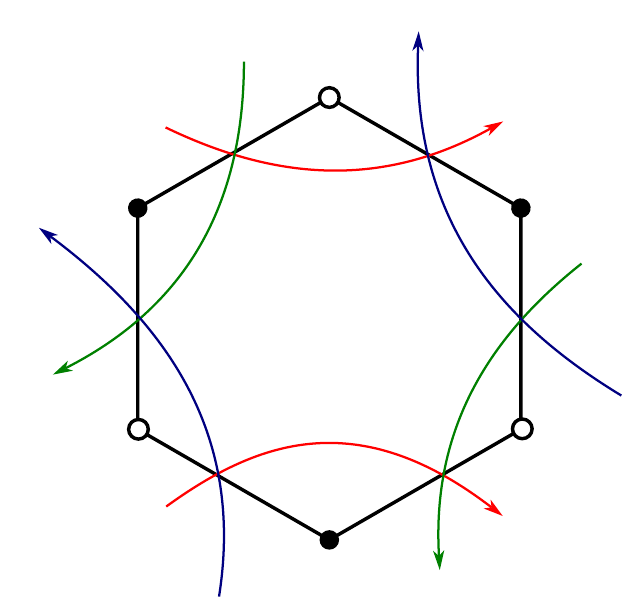}}%
    \put(-0.00046807,0.65043186){\color[rgb]{0,0,0.50196078}\makebox(0,0)[lt]{\lineheight{1.25}\smash{\begin{tabular}[t]{l}$\gamma_1$\end{tabular}}}}%
    \put(0.63933933,0.96347071){\color[rgb]{0,0,0.50196078}\makebox(0,0)[lt]{\lineheight{1.25}\smash{\begin{tabular}[t]{l}$\gamma_2$\end{tabular}}}}%
    \put(0.82469128,0.12046251){\color[rgb]{1,0,0}\makebox(0,0)[lt]{\lineheight{1.25}\smash{\begin{tabular}[t]{l}$\beta_1$\end{tabular}}}}%
    \put(0.83155622,0.77674581){\color[rgb]{1,0,0}\makebox(0,0)[lt]{\lineheight{1.25}\smash{\begin{tabular}[t]{l}$\beta_2$\end{tabular}}}}%
    \put(0.03111038,0.32778213){\color[rgb]{0,0.50196078,0}\makebox(0,0)[lt]{\lineheight{1.25}\smash{\begin{tabular}[t]{l}$\alpha_1$\end{tabular}}}}%
    \put(0.68464776,0.00787842){\color[rgb]{0,0.50196078,0}\makebox(0,0)[lt]{\lineheight{1.25}\smash{\begin{tabular}[t]{l}$\alpha_2$\end{tabular}}}}%
  \end{picture}%
\endgroup%

\end{subfigure}
\caption{Regular hexagonal lattice on a $3\times 3$ fundamental domain (left) and the strips for a typical hexagonal face (right).}
\label{Fig hex lattice}
\end{figure}

This condition is satisfied if the points $\hat\alpha_i,\hat\beta_i,\hat\gamma_i$ lie on $S^1$ in three clusters, so that, with respect to the cyclic order on $S^1$,
\begin{equation}\label{clusters hex}
\hat\alpha_i<\hat\beta_j<\hat\gamma_k<\hat\alpha_i \quad {\rm for\;\;all}\quad i,j,k,
\end{equation}
see Fig.~\ref{Fig hex clusters}.

\begin{figure}[h]
  \centering
  \fontsize{10pt}{12pt}\selectfont
  \def\svgwidth{0.3\textwidth}
\begingroup%
  \makeatletter%
  \providecommand\color[2][]{%
    \errmessage{(Inkscape) Color is used for the text in Inkscape, but the package 'color.sty' is not loaded}%
    \renewcommand\color[2][]{}%
  }%
  \providecommand\transparent[1]{%
    \errmessage{(Inkscape) Transparency is used (non-zero) for the text in Inkscape, but the package 'transparent.sty' is not loaded}%
    \renewcommand\transparent[1]{}%
  }%
  \providecommand\rotatebox[2]{#2}%
  \newcommand*\fsize{\dimexpr\f@size pt\relax}%
  \newcommand*\lineheight[1]{\fontsize{\fsize}{#1\fsize}\selectfont}%
  \ifx\svgwidth\undefined%
    \setlength{\unitlength}{310.10995671bp}%
    \ifx\svgscale\undefined%
      \relax%
    \else%
      \setlength{\unitlength}{\unitlength * \real{\svgscale}}%
    \fi%
  \else%
    \setlength{\unitlength}{\svgwidth}%
  \fi%
  \global\let\svgwidth\undefined%
  \global\let\svgscale\undefined%
  \makeatother%
  \begin{picture}(1,0.76822248)%
    \lineheight{1}%
    \setlength\tabcolsep{0pt}%
    \put(0,0){\includegraphics[width=\unitlength,page=1]{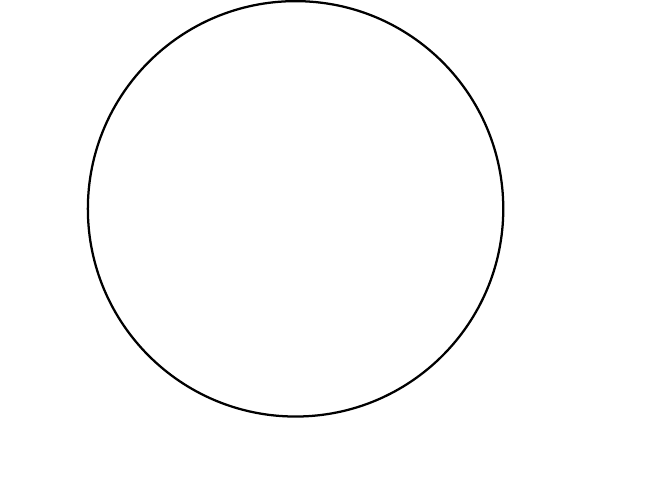}}%
    \put(0.13039287,0.17394255){\color[rgb]{0,0,0}\makebox(0,0)[lt]{\lineheight{1.25}\smash{\begin{tabular}[t]{l}$X_0$\end{tabular}}}}%
    \put(0,0){\includegraphics[width=\unitlength,page=2]{Figures/AnglesConditionOnCircleHex.pdf}}%
    \put(0.85750893,0.64591364){\color[rgb]{1,0,0}\makebox(0,0)[lt]{\lineheight{1.25}\smash{\begin{tabular}[t]{l}$\beta_j$\end{tabular}}}}%
    \put(0.65283217,0.00758429){\color[rgb]{0,0.50196078,0}\makebox(0,0)[lt]{\lineheight{1.25}\smash{\begin{tabular}[t]{l}$\alpha_i$\end{tabular}}}}%
    \put(-0.0004506,0.65712879){\color[rgb]{0,0,0.50196078}\makebox(0,0)[lt]{\lineheight{1.25}\smash{\begin{tabular}[t]{l}$\gamma_k$\end{tabular}}}}%
  \end{picture}%
\endgroup%

  \caption{Clusters of points on $X_0\simeq S^1$ corresponding to the strips of the regular hexagonal lattice, ensuring the Kasteleyn property}
  \label{Fig hex clusters}
\end{figure}

\subsection{Example: regular square lattice}

See Fig.~\ref{Fig sq lattice} for notations. There are two types of elementary squares. For the first one, the Kasteleyn condition reads:
\begin{equation}
{\rm sign} \frac{(\hat\alpha_i^+-\hat\beta_k^-)(\hat\alpha_j^--\hat\beta_k^+)}{(\hat\beta_k^--\hat\alpha_j^-)(\hat\beta_k^+-\hat\alpha_i^+)}=-1.
\end{equation}
For the second one, the Kasteleyn condition reads:
\begin{equation}
{\rm sign} \frac{(\hat\beta_j^--\hat\alpha_i^+)(\hat\beta_k^+-\hat\alpha_i^-)}{(\hat\alpha_i^+-\hat\beta_k^+)(\hat\alpha_i^--\hat\beta_j^-)}=-1.
\end{equation}

\begin{figure}[h]
\centering
\begin{subfigure}[p]{.7\textwidth}
  \centering
    \fontsize{10pt}{12pt}\selectfont
    \def\svgwidth{0.9\linewidth}
    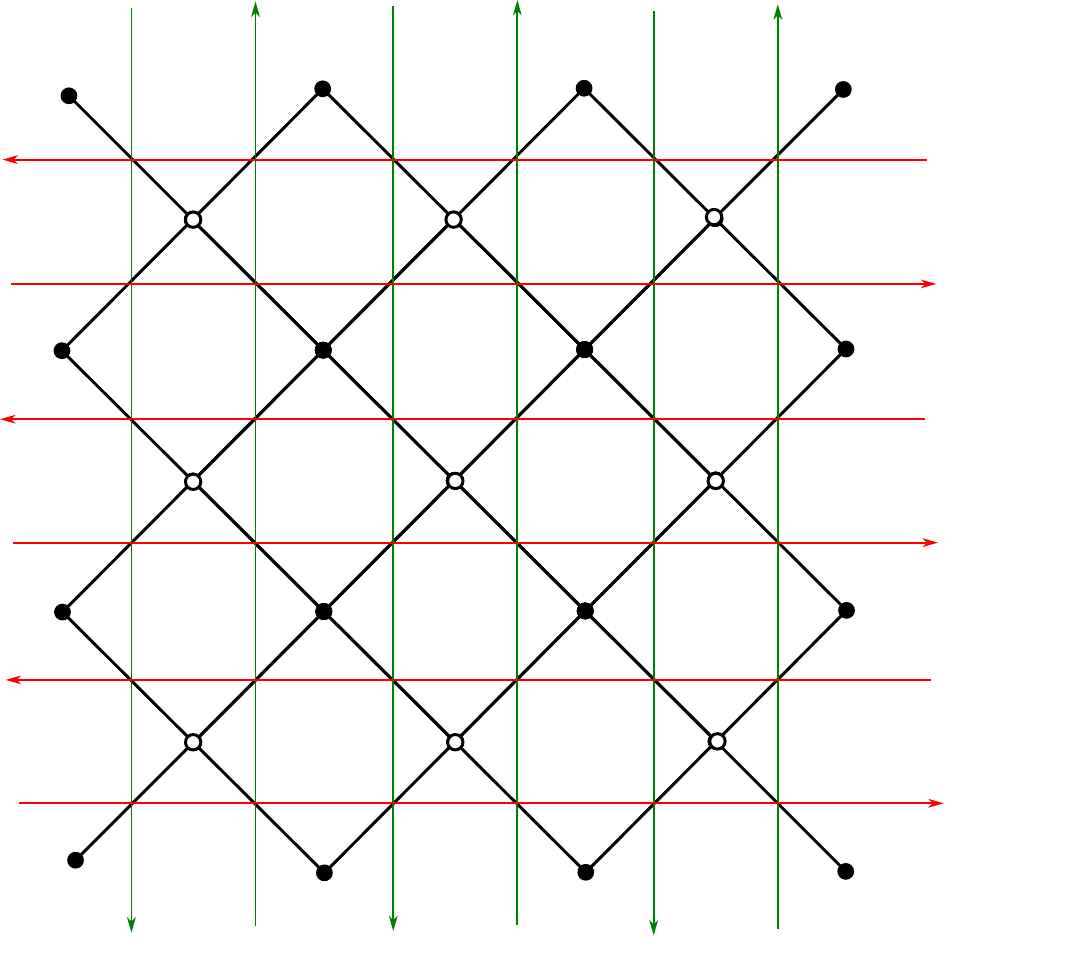
\end{subfigure}
\begin{subfigure}[p]{.28\textwidth}
  \begin{subfigure}[t]{\textwidth}
    \centering
      \fontsize{10pt}{12pt}\selectfont
      \def\svgwidth{0.9\linewidth}
\begingroup%
  \makeatletter%
  \providecommand\color[2][]{%
    \errmessage{(Inkscape) Color is used for the text in Inkscape, but the package 'color.sty' is not loaded}%
    \renewcommand\color[2][]{}%
  }%
  \providecommand\transparent[1]{%
    \errmessage{(Inkscape) Transparency is used (non-zero) for the text in Inkscape, but the package 'transparent.sty' is not loaded}%
    \renewcommand\transparent[1]{}%
  }%
  \providecommand\rotatebox[2]{#2}%
  \newcommand*\fsize{\dimexpr\f@size pt\relax}%
  \newcommand*\lineheight[1]{\fontsize{\fsize}{#1\fsize}\selectfont}%
  \ifx\svgwidth\undefined%
    \setlength{\unitlength}{322.25247532bp}%
    \ifx\svgscale\undefined%
      \relax%
    \else%
      \setlength{\unitlength}{\unitlength * \real{\svgscale}}%
    \fi%
  \else%
    \setlength{\unitlength}{\svgwidth}%
  \fi%
  \global\let\svgwidth\undefined%
  \global\let\svgscale\undefined%
  \makeatother%
  \begin{picture}(1,0.85773971)%
    \lineheight{1}%
    \setlength\tabcolsep{0pt}%
    \put(0,0){\includegraphics[width=\unitlength,page=1]{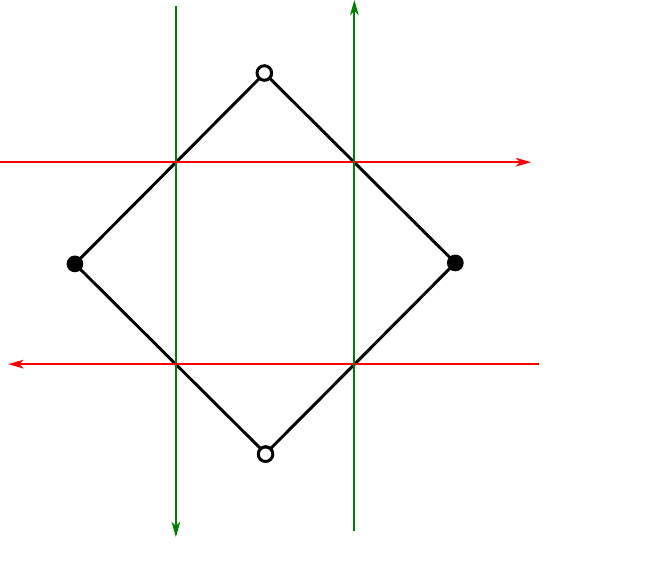}}%
    \put(0.75295355,0.6568396){\color[rgb]{1,0,0}\makebox(0,0)[lt]{\lineheight{1.25}\smash{\begin{tabular}[t]{l}$\beta_{j+1}^-$\end{tabular}}}}%
    \put(0.75676921,0.35412353){\color[rgb]{1,0,0}\makebox(0,0)[lt]{\lineheight{1.25}\smash{\begin{tabular}[t]{l}$\beta_j^+$\end{tabular}}}}%
    \put(0.23273982,0.0068903){\color[rgb]{0,0.50196078,0}\makebox(0,0)[lt]{\lineheight{1.25}\smash{\begin{tabular}[t]{l}$\alpha_i^-$\end{tabular}}}}%
    \put(0.51256123,0.00561841){\color[rgb]{0,0.50196078,0}\makebox(0,0)[lt]{\lineheight{1.25}\smash{\begin{tabular}[t]{l}$\alpha_i^+$\end{tabular}}}}%
  \end{picture}%
\endgroup%

  \end{subfigure}
  \begin{subfigure}[b]{\textwidth}
    \centering
      \fontsize{10pt}{12pt}\selectfont
      \def\svgwidth{0.9\linewidth}
\begingroup%
  \makeatletter%
  \providecommand\color[2][]{%
    \errmessage{(Inkscape) Color is used for the text in Inkscape, but the package 'color.sty' is not loaded}%
    \renewcommand\color[2][]{}%
  }%
  \providecommand\transparent[1]{%
    \errmessage{(Inkscape) Transparency is used (non-zero) for the text in Inkscape, but the package 'transparent.sty' is not loaded}%
    \renewcommand\transparent[1]{}%
  }%
  \providecommand\rotatebox[2]{#2}%
  \newcommand*\fsize{\dimexpr\f@size pt\relax}%
  \newcommand*\lineheight[1]{\fontsize{\fsize}{#1\fsize}\selectfont}%
  \ifx\svgwidth\undefined%
    \setlength{\unitlength}{301.01809837bp}%
    \ifx\svgscale\undefined%
      \relax%
    \else%
      \setlength{\unitlength}{\unitlength * \real{\svgscale}}%
    \fi%
  \else%
    \setlength{\unitlength}{\svgwidth}%
  \fi%
  \global\let\svgwidth\undefined%
  \global\let\svgscale\undefined%
  \makeatother%
  \begin{picture}(1,0.91824627)%
    \lineheight{1}%
    \setlength\tabcolsep{0pt}%
    \put(0,0){\includegraphics[width=\unitlength,page=1]{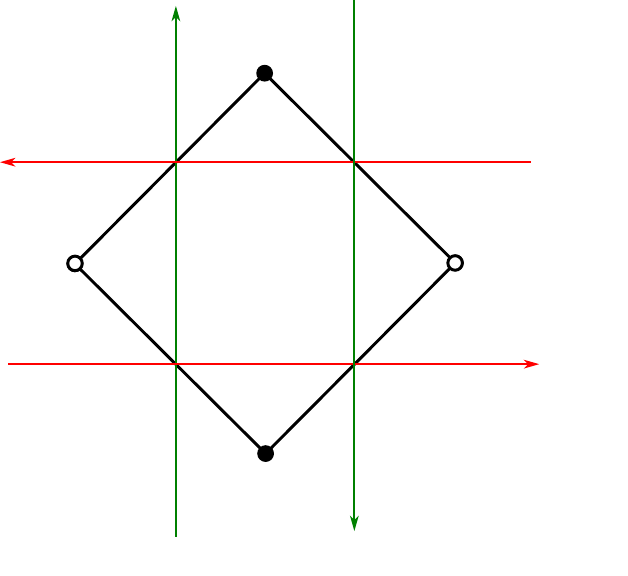}}%
    \put(0.8060683,0.70317428){\color[rgb]{1,0,0}\makebox(0,0)[lt]{\lineheight{1.25}\smash{\begin{tabular}[t]{l}$\beta_j^+$\end{tabular}}}}%
    \put(0.81015312,0.37910406){\color[rgb]{1,0,0}\makebox(0,0)[lt]{\lineheight{1.25}\smash{\begin{tabular}[t]{l}$\beta_j^-$\end{tabular}}}}%
    \put(0.24915772,0.00737635){\color[rgb]{0,0.50196078,0}\makebox(0,0)[lt]{\lineheight{1.25}\smash{\begin{tabular}[t]{l}$\alpha_i^+$\end{tabular}}}}%
    \put(0.54871825,0.00601474){\color[rgb]{0,0.50196078,0}\makebox(0,0)[lt]{\lineheight{1.25}\smash{\begin{tabular}[t]{l}$\alpha_{i+1}^-$\end{tabular}}}}%
  \end{picture}%
\endgroup%

\end{subfigure}
\end{subfigure}
\caption{Regular square lattice on a $3 \times 3$ fundamental domain (left) and the strips for two types of square faces (right).}
\label{Fig sq lattice}
\end{figure}

Both conditions are satisfied if the points $\hat\alpha_i^\pm,\hat\beta_i^\pm$ lie on $S^1$ in four clusters, so that, with respect to the cyclic order on $S^1$,
\begin{equation}\label{clusters sq}
\hat\alpha_i^+<\hat\beta_j^+<\hat\alpha_k^-<\hat\beta_l^-<\hat\alpha_i^+ \quad {\rm for\;\;all}\quad i,j,k,l,
\end{equation}
see Fig.~\ref{Fig sq clusters}.

\begin{figure}[h]
  \centering
  \fontsize{10pt}{12pt}\selectfont
  \def\svgwidth{0.3\textwidth}
\begingroup%
  \makeatletter%
  \providecommand\color[2][]{%
    \errmessage{(Inkscape) Color is used for the text in Inkscape, but the package 'color.sty' is not loaded}%
    \renewcommand\color[2][]{}%
  }%
  \providecommand\transparent[1]{%
    \errmessage{(Inkscape) Transparency is used (non-zero) for the text in Inkscape, but the package 'transparent.sty' is not loaded}%
    \renewcommand\transparent[1]{}%
  }%
  \providecommand\rotatebox[2]{#2}%
  \newcommand*\fsize{\dimexpr\f@size pt\relax}%
  \newcommand*\lineheight[1]{\fontsize{\fsize}{#1\fsize}\selectfont}%
  \ifx\svgwidth\undefined%
    \setlength{\unitlength}{522.44217592bp}%
    \ifx\svgscale\undefined%
      \relax%
    \else%
      \setlength{\unitlength}{\unitlength * \real{\svgscale}}%
    \fi%
  \else%
    \setlength{\unitlength}{\svgwidth}%
  \fi%
  \global\let\svgwidth\undefined%
  \global\let\svgscale\undefined%
  \makeatother%
  \begin{picture}(1,0.8976315)%
    \lineheight{1}%
    \setlength\tabcolsep{0pt}%
    \put(0,0){\includegraphics[width=\unitlength,page=1]{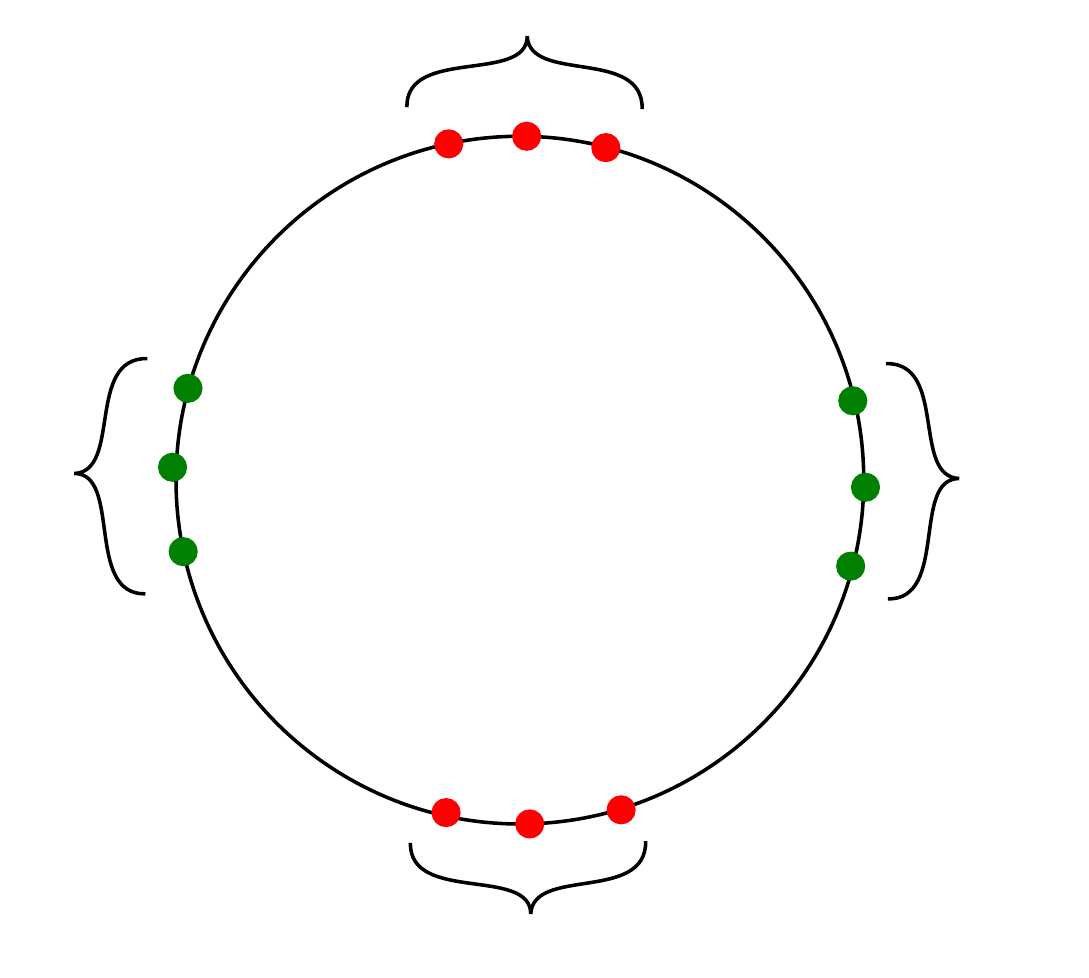}}%
    \put(0.00032394,0.4579532){\color[rgb]{0,0.50196078,0}\makebox(0,0)[lt]{\lineheight{1.25}\smash{\begin{tabular}[t]{l}$\alpha^-$\end{tabular}}}}%
    \put(0.47380203,0.00466429){\color[rgb]{1,0,0}\makebox(0,0)[lt]{\lineheight{1.25}\smash{\begin{tabular}[t]{l}$\beta^-$\end{tabular}}}}%
    \put(0.89459098,0.44901107){\color[rgb]{0,0.50196078,0}\makebox(0,0)[lt]{\lineheight{1.25}\smash{\begin{tabular}[t]{l}$\alpha^+$\end{tabular}}}}%
    \put(0.46571,0.88274722){\color[rgb]{1,0,0}\makebox(0,0)[lt]{\lineheight{1.25}\smash{\begin{tabular}[t]{l}$\beta^+$\end{tabular}}}}%
    \put(0.1655946,0.17459862){\color[rgb]{0,0,0}\makebox(0,0)[lt]{\lineheight{1.25}\smash{\begin{tabular}[t]{l}$X_0$\end{tabular}}}}%
  \end{picture}%
\endgroup%

  \caption{Clusters of points on $X_0\simeq S^1$ corresponding to the strips of the regular square lattice, ensuring the Kasteleyn property}
  \label{Fig sq clusters}
\end{figure}

\section{Graphs on a torus and Harnack curves}
\label{sec:7_Torus_and_Harnack}

Starting from here, we consider a $\mathbb Z^2$-periodic graph $G$ on the plane $\mathbb R^2$. We can consider such a graph as a lift to $\mathbb R^2$ of the graph $G_1=G/\mathbb Z^2$ on the torus $\mathbb R^2/\mathbb Z^2$.

Every strip of $G$ projects to a directed closed curve (loop) on the torus. Let $(m_\sigma,n_\sigma)$ be the homology class of a loop $\sigma$, so that  $\gamma_1=[(0,0),(1,0)]\in H_1(\mathbb T^2,\mathbb Z)$ intersects the loop $\sigma$ from left to right $n_\sigma$ times, while $\gamma_2=[(0,0),(0,1)]\in H_1(\mathbb T^2,\mathbb Z)$ intersects the loop $\sigma$ from right to left $m_\sigma$ times. One has
\begin{equation}\label{00}
\sum_\sigma n_\sigma=0, \quad \sum_\sigma m_\sigma=0.
\end{equation}

As seen in \cite{goncharov_dimers_2013}, one can draw the vectors $(m_\sigma,n_\sigma)$ in $\mathbb Z^2$ in a uniquely defined order so that 
\begin{itemize}
\item the initial point of each vector is the end point of the previous one, and 
\item they build the boundary of a convex polygon, the \emph{Newton polygon} $N(G_1)$. 
\end{itemize}
Parallel strips correspond to vectors aligned along the same side of the Newton polygon. The criterium for the Kasteleyn property for the periodic case can be formulated as follows.
\begin{theorem}\label{th alphas vs Newton polygon}
A sufficient condition for the Fock weights defined by an $M$-curve to satisfy the Kasteleyn property is that the points $\alpha_\sigma$ are ordered on $X_0$ in the same cyclic order as the corresponding directed edges of the Newton polygon $N(G_1)$, with clusters of $\alpha_\sigma\in X_0$ corresponding to the sides of $N(G_1)$. 
\end{theorem}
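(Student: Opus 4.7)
The plan is to verify the sign condition \eqref{Kasteleyn cond 3} for every face $f$ of $G$ by analyzing the cyclic sequence of strips around $\partial f$. Since the multi-ratio in \eqref{Kasteleyn cond 3} is M\"obius-invariant, its sign depends only on the cyclic order of the $4n$ points $\hat\alpha_i,\hat\beta_i \in S^1$, which is prescribed by the hypothesis: parallel strips (identical homology class) cluster together, and clusters are arranged in the cyclic order of the directed edges of $N(G_1)$. This reduces the theorem to a purely combinatorial statement about cyclic orderings.

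For a face $f$ with $2n$ boundary vertices $w_1,b_1,\ldots,w_n,b_n$, I would first identify, via the Fock labeling convention of Fig.~\ref{Fig face}, a canonical cyclic sequence of $2n$ strips around $\partial f$: namely $(\beta_1,\alpha_1,\beta_2,\alpha_2,\ldots,\beta_n,\alpha_n)$, where $\alpha_i$ is the strip that appears in the Kasteleyn coefficients of both edges of $\partial f$ incident to $b_i$, and $\beta_i$ is the analogous strip at $w_i$ (these are the ``shared'' labels linking consecutive factors in the formula above \eqref{Kasteleyn cond 1}). The crucial topological step is to compute the winding number of this sequence on $S^1$ after identifying each strip $\sigma$ with its cluster position. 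I expect it to equal $\pm(n-1)$, directly verifiable from the explicit formulas in Sections~6.1 and~6.2 (hexagonal case $n=3$ giving $|W|=2$; square case $n=2$ giving $|W|=1$). The heuristic justification is that the strip associated to a vertex $v$ is constrained by the homology classes of the two edges of $\partial f$ at $v$, and summing these constraints around $\partial f$ using the null-homology of the boundary on the torus yields $n-1$ full windings on the cluster-$S^1$.

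Once the winding is established, the sign of the multi-ratio follows by cutting $S^1$ at a point outside the sequence and counting signs of factors: for the interleaved cyclic pattern $(\beta_1,\alpha_1,\ldots,\beta_n,\alpha_n)$ winding $\pm(n-1)$ times, a direct calculation shows that the numerator contributes sign $(-1)^{n-1}$ (resp.\ $(-1)^n$) and the denominator sign $(-1)^{n}$ (resp.\ $(-1)^{n-1}$), producing total sign $(-1)^{n+1}$ regardless of the orientation convention.

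The principal obstacle in making this argument rigorous is the winding-number claim, which requires a delicate relationship between strip directions on the torus (encoded by the Newton polygon edge cyclic order) and the local structure of $\partial f$; care is needed to handle cluster degeneracies (parallel strips with coincident $\hat\alpha_\sigma$), which however can be broken by arbitrarily small perturbations without affecting the sign. One could alternatively avoid the global winding argument via induction on the combinatorial complexity of $\partial f$ using the spider move (Proposition~\ref{prop spider}), which preserves both the Fock weights and the Kasteleyn property; the base cases would be the triangle/Fay case (Section~\ref{sec:4_fay_and_spider}) and the two square-face computations in Section~6.2.
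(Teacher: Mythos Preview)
The paper does not actually supply a proof of this theorem: after the statement it immediately moves on to Proposition~\ref{prop periodic Dirac}. The general statement is meant to be read as the evident extrapolation of the two worked examples in Sections~6.1 and~6.2, together with the reduction to the combinatorial sign condition~\eqref{Kasteleyn cond 3} already established in Section~6. So there is nothing in the paper to compare your argument against line by line; you are in effect trying to fill a gap the authors left.

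Your overall strategy---reduce to~\eqref{Kasteleyn cond 3}, then argue purely from the cyclic order of the labels on $S^1$---is exactly the right one, and matches what the paper does implicitly. The honest difficulty is precisely the one you flag: the winding-number claim. Your heuristic (``summing constraints around $\partial f$ using null-homology'') is suggestive but not a proof; in particular, the homology class of a strip on the torus does not by itself pin down which \emph{side} of the Newton polygon its label sits on, only the direction of that side, and the face $f$ can be bounded by strips whose labels lie in the same cluster (parallel strips), so the sequence $(\beta_1,\alpha_1,\ldots,\beta_n,\alpha_n)$ need not be injective into the set of clusters. Your perturbation remark handles coincident labels within a cluster but not this issue. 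A clean general argument (as in \cite{boutillier_minimal_2023}) proceeds instead by analyzing, for each edge of $\partial f$, the relative position of its two strip labels using the convention that white is on the left of each strand; this gives the sign of each individual factor $(\hat\beta_i-\hat\alpha_i)$ and $(\hat\alpha_i-\hat\beta_{i+1})$ directly, and the count comes out to $(-1)^{n+1}$ without invoking a global winding number.

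Your alternative via spider moves does not work as stated: spider moves alter the face structure of $G$ (they trade a square face against its neighbours), so there is no induction on ``the combinatorial complexity of $\partial f$'' that reduces an arbitrary $2n$-gon to a triangle or square while keeping the ambient graph periodic and the Newton polygon fixed.
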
 

One easily sees:
\begin{proposition} \label{prop periodic Dirac}
The discrete Abel map $\eta$, and therefore the discrete Dirac operator on $G$ are $\mathbb Z^2$-periodic iff 
\begin{equation}\label{00 in J}
\sum_{\sigma} n_\sigma A(\alpha_\sigma)= 0\; in\; J(\mathcal{R}), \quad \sum_{\sigma} m_\sigma A(\alpha_\sigma)= 0\; in\; J(\mathcal{R}),
\end{equation}
i.e., if both divisors $\sum_\sigma n_\sigma\alpha_\sigma$ and $\sum_\sigma m_\sigma\alpha_\sigma$ are principal (are divisors of certain meromorphic functions on $\mathcal{R}$).
\end{proposition}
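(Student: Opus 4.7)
The plan is to compute, for any vertex $v \in V(G^\diamond)$, the increments $\eta(v + \gamma_1) - \eta(v)$ and $\eta(v + \gamma_2) - \eta(v)$ in $J(\mathcal{R})$ as explicit linear combinations of the $A(\alpha_\sigma)$, and then invoke Abel's theorem. Fix a path $P_1$ in $G^\diamond$ from $v$ to $v + \gamma_1 = v + (1,0)$. By Definition~\ref{def discr Abel} (each directed edge of $G^\diamond$ traversing a strip labeled $\alpha_\sigma$ from right to left contributes $+A(\alpha_\sigma)$, from left to right contributes $-A(\alpha_\sigma)$), summing along $P_1$ collects each strip $\sigma$ with the signed number of crossings of $P_1$ with $\sigma$. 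Since the sum of contributions along a cycle of $G^\diamond$ is zero, this signed count depends only on the homology class of the projection of $P_1$ onto $\mathbb T^2$, which is $\gamma_1$. By the very definition of $n_\sigma$ as the algebraic intersection count of $\gamma_1$ with $\sigma$ (with the left-to-right convention), the net contribution of $\sigma$ is $-n_\sigma A(\alpha_\sigma)$, and an identical argument with $\gamma_2$ (using the opposite right-to-left convention in the definition of $m_\sigma$) gives $+m_\sigma A(\alpha_\sigma)$. Thus
\begin{equation*}
\eta(v + \gamma_1) - \eta(v) = -\sum_\sigma n_\sigma\, A(\alpha_\sigma), \qquad \eta(v + \gamma_2) - \eta(v) = +\sum_\sigma m_\sigma\, A(\alpha_\sigma),
\end{equation*}
both understood modulo the period lattice $\mathbb Z^g + B\mathbb Z^g$.

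The discrete Abel map $\eta$ is $\mathbb Z^2$-periodic precisely when both increments vanish in $J(\mathcal{R})$, i.e. when $\sum_\sigma n_\sigma A(\alpha_\sigma) = 0$ and $\sum_\sigma m_\sigma A(\alpha_\sigma) = 0$ in $J(\mathcal{R})$. By \eqref{00}, the divisors $D_1 = \sum_\sigma n_\sigma\, \alpha_\sigma$ and $D_2 = \sum_\sigma m_\sigma\, \alpha_\sigma$ have degree zero; Abel's theorem then states that a degree-zero divisor is principal if and only if its image under the Abel map vanishes in $J(\mathcal{R})$, which is exactly the condition above. Periodicity of the Dirac operator follows from that of $\eta$: the Fock weights \eqref{Fock coeff} depend on translations only through the values of $\eta$ on the two adjacent faces (the strip labels $\alpha,\beta$ themselves descend to $G_1 = G/\mathbb Z^2$ and are automatically $\mathbb Z^2$-periodic). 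Any residual quasi-periodicity factors of $\theta$ coming from a particular lift of $\eta$ to $\mathbb C^g$ are of the form $\lambda(b)\lambda(w)$ and are absorbed by a gauge transformation in the sense of \eqref{gauge inv}.

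The main obstacle is sign bookkeeping: matching the orientation convention on strips (white vertices to the left, black to the right) with the "left-to-right" and "right-to-left" conventions used in the definition of $(m_\sigma, n_\sigma)$, so that the path-level $\pm A(\alpha_\sigma)$ contributions from Definition~\ref{def discr Abel} translate correctly into the homological intersection numbers. Once conventions are pinned down, the identification of increments is a one-line computation, and the equivalence with principality is immediate from Abel's theorem combined with \eqref{00}.
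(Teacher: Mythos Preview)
Your argument is correct and is exactly the computation the paper has in mind when it writes ``One easily sees'' before the statement (the paper gives no explicit proof). You compute the translational increments of $\eta$ by summing $\pm A(\alpha_\sigma)$ along a path in $G^\diamond$, identify the signed crossing counts with the homological intersection numbers $n_\sigma$, $m_\sigma$, and conclude via Abel's theorem using \eqref{00}; this is the intended route.

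Your final paragraph on gauge is a legitimate extra care that the paper does not spell out. Note that in the $M$-curve setup of Section~\ref{sec:6_M-curves} all $\alpha_\sigma\in X_0$ and $A(\alpha_\sigma)\in\mathbb R^g$, so the increments of $\eta$ lie in $\mathbb R^g$; vanishing in $J(\mathcal R)$ then forces them into $\mathbb Z^g$, where $\theta$ is genuinely periodic by \eqref{theta monodromy}. Hence in that setting the Dirac operator is periodic on the nose, not merely up to gauge. Your observation that $\eta(f_1)+\eta(f_2)=\eta(b)+\eta(w)$ (so any $B\mathbb Z^g$-monodromy factor is of the form $\lambda(b)\lambda(w)$) handles the general complex case and is a nice addition.
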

Moreover, we see from \eqref{black BA} that these  meromorphic functions are nothing but the monodromies of the Baker-Akhiezer functions along the loops $\gamma_1,\gamma_2$:
\begin{equation}
\label{monodromies}
z(P):=\frac{\psi_{(1,0)}(P)}{\psi_{(0,0)}(P)}=\prod_{\sigma} (E(P,\alpha_\sigma))^{-n_\sigma}, \quad w(P):=\frac{\psi_{(0,1)}(P)}{\psi_{(0,0)}(P)}=\prod_{\sigma} (E(P,\alpha_\sigma))^{m_\sigma}.
\end{equation}
By the general theory, the points $(z(P),w(P))\in\mathbb CP^1\times \mathbb CP^1$ lie on an algebraic curve $\mathcal{P}(z,w)=0$, called the \emph{spectral curve} and parametrized by the Riemann surface $\mathcal{R}$ and having therefore geometric genus $g$ (but usually a much higher arithmetic genus due to numerous singularities). 

For an arbitrary Dirac operator $K:\bbC^B\to\bbC^W$ on a bipartite $\bbZ^2$-periodic graph with $\bbZ^2$-periodic coefficients, the equation $\mathcal{P}(z,w)=0$ of the spectral curve can be obtained as follows. Choose a fundamental domain $\Omega\subset \bbR^2$ (so that the translations $\Omega+\bbZ^2$ are disjoint and their union is $\bbR^2$). Set $\widetilde B=B\cap \Omega$ and $\widetilde W=W\cap\Omega$. These sets represent the black and the white vertices of the graph $G_1$ on $\mathbb T^2$. We now define an operator $\widetilde K(z,w): \bbC^{\widetilde B}\to\bbC^{\widetilde W}$ with coefficients depending on $z,w$ such that
\begin{equation}
\label{eq:spectral_curve}
\mathcal{P}(z,w)=\det \widetilde K(z,w).
\end{equation}
For each $w\in\widetilde W$, consider its star in $\bbR^2$ consisting of $b_1,b_2,\ldots,b_n\in B$ and the corresponding Dirac equation \eqref{black Dirac}. For those black vertices $b_k$ lying outside $\Omega$ (i.e., for such that the edge $(wb_k)$ intersects the boundary $\partial\Omega$), determine $(i,j)\in \bbZ^2$ so that $\widetilde b_k:=b_k-(i,j)\in\Omega$. Set in the Dirac equation $\psi_{b_k}=z^iw^j\psi_{\widetilde b_k}$. This is equivalent to multiplying the weight $K_{wb_k}$ by $z^iw^j$. Thus, the Dirac equation becomes
$$
\sum_{\widetilde b_k\sim w} \widetilde K_{w\widetilde b_k}\psi_{\widetilde b_k}=0, \quad {\rm where}\quad  \widetilde K_{w\widetilde b_k}=z^iw^jK_{wb_k},
$$
and the incidence $\widetilde b_k\sim w$ is understood relative to the graph $G_1$ on $\mathbb T^2$. A different choice of the fundamental domain leads to the multiplication of $\det(\widetilde K)$ by a factor $z^kw^\ell$ and does not chage the spectral curve.

According to \cite{Kenyon_Okounkov_2006} if the coefficients of the Dirac operator satisfy the Kasteleyn condition then the spectral curve \eqref{eq:spectral_curve} is a \emph{Harnack curve}. Moreover all Harnack curves can be obtained this way. We use this characterizatiuon of the Harnack curves as M-curve $\mathcal{R}$ with two principal divisors $\sum_\sigma n_\sigma\alpha_\sigma$ and $\sum_\sigma m_\sigma\alpha_\sigma$ satisfying clustering conditions. 

\subsection{Example: regular hexagonal lattice}

Consider the case of a $d\times d$ piece of the regular hexagonal lattice as the fundamental domain (in Fig.~\ref{Fig hex lattice} $d=3$ and one should assume $\alpha_4=\alpha_1$,  $\beta_4=\beta_1$, $\gamma_4=\gamma_1$; the coordinate directions are defined by the shifts $(0,0)\to (1,0)$ resp. $(0,0)\to (0,1)$). The homology class of all $\alpha_i$-strips is $(0,-1)$, the homology class of all $\beta_j$-strips is $(1,0)$, and the homology class of all $\gamma_k$-strips is $(-1,1)$. The Newton polygon is shown in Fig.~\ref{Fig Newton hex}. 
\begin{figure}[h]
    \centering
    \fontsize{10pt}{12pt}\selectfont
    \def\svgwidth{0.3\textwidth}
    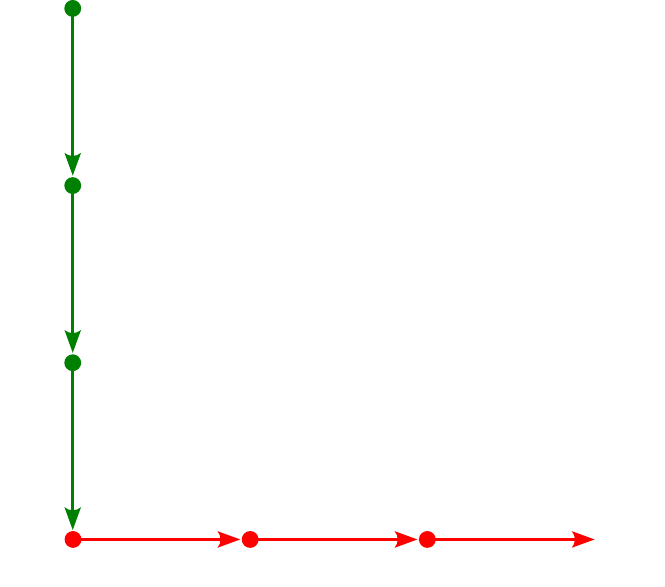
  \caption{Newton polygon for a $3\times 3$ fundamental domain of the hexagonal lattice}
  \label{Fig Newton hex}
\end{figure}

The monodromies are
\begin{equation}
\label{monodromies hex}
z(P)=\frac{\psi_{(1,0)}(P)}{\psi_{(0,0)}(P)}=\prod_{k=1}^d \frac{E(P,\alpha_k)}{E(P,\gamma_k)}, \quad w(P)=\frac{\psi_{(0,1)}(P)}{\psi_{(0,0)}(P)}=\prod_{k=1}^d \frac{E(P,\beta_k)}{E(P,\gamma_k)}.
\end{equation}
The periodicity condition of the Dirac operator is
\begin{equation}
\sum_{k=1}^d (A(\alpha_k)-A(\gamma_k)) = 0\; {\rm in}\; J(\mathcal{R}), \quad \sum_{k=1}^d (A(\beta_k)-A(\gamma_k)) = 0\; {\rm in}\; J(\mathcal{R}),
\end{equation}
so that both divisors $\sum_{k=1}^d (\alpha_k-\gamma_k)$ and $\sum_{k=1}^d (\beta_k-\gamma_k)$ are principal. Thus, the problem consists in finding three equivalent divisors $\sum_{k=1}^d \alpha_k$, $\sum_{k=1}^d\beta_k$ and $\sum_{k=1}^d\gamma_k$ of points on the oval $X_0$ of the $M$-curve $\mathcal{R}$ satisfying clustering condition \eqref{clusters hex}.

\subsection{Example: regular square lattice}

Consider the case of a $d\times d$ piece of the regular square lattice as the fundamental domain (in Fig.~\ref{Fig sq lattice} $d=3$; the coordinate directions are defined by the shifts $(0,0)\to (1,0)$ resp. $(0,0)\to (0,1)$). The homology class of all $\alpha_i^\pm$-strips is $(0,\pm 1)$, the homology class of all $\beta_j^\pm $-strips is $(\mp 1,0)$. The Newton polygon is shown in Fig.~\ref{Fig Newton sq}. 
\begin{figure}[h]
    \centering
    \fontsize{10pt}{12pt}\selectfont
    \def\svgwidth{0.3\textwidth}
    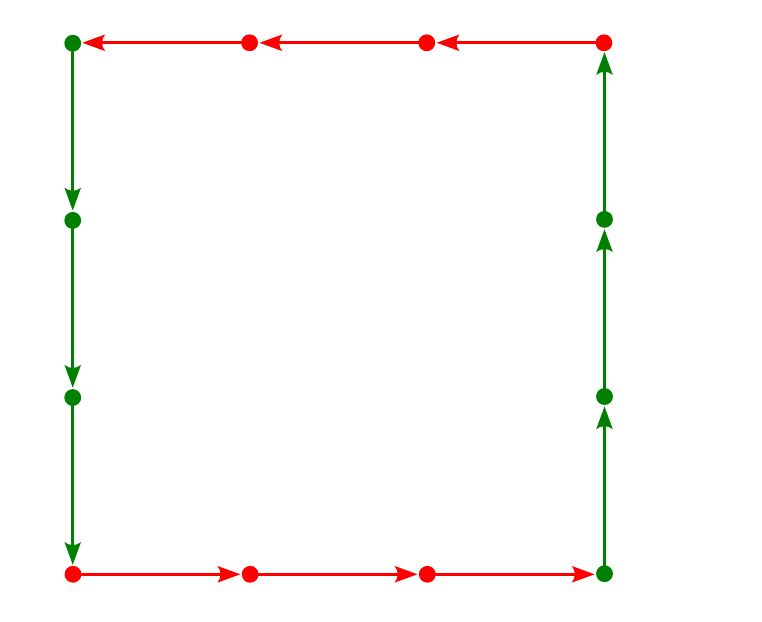
  \caption{Newton polygon for a $3\times 3$ fundamental domain of the square lattice}
  \label{Fig Newton sq}
\end{figure}

The monodromies are
\begin{equation}\label{monodromies sq}
z(P)=\frac{\psi_{(1,0)}(P)}{\psi_{(0,0)}(P)}=\prod_{k=1}^d \frac{E(P,\alpha_k^-)}{E(P,\alpha_k^+)}, \quad w(P)=\frac{\psi_{(0,1)}(P)}{\psi_{(0,0)}(P)}=\prod_{k=1}^d \frac{E(P,\beta_k^-)}{E(P,\beta_k^+)}.
\end{equation}
The periodicity condition of the Dirac operator is
\begin{equation}
\label{ch07:eq_periodicDiracQuadGrid}
\sum_{k=1}^d (A(\alpha_k^-)-A(\alpha_k^+)) = 0\; {\rm in}\; J(\mathcal{R}), \quad \sum_{k=1}^d (A(\beta_k^-)-A(\beta_k^+)) = 0\; {\rm in}\; J(\mathcal{R}),
\end{equation}
so that both divisors $\sum_{k=1}^d (\alpha_k^--\alpha_k^+)$ and $\sum_{k=1}^d (\beta_k^--\beta_k^+)$ are principal. Thus, the problem consists in finding two pairs of equivalent divisors $\sum_{k=1}^d \alpha_k^\pm$ and $\sum_{k=1}^d\beta_k^\pm$ of points on the oval $X_0$ of the $M$-curve $\mathcal{R}$ satisfying clustering condition \eqref{clusters sq}.

\subsection{Existence for Harnack curves over a given $M$-curve}

In this Section we will show that on any M-curve there exists a collection of equivalent clustered divisors on $X_0$, as required by Theorem~\ref{th alphas vs Newton polygon} and Proposition~\ref{prop periodic Dirac}. These are 
 three equivalent clustered divisors for the hexagon lattice, or two pairs of equivalent clustered divisors for the square lattice. As explained above , the corresponding functions \eqref{monodromies} define a Harnack curve $\mathcal{P}(z,w)=0$.
 We formulate the following result for the square lattice for the sake of simplicity only, it is actually of a quite general nature.
 
 We first formulate the following Lemma, which we will use also in Section \ref{sec:9_heightFunction}.
 
 \begin{lemma}
 \label{lemma clusters}
Let $\ell^\pm$  be two arcs in $X_0$. Then there exists $L\in \mathbb{N}$ such that for any $R \in T^g = \mathbb{R}^g \Big/ \mathbb{Z}^g = \myRe\br{\Jac(\mathcal{R})}$ there exists a divisor $\sum_j^L (s_j^- - s_j^+)$  with $s_j^\pm \in \ell^\pm$ and
\[ \sum_{j=1}^L A(s_j^- -s_j^+) = R. \]
\end{lemma}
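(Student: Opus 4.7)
The plan is to realize the lemma by showing that for a suitable $L$, the evaluation map
$$
\phi_L\colon(\ell^-)^L\times(\ell^+)^L\to T^g,\qquad (s_j^\pm)_{j=1}^L\mapsto \sum_{j=1}^L\bigl(A(s_j^-)-A(s_j^+)\bigr)\bmod\mathbb{Z}^g,
$$
is surjective. Denote its image by $\mathcal{I}_L\subset T^g$; Minkowski addition gives $\mathcal{I}_{L_1+L_2}=\mathcal{I}_{L_1}+\mathcal{I}_{L_2}$, which is the main source of leverage.

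The first step is to show that $\mathcal{I}_{L_0}$ contains a nonempty open subset $V_0\subset T^g$ for $L_0:=g$. The differential of $\phi_{L_0}$ at a tuple is encoded by the $2L_0$ real vectors $\pm\omega(s_j^\pm)\in\mathbb{R}^g$, where $\omega(P)$ denotes the vector $(\omega_1,\ldots,\omega_g)$ evaluated on a tangent vector to $X_0$ at $P$; these are real thanks to the M-curve relation $\overline{\tau^*\omega_j}=\omega_j$, which in fact holds globally (both sides are normalized holomorphic differentials with identical $a$-periods). If every $\omega(P)$ with $P\in\ell^-$ lay in a proper real subspace, some nonzero real combination $\eta=\sum_j c_j\omega_j$ would annihilate the tangent of $\ell^-$; writing $\eta=f(z)\,dz$ in a real local coordinate on $X_0$, $f$ would vanish on an entire open arc and hence $\eta\equiv 0$ on $\mathcal{R}$, contradicting the linear independence of the $\omega_j$. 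Choosing $s_1^-,\ldots,s_g^-\in\ell^-$ with $\omega(s_j^-)$ forming a basis of $\mathbb{R}^g$ (and arbitrary $s_j^+\in\ell^+$), $\phi_{L_0}$ becomes a submersion there, so by the implicit function theorem its image contains an open neighborhood $V_0$ of the corresponding point.

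The second step deduces $\mathcal{I}_L=T^g$ for $L$ large. Fix $w_0\in V_0$ and put $V:=V_0-w_0$, an open neighborhood of $0$. By Minkowski additivity,
$$
\mathcal{I}_{kL_0}\supset \underbrace{V_0+\cdots+V_0}_{k\text{ terms}}=kw_0+\underbrace{V+\cdots+V}_{k\text{ terms}}=:kw_0+V^{(k)}.
$$
Since $0\in V$, the $V^{(k)}$ form an increasing chain of open sets, and $H:=\bigcup_{k\ge 1}V^{(k)}$ is an open subsemigroup of $T^g$. In any compact group the closure of a subsemigroup is a subgroup, so $\overline H$ is an open and closed subgroup of the connected group $T^g$, hence $\overline H=T^g$. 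A standard argument then gives $H=T^g$: for $x\in T^g$ the open set $x-H$ meets the dense $H$, producing $x\in H+H\subset H$. Compactness of $T^g$ together with $V^{(k)}\subset V^{(k+1)}$ yields a finite $K$ with $V^{(K)}=T^g$, whence $\mathcal{I}_{KL_0}\supset Kw_0+T^g=T^g$, and one can take $L=Kg$.

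The main obstacle is the first step, the surjectivity of the differential, which rests on two analytic facts specific to M-curves: holomorphic differentials are real on tangents to real ovals, and a holomorphic 1-form vanishing on any open sub-arc of $X_0$ must vanish identically on $\mathcal R$. Once an open set is secured inside some $\mathcal{I}_{L_0}$, the remainder is pure compact-group topology and yields an $L$ depending only on the geometry of $\mathcal R$ and on the arcs $\ell^\pm$, not on $R$, as required.
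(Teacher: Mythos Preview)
Your proof is correct and, for the first step, essentially coincides with the paper's: the paper phrases it as ``any arc supports a non-special divisor'', which is exactly your statement that one can pick $s_1^-,\ldots,s_g^-\in\ell^-$ with $\omega(s_j^-)$ spanning $\mathbb{R}^g$ (a degree-$g$ divisor is non-special precisely when this rank condition holds). Your direct argument via the identity principle for holomorphic differentials is a clean justification of what the paper leaves implicit.

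The second step differs. The paper lifts to $\mathbb{R}^g$: since the arcs are contractible, $S_g(\ell^\pm)=\{\sum_{i=1}^g\int_{t_i^+}^{t_i^-}\boldsymbol{\omega}\}\subset\mathbb{R}^g$ is well-defined and open; its homothetic dilate $N\cdot S_g$ eventually contains a fundamental domain $[0,1)^g$, and since $N\cdot S_g\subset\underbrace{S_g+\cdots+S_g}_{N}$ one gets $L=Ng$. Your route stays on the torus and uses that an open subsemigroup of a compact connected abelian group exhausts it. Both are valid; the paper's scaling argument is shorter and avoids the semigroup-closure lemma, while yours is intrinsic to $T^g$ and would work verbatim in any compact connected abelian Lie group.
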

\begin{proof}  Any arc supports a non-special divisor, since they are generic. The set
$$
S_g(\ell^\pm)=\{ \sum_{i=1}^g A(t_i^- - t_i^+)\in\mathbb R^g | t_i^\pm\in \ell^\pm \}
$$
is an open set in $\mathbb R^g$. There follows that the set $N S_g(\ell^\pm)\subset\mathbb R^g$ (the homothetic image of $S_g(\ell^\pm)$ scaled by $N$) for a sufficiently large $N$ contains the torus $T^g$. This completes the proof for $L=Ng$.
\end{proof}

\begin{theorem}
\label{th existence}
Let $\mathcal{R}$ be an $M$-curve.
There exists $N_0\in \mathbb N$ such that for any $m,n\ge N_0$ there exist principal divisors $\sum_{i=1}^m (\alpha_i^- - \alpha_i^+), \sum_{j=1}^n (\beta_j^- - \beta_j^+)$ satisfying clustering conditions \eqref{clusters sq}. Functions \eqref{monodromies sq} define a Harnack curve with the Riemann surface $\mathcal{R}$. 
\end{theorem}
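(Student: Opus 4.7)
The plan is to apply Lemma~\ref{lemma clusters} twice to produce the two required principal divisors supported in prescribed arcs of $X_0$, and then to invoke the characterization of Harnack curves via Kasteleyn Dirac operators. First I would partition the oval $X_0\simeq S^1$ into four pairwise disjoint closed arcs $\ell_{\alpha^+},\ell_{\beta^+},\ell_{\alpha^-},\ell_{\beta^-}$ arranged in this cyclic order on $X_0$; this directly encodes the clustering condition~\eqref{clusters sq}.

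Applying Lemma~\ref{lemma clusters} to the pair $(\ell_{\alpha^-},\ell_{\alpha^+})$ yields an integer $L_\alpha$ such that every element of $T^g=\mathbb{R}^g/\mathbb{Z}^g$ is realized as $\sum_{i=1}^{L_\alpha}A(\alpha_i^--\alpha_i^+)$ for some choice of $\alpha_i^\pm$ in the respective arcs. Choosing the target equal to $0\in T^g$ produces a divisor whose Abel image vanishes in $T^g$. Because the period matrix $B$ is purely imaginary for an $M$-curve and the Abel sum of points on $X_0$ is real, vanishing in $T^g$ is equivalent to vanishing in $J(\mathcal{R})=\mathbb{C}^g/(\mathbb{Z}^g+B\mathbb{Z}^g)$, so the divisor is principal. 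Repeating the argument with $(\ell_{\beta^-},\ell_{\beta^+})$ yields an analogous integer $L_\beta$ and principal $\beta$-divisor. To accommodate arbitrary $m\ge L_\alpha$, I would pick $m-L_\alpha$ pairs freely in $\ell_{\alpha^-}\times\ell_{\alpha^+}$, let $R_0\in T^g$ denote their accumulated Abel sum, and apply the lemma once more with target $-R_0$ to supply the remaining $L_\alpha$ pairs; the concatenation is a principal divisor of size $m$ supported in the prescribed arcs. The same extension works for the $\beta$-divisor, so $N_0:=\max(L_\alpha,L_\beta)$ has the required property.

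For the Harnack statement: by Theorem~\ref{th alphas vs Newton polygon} the Fock weights defined by these divisors satisfy the Kasteleyn condition, and by Proposition~\ref{prop periodic Dirac} the Dirac operator is $\mathbb{Z}^2$-periodic; the spectral curve~\eqref{eq:spectral_curve} parametrized via the monodromies~\eqref{monodromies sq} is then Harnack by the Kenyon--Okounkov characterization cited above from~\cite{Kenyon_Okounkov_2006}, and the Riemann surface of its normalization is $\mathcal{R}$ by construction of the monodromies. The main subtlety I expect is ensuring that the geometric genus of the resulting spectral curve is exactly $g$, i.e.\ that $(z,w)$ generate the function field of $\mathcal{R}$ rather than only a subfield; this is an open condition on the $\alpha_i^\pm,\beta_j^\pm$ inside the fixed arcs, so it can be secured by an arbitrarily small perturbation that preserves both principality (via one further application of the lemma) and clustering.
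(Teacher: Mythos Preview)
Your proposal is correct and follows the same approach as the paper: choose four disjoint arcs on $X_0$ in the prescribed cyclic order and apply Lemma~\ref{lemma clusters} with target $R=0$. Your write-up is in fact more complete than the paper's one-line proof: you spell out why vanishing in $T^g$ is equivalent to principality (real Abel images, purely imaginary $B$), you give the extension from $L_\alpha,L_\beta$ to arbitrary $m,n\ge N_0$ via a second application of the lemma with shifted target, and you flag the geometric-genus issue for the Harnack conclusion, none of which the paper addresses explicitly.
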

\begin{proof}
Choose four disjoint arcs satisfying clustering condition, see Fig.~\ref{fig:clustering_for_regularization} and apply Lemma~\ref{lemma clusters} with $R=0$.
 \end{proof}

\begin{figure}[h]
    \centering
    \fontsize{10pt}{12pt}\selectfont
    \def\svgwidth{0.3\textwidth}
    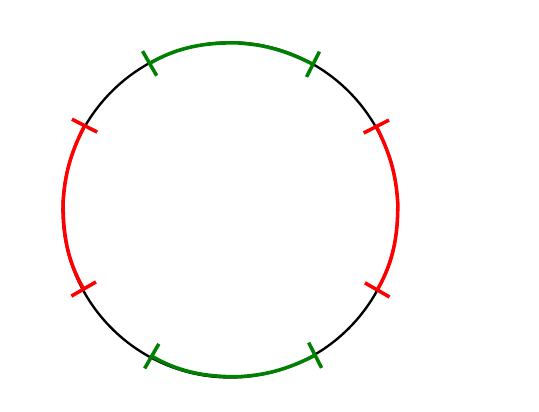
  \caption{Distjoint clustering arcs for the square grid.}
  \label{fig:clustering_for_regularization}
\end{figure}

\section{The Ronkin function}
\label{sec:8_Ronkin_function}

In this and the following section for simplicity we consider the case of the square grid only. General considerations are similar.

Let $\mathcal{R}$ be an $M$-curve, $\tau$ its anti-holomorphic involution and $X = \bigcup_{i=0}^g X_i$ the set of real ovals. $X$ decomposes $\mathcal{R}$ into two components $\mathcal{R}_+$ and $\mathcal{R}_- = \tau(\mathcal{R}_+)$. We have $\mathcal{R_+} = \mathcal{R}/\tau$ and $\mathcal{R}_+ \cap \mathcal{R}_- = X$. We also denote the corresponding open components by $\mathcal{R}^\circ_\pm := \mathcal{R}_\pm \setminus X$.

Let $X_0$ be a distinguished real oval and $\cubr{\alpha, \beta } := \cubr{\alpha_1^\pm, \ldots, \alpha_m^\pm, \beta_1^\pm, \ldots, \beta_n^\pm} \subset X_0$ be points on it clustered as in \eqref{clusters sq}. They separate $X_0$ into $2m+2n$ arcs $\delta_i$.

We also choose a canonical basis of cycles as in Section~\ref{sec:6_M-curves} , i.e. $a_i = X_i$, $i = 1, \ldots, g$, and $\tau a_i = a_i$, $\tau b_i = -b_i$.

\subsection{The Ronkin function on Harnack data}

Following Krichever \cite{krichever_amoebas_2014}
\[ \mathcal{S} = \cubr{\mathcal{R}, \cubr{\alpha, \beta}}\]
is called \emph{Harnack data} for the $M$-curve $\mathcal{R}$. 

Denote by $d\zeta^{\alpha_i}, d\zeta^{\beta_i}$ the Abelian differential of the third kind with simple poles at $\alpha_i^\pm$ or respectively $\beta_i^\pm$ and 
\begin{equation}
	\label{eq:residues_of_dZeta_i}
	\res_{\alpha_i^-} d\zeta^{\alpha_i} = - \res_{\alpha_i^+} d\zeta^{\alpha_i} = 1 = \res_{\beta_i^-} d\zeta^{\beta_i} = - \res_{\beta_i^+} d\zeta^{\beta_i},
\end{equation}
and imaginary periods. Introduce two differentials 
\begin{equation}
	\label{eq:definition_of_dXis}
	d\zeta_1 = \sum_{i = 1}^m d\zeta^{\alpha_i}, \quad d\zeta_2 = \sum_{i = 1}^n d\zeta^{\beta_i}.
\end{equation}
They are real on $X$, moreover $d\zeta_k(\tau P) = \overline{d\zeta_k(P)}$. 
The corresponding integrals 
\[ \zeta_k(P) = \int_Q^P d\zeta_k \] 
with $Q \in X_0$ have jumps by $2\pi i$ on some paths $[\alpha_i^-, \alpha_i^+]$ and $[\beta_j^-, \beta_j^+]$. We chose these paths to lie in $\mathcal{R}_-$. Since the periods may be only imaginary, all periods of $\zeta_k$ over the real ovals vanish, and we obtain functions $\zeta_k$ well defined on $\mathcal{R}_+ \setminus \cubr{\alpha, \beta}$. 

Denote the real and imaginary parts of $\zeta_k$ by
\[ \zeta_k(P) = x_k + i y_k, \quad i = 1,2. \]
\begin{definition}
\label{def:amoeba_map_polygon_map}
	The map
	\[ \mathcal{A}(P) := (x_1, x_2) = (\myRe \zeta_1(P), \myRe \zeta_2(P)) \]
	is called the \emph{amoeba map}, and its image 
	\[ \mathcal{A}(\mathcal{R}_+ \setminus \cubr{\alpha, \beta}) =: \mathcal{A}_\mathcal{S} \subset \mathbb{R}^2 \]
	is called the \emph{amoeba}.
	
	Similarly the map
	\[ \Delta(P) = (s_1, s_2) = J(y_1, y_2) := \frac{1}{\pi}(-y_2, y_1) = \frac{1}{\pi}(- \myIm \zeta_2(P), \myIm \zeta_1(P)) \]
	we call the \emph{polygon map}, and its image
	\[ \Delta(\mathcal{R}_+ \setminus \cubr{\alpha, \beta}) =: \Delta_\mathcal{S} \subset \mathbb{R}^2 \]
	is called the \emph{Newton polygon}.
\end{definition} 

\begin{rem}
	The scaling and rotation $J$ in the definition of $\Delta$ is introduced so that it matches the classical Newton polygon. It will often be more convenient to work directly with $(y_1, y_2)$ as it produces more symmetric formulas.
\end{rem}
	
Under the amoeba map $X_0$ is mapped  to the concave boundary of $\mathcal{A}_\mathcal{S}$, where the points $\cubr{\alpha, \beta}$ are mapped to $2m + 2n$ infinite tentactles of the amoeba. The ovals $X_1, \ldots, X_g$ are mapped to $g$ convex ovals bounding interior islands in the amoeba. Under the polygon map the (interior) ovals $X_i$ are mapped to the interior points of $\Delta_\mathcal{S}$ given by the b-periods
\begin{equation}
\label{eq:b-periods}
\int_{b_i} d\zeta_k= {i y_k}_{| X_i}.
\end{equation}
The singular points $\cubr{\alpha, \beta}$ separate $X_0$ into $2m + 2n$ connected components $\delta_1, \ldots, \delta_{2m + 2n}$. Each $\delta_l$ is mapped under the amoeba map to an external boundary component of $\mathcal{A}_\mathcal{S}$, and to an exterior vertex of $\Delta_S$ under the polygon map. See Fig.~\ref{fig:AmoebaAndNewtonPolygon} for an illustration.

\begin{figure}[h]
\centering
\centering
\fontsize{10pt}{12pt}\selectfont
\def\svgwidth{\textwidth}
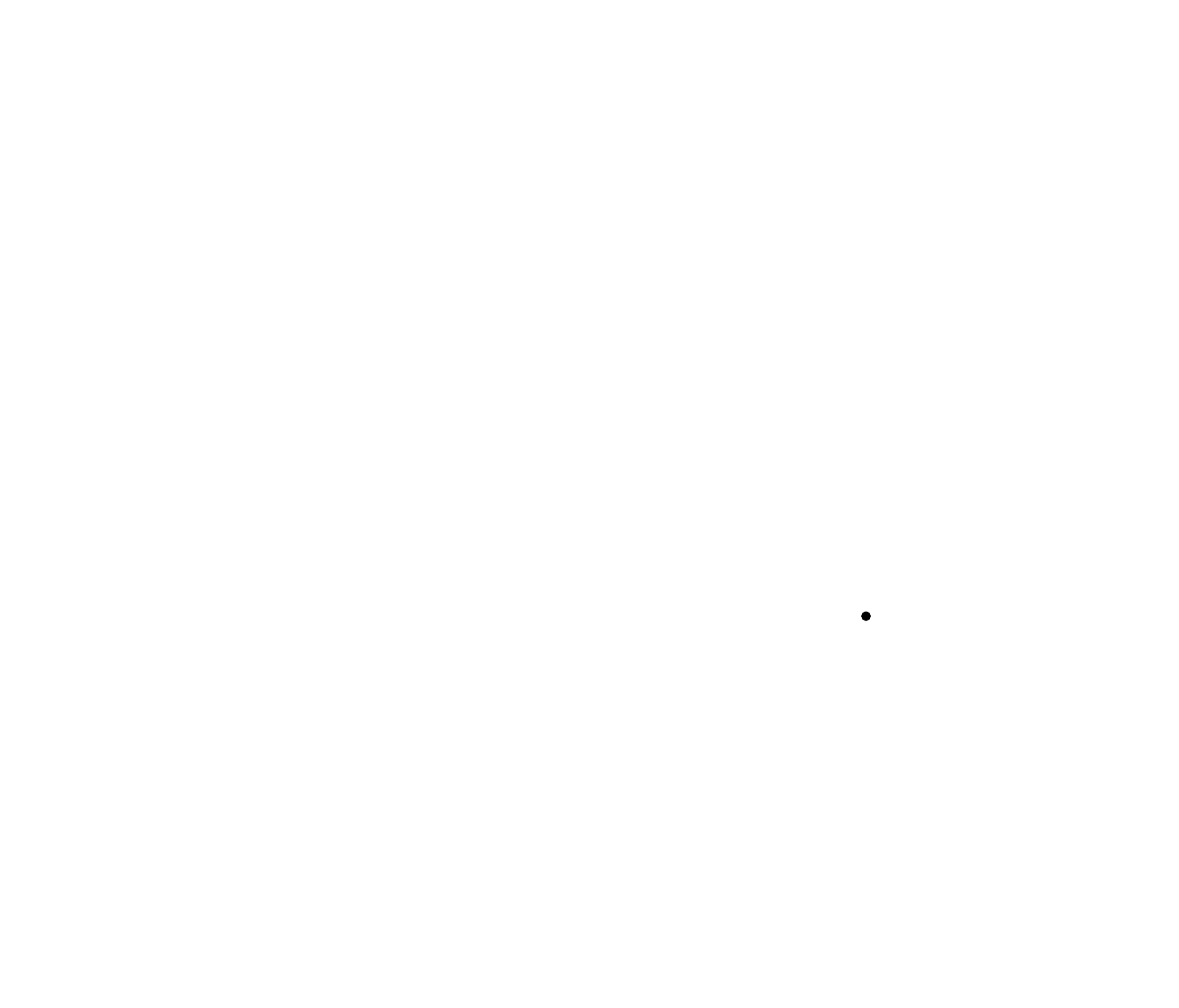
\caption{The two diffeomorphisms $\mathcal{A}$ and $\Delta$ coming from the M-curve $\mathcal{R}$. Amoeba $\mathcal{A}_\mathcal{S}$ (left) and Newton polygon $\Delta_\mathcal{S}$ (right). Zeros of $d\zeta_1$, $d\zeta_2$ as green and red circles respectively. Note that the $X_i$ do not need to be mapped to integer points in $\Delta_\mathcal{S}$ as opposed to the algebraic case.}
\label{fig:AmoebaAndNewtonPolygon}
\end{figure}

The differential $d\zeta_1$ has $2g-2 +2m$ zeros, and $d\zeta_2$ has $2g-2 +2n$ zeros. We observe that we know locations of all these zeros. 

\begin{lemma}
\label{lem:zeros_dzeta}
	The differentials $d\zeta_1, d\zeta_2$ have two zeros on each real oval $X_i, i=1,\ldots, g$. Each of the components $(\alpha_i^+, \alpha_{i+1}^+), (\alpha_i^-, \alpha_{i+1}^-)$ of $X_0$ contains exactly one of the remaining $2m-2$ zeros of $d\zeta_1$. Each of the components $(\beta_j^+, \beta_{j+1}^+), (\beta_j^-, \beta_{j+1}^-)$ of $X_0$ contains exactly one of the remaining $2n-2$ zeros of $d\zeta_2$.
	Moreover, $2g+(2m-2)+(2n-2)$ zeros of the differential $d\zeta_{\alpha,\beta}:=\alpha d\zeta_1+\beta d\zeta_2$ for any $\alpha,\beta\in \mathbb{R}_*$ are also located in $X$ as described above; and the remaining two zeros lie in two of the four arcs $(\alpha^\pm, \beta^\pm)$. 
\end{lemma}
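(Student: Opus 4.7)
The plan is to combine a degree-of-divisor count with a topological min-max argument for the real-valued functions $\myRe \zeta_k$ restricted to the real ovals. A meromorphic differential on $\mathcal{R}$ has zero divisor of degree $2g-2$ plus the number of simple poles, so $d\zeta_1$, $d\zeta_2$, $d\zeta_{\alpha,\beta}$ have respectively $2g-2+2m$, $2g-2+2n$, $2g-2+2m+2n$ zeros counted with multiplicity. The aim is to produce this many real zeros via min-max, after which no others can exist. From $d\zeta_k(\tau P) = \overline{d\zeta_k(P)}$ and $\tau|_X = \mathrm{id}$ the restriction of $d\zeta_k$ to $X$ is real. Because all periods of $d\zeta_k$ are imaginary and $a_i = X_i$, the integral $\int_{X_i} d\zeta_k$ is both real (integral of a real 1-form along a real path) and imaginary, hence zero. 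Therefore $\zeta_k$ is single-valued and smooth on $X_i \setminus \{\text{poles}\}$, and $\myRe \zeta_k$ is a real function there. In a real local coordinate $x$ near a simple pole of residue $c$ one has $d\zeta_k \sim c\,dx/x$ and $\myRe \zeta_k \sim c\log|x| \to -\mathrm{sign}(c)\cdot\infty$ from either side.

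The core step is then: an arc of $X \setminus \{\text{poles}\}$ whose two endpoints both send $\myRe \zeta_k$ to $+\infty$ (respectively both to $-\infty$) must contain a zero of $d\zeta_k$, via Rolle's theorem applied to the real function $\myRe \zeta_k$. For interior ovals $X_i$, $i\geq 1$, the function $\myRe \zeta_k$ is smooth on the whole circle and attains both a maximum and a minimum, giving at least two zeros on each. Applying this to $d\zeta_1$: the poles $\alpha_i^\pm$ lie in the two clusters $\alpha^+$ and $\alpha^-$ along $X_0$, with the $\beta$ clusters separating them. The $m-1$ arcs $(\alpha_i^+,\alpha_{i+1}^+)$ join endpoints going to $+\infty$ and the $m-1$ arcs $(\alpha_i^-,\alpha_{i+1}^-)$ join endpoints going to $-\infty$, each forcing a zero, whereas the two arcs through the $\beta$ clusters join $+\infty$ to $-\infty$ and force nothing. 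This produces $2g + (2m-2)$ zeros, exactly the total degree. The inequalities are therefore equalities, pinning down exactly two zeros on each $X_i$ with $i\geq 1$, exactly one zero in each of the $2m-2$ intra-$\alpha$ arcs of $X_0$, and no zeros elsewhere (in particular none in $\mathcal{R}\setminus X$ or on the two arcs through the $\beta$ clusters). The same argument verbatim gives the statement for $d\zeta_2$.

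For $d\zeta_{\alpha,\beta}$ the residues are $\mp\alpha$ at $\alpha_i^\pm$ and $\mp\beta$ at $\beta_j^\pm$, so within each of the four clusters $\myRe \zeta_{\alpha,\beta}$ blows up with a single sign determined by $\mathrm{sign}(\alpha)$ or $\mathrm{sign}(\beta)$. The intra-$\alpha$ arcs, intra-$\beta$ arcs, and the interior ovals contribute $2g + (2m-2) + (2n-2) = 2g+2m+2n-4$ forced zeros, leaving exactly two of the $2g-2+2m+2n$ total zeros unaccounted. A direct sign check shows that, for any nonzero real pair $(\alpha,\beta)$, exactly two of the four inter-cluster arcs $(\alpha^\pm,\beta^\pm)$ connect endpoints going to the same infinity (the selected pair being the arcs $1,3$ or the arcs $2,4$ in cyclic order, depending on whether $\alpha\beta>0$ or $\alpha\beta<0$); the same Rolle argument then forces one zero in each of these two arcs, and no zeros elsewhere. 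I expect the main obstacle to be purely combinatorial bookkeeping: carefully verifying the sign alternation across consecutive clusters for all four cases of $(\mathrm{sign}(\alpha),\mathrm{sign}(\beta))$, rather than any genuine analytic difficulty.
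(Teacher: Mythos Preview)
Your argument is correct and is essentially the same as the paper's: both use the vanishing of the real $a$-periods to force two critical points of $\zeta_k$ on each inner oval, and the equal-sign residues at the two endpoints of each intra-cluster arc (hence $\myRe\zeta_k\to\pm\infty$ with the same sign at both ends) to force a zero on $X_0$ there, after which the degree count $2g-2+2m$ (resp.\ $2g-2+2n$, $2g-2+2m+2n$) makes all inequalities equalities. The paper's proof is simply a terse version of what you wrote; your explicit Rolle/min-max phrasing and the sign bookkeeping for $d\zeta_{\alpha,\beta}$ are a welcome elaboration but not a different route.
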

\begin{proof}
	$d\zeta_k$ must have two zeros on each $X_i, i=1, \ldots, g$ since they are real and the periods over $X_i$ vanish. The facts about the zeros on $X_0$ follow from the continuity of $d\zeta_k$ on the corresponding component $\delta_i$ and the fact that it has poles of the same sign at both ends of the component. 
	The arguments for $d\zeta_{\alpha,\beta}$ are the same.
\end{proof}
See Fig.~\ref{fig:AmoebaAndNewtonPolygon} for an illustration of the zeros.

$\zeta_{\alpha,\beta}$ are periodic functions on $X_i$ with exactly two critical points. Also on each $\delta_k\in X_0$ the integral $\zeta_{\alpha,\beta}$ has at most one critical point. We conclude that the lines $\alpha x_1+\beta x_2=\gamma$ intersect the components of $X$ at most in two points.
\begin{corollary}
\label{cor:convex}
	The images $\mathcal{A}(X_i)\subset \mathbb{R}^2, i=1,\ldots,g$ are convex ovals. The images $\mathcal{A}(\delta_k)\subset\mathbb{R}^2, k=1,\ldots, 2m+2n $ are infinite concave arcs. 
\end{corollary}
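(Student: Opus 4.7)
The plan is to derive both statements from the key observation already noted in the paragraph immediately preceding the corollary: for every $(\alpha,\beta) \in \mathbb{R}^2 \setminus \{0\}$ and every $\gamma \in \mathbb{R}$, the line $\alpha x_1 + \beta x_2 = \gamma$ meets each connected component of $X$ in at most two preimages under $\mathcal{A}$. Given this, the corollary reduces to classical planar convex geometry.

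For the first assertion, with $X_i$ ($i \geq 1$) a closed real oval, I would first verify that $\mathcal{A}|_{X_i}$ is injective. If $\mathcal{A}(P) = \mathcal{A}(Q)$ for some $P \neq Q$ on $X_i$, then by Lemma~\ref{lem:zeros_dzeta} any linear combination $\alpha x_1 + \beta x_2$ has only two critical points on $X_i$ (a global maximum and minimum); on each of the two arcs of $X_i \setminus \{P,Q\}$ the function would then be strictly monotonic yet take equal values at the two endpoints---a contradiction. Hence $\mathcal{A}(X_i)$ is a simple closed curve in $\mathbb{R}^2$ meeting every straight line in at most two points, and by a classical theorem such a curve bounds a compact convex region, i.e., is a convex oval.

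For the second assertion, each arc $\delta_k \subset X_0$ has its two endpoints among the singular points $\{\alpha, \beta\}$, at which $d\zeta_1$ or $d\zeta_2$ has a simple pole with real nonzero residue; consequently $x_1$ and $x_2$ diverge to $\pm\infty$ there, so $\mathcal{A}(\delta_k)$ is a proper arc in $\mathbb{R}^2$ with both ends at infinity. Injectivity of $\mathcal{A}|_{\delta_k}$ follows from an analogous monotonicity-and-Rolle argument: an interior collision $\mathcal{A}(P)=\mathcal{A}(Q)$ combined with monotonic behavior to $\pm\infty$ on the two ends would force any generic $\alpha x_1 + \beta x_2$ to attain some level value at three distinct parameters, violating the ``at most one critical point of $\zeta_{\alpha,\beta}$ on $\delta_k$'' bound of Lemma~\ref{lem:zeros_dzeta}. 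A proper simple planar arc meeting every affine line in at most two points divides $\mathbb{R}^2$ into two open components, exactly one of which is convex; since the amoeba $\mathcal{A}_{\mathcal{S}}$ lies on the non-convex side, the arc $\mathcal{A}(\delta_k)$ is concave from the amoeba's viewpoint.

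The only nontrivial step is the injectivity argument, which is short but requires invoking the exact critical-point counts from Lemma~\ref{lem:zeros_dzeta}; everything else is standard planar convexity. No further analytic input is needed.
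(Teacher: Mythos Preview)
Your approach is essentially the paper's: the corollary is stated there as an immediate consequence of the sentence preceding it, namely that every line $\alpha x_1+\beta x_2=\gamma$ meets each component of $X$ in at most two preimages. You correctly identify this as the key input and then supply the planar convexity argument that the paper leaves implicit.

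One small gap: your injectivity argument for $X_i$ asserts that $\alpha x_1+\beta x_2$ is strictly monotonic on each of the two arcs of $X_i\setminus\{P,Q\}$. But the two critical points of $\zeta_{\alpha,\beta}$ on $X_i$ need not sit at $P$ and $Q$; monotonicity holds on the arcs between the \emph{critical points}, not between arbitrary $P,Q$. The fix is simpler than what you wrote: if $\mathcal{A}(P)=\mathcal{A}(Q)$ with $P\neq Q$, pick any third point $R\in X_i$ with $\mathcal{A}(R)\neq\mathcal{A}(P)$ (such $R$ exists since $d\zeta_k$ is not identically zero on $X_i$); then the line through $\mathcal{A}(P)$ and $\mathcal{A}(R)$ has at least three preimages $P,Q,R$ in $X_i$, contradicting the two-point bound. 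The same three-point argument works verbatim for $\delta_k$. With this correction your proof is complete and matches the paper's intended reasoning.
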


\begin{corollary}
	$R := \frac{d\zeta_1}{d\zeta_2}$ is a holomorphic function without zeros on $\mathcal{R}_+^\circ$.
\end{corollary}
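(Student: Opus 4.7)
The plan is to show that both differentials $d\zeta_1$ and $d\zeta_2$ are holomorphic and nowhere vanishing on the open upper half $\mathcal{R}_+^\circ = \mathcal{R}_+ \setminus X$. Then $R = d\zeta_1/d\zeta_2$, which a priori is a meromorphic function on the whole of $\mathcal{R}$ (ratio of two meromorphic differentials), restricts to a genuine holomorphic non-vanishing function on $\mathcal{R}_+^\circ$.

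First I would check the pole side. By \eqref{eq:residues_of_dZeta_i}--\eqref{eq:definition_of_dXis} the poles of $d\zeta_1$ are exactly the $2m$ points $\alpha_i^\pm$ and the poles of $d\zeta_2$ are exactly the $2n$ points $\beta_j^\pm$, and all of these lie on the real oval $X_0 \subset X$. Since $\mathcal{R}_+^\circ \cap X = \emptyset$, both $d\zeta_1$ and $d\zeta_2$ are holomorphic on $\mathcal{R}_+^\circ$.

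Next I would handle the zeros by a degree-counting argument using Lemma~\ref{lem:zeros_dzeta}. The canonical divisor on a compact Riemann surface of genus $g$ has degree $2g-2$, so a meromorphic differential with $2m$ simple poles has exactly $2g-2+2m$ zeros counted with multiplicity. Lemma~\ref{lem:zeros_dzeta} locates $2g$ zeros of $d\zeta_1$ on $X_1\cup\ldots\cup X_g$ (two per oval) together with $2m-2$ zeros in the arcs of $X_0 \setminus \{\alpha_i^\pm\}$, summing to $2g-2+2m$. This exhausts all zeros, so none can lie off $X$; in particular none lie in $\mathcal{R}_+^\circ$. The identical argument applied to $d\zeta_2$ (with $2n$ in place of $2m$) shows all of its zeros lie on $X$ as well.

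Combining the two observations, both $d\zeta_1$ and $d\zeta_2$ are holomorphic and non-vanishing on $\mathcal{R}_+^\circ$, hence so is the quotient $R = d\zeta_1/d\zeta_2$. There is no real obstacle here; the corollary is purely a bookkeeping consequence of Lemma~\ref{lem:zeros_dzeta} combined with the fact that the full divisor of poles and zeros of each $d\zeta_k$ is forced onto the real locus $X$, which is the topological boundary of $\mathcal{R}_+^\circ$ in $\mathcal{R}_+$.
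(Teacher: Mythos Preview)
Your proof is correct and follows exactly the argument the paper intends: the corollary is stated without proof as an immediate consequence of Lemma~\ref{lem:zeros_dzeta}, and you have spelled out precisely the zero/pole count (degree of the canonical class is $2g-2$, so $d\zeta_k$ has $2g-2+2m$ respectively $2g-2+2n$ zeros, all of which the lemma places on $X$) that makes this work.
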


The identity $d\zeta_1 = R d\zeta_2$ is equivalent to
	\[ 1 + i \frac{\partial y_1}{\partial x_1} = i R \frac{\partial y_2}{\partial x_1}, \quad i \frac{\partial y_1}{\partial x_2} = R \br{1 + i \frac{\partial y_2}{\partial x_2}}. \]
	Equalizing the real and imaginary parts in these identities one easily arrives at the following formulas for the derivatives:
\begin{equation}
\label{eq:partial_derivatives_R}
 \frac{\partial y_2}{\partial x_1} = -\frac{1}{\myIm R}, \quad \frac{\partial y_2}{\partial x_2} = - \frac{\partial y_1}{\partial x_1} = \frac{\myRe R}{\myIm R}, \quad \frac{\partial y_1}{\partial x_2} = \frac{|R|^2}{\myIm R}. 
 \end{equation}

We denote the open sets
$$
\mathcal{A}^\circ_\mathcal{S}:=\mathcal{A}(\mathcal{R}^\circ_+), \quad \Delta^\circ_\mathcal{S}:=\Delta(\mathcal{R}^\circ_+).
$$
$\Delta_\mathcal{S}^\circ$ is the interior of a polygon with $g$ points removed. The latter are the images under $\Delta$ of the real ovals $X_1,\ldots,X_g$.
The following result is due to Krichever \cite{krichever_amoebas_2014}.  We present its proof for completeness.

\begin{lemma}
	The amoeba map $\mathcal{A}: \mathcal{R}^\circ_+ \to \mathcal{A}^\circ_\mathcal{S}$ and the polygon map $\Delta: \mathcal{R}^\circ_+ \to \Delta^\circ_\mathcal{S}$ are diffeomorphisms.  
\end{lemma}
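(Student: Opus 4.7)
My plan is to establish four properties in sequence: local diffeomorphism, properness onto the image, the covering property, and degree one. The main obstacle is properness at the singular points $\alpha_i^\pm,\beta_j^\pm\in X_0$, where one of the $\zeta_k$ has a logarithmic pole; the remaining steps follow quickly once Lemma~\ref{lem:zeros_dzeta} and Corollary~\ref{cor:convex} are in hand.

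For the local diffeomorphism, in a holomorphic coordinate $z=x+iy$ near $P\in\mathcal{R}_+^\circ$ write $d\zeta_k=a_k\,dz$; a direct calculation gives
\[
dx_1\wedge dx_2 \;=\; dy_1\wedge dy_2 \;=\; \myIm(a_1\overline{a_2})\,dx\wedge dy,
\]
so the Jacobians of $\mathcal{A}$ and $\Delta$ (the latter up to the orthogonal factor $J/\pi$) vanish iff $a_1,a_2$ are real-proportional, equivalently iff $\alpha\,d\zeta_1+\beta\,d\zeta_2$ vanishes at $P$ for some real $(\alpha,\beta)\neq(0,0)$. Lemma~\ref{lem:zeros_dzeta} confines every such zero to $X$, so neither Jacobian vanishes on $\mathcal{R}_+^\circ$.

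For properness, take compact $K\subset\Delta_{\mathcal{S}}^\circ$ and $P_n\in\Delta^{-1}(K)$. Compactness of $\mathcal{R}_+$ yields a subsequence $P_n\to P_\infty\in\mathcal{R}_+$, and I rule out $P_\infty\in X$ case by case. If $P_\infty\in X_i$ for some $i\geq 1$ then $\Delta(P_n)\to\Delta(X_i)$, a puncture of $\Delta_{\mathcal{S}}^\circ$. If $P_\infty$ lies in the interior of an arc $\delta_l\subset X_0$ then $\Delta(P_n)\to\Delta(\delta_l)\in\partial\Delta_{\mathcal{S}}$. If $P_\infty$ is a singular point, the simple-log pole of $\zeta_k$ at $P_\infty$ sends only $x_k\to\pm\infty$ while $y_k$ remains bounded, and every accumulation point of $\Delta(P_n)$ lies on the edge of $\partial\Delta_{\mathcal{S}}$ associated with $P_\infty$. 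Each alternative contradicts $\Delta(P_n)\in K$, so $\Delta^{-1}(K)$ is sequentially compact in $\mathcal{R}_+^\circ$. The same case analysis works for $\mathcal{A}$, with singular points now sending $\mathcal{A}(P_n)$ to infinity along a tentacle.

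A proper local diffeomorphism between connected manifolds of equal dimension is a finite-sheeted covering, so both maps are coverings and it remains to see the degree is $1$. For $g=0$ both targets are simply connected and there is nothing to check. For $g\geq 1$, for each $i\in\cubr{1,\dots,g}$ let $\gamma_i\subset\mathcal{R}_+^\circ$ be a loop parallel to $X_i$ at small inward distance $\epsilon$, parametrized by $u\in X_i$. Since $y_k|_{X_i}=Y_{k,i}$ is constant, the holomorphic expansion $\zeta_k(u+i\epsilon)=X_{k,i}(u)+iY_{k,i}+i\epsilon\,X_{k,i}'(u)+O(\epsilon^2)$, with $X_{k,i}(u):=\myRe\zeta_k(u)$, gives
\[
\Delta(\gamma_i(u))-\Delta(X_i) \;=\; \frac{\epsilon}{\pi}\br{-X_{2,i}'(u),\,X_{1,i}'(u)} + O(\epsilon^2),
\]
which is $\epsilon/\pi$ times the $90^\circ$-rotation of the tangent vector to the convex oval $\mathcal{A}(X_i)$ (Corollary~\ref{cor:convex}); this vector rotates by $\pm 2\pi$ as $u$ rounds $X_i$, so $\Delta\circ\gamma_i$ winds once around $\Delta(X_i)$. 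Consequently $\Delta_*\colon\pi_1(\mathcal{R}_+^\circ)\cong F_g\to\pi_1(\Delta_{\mathcal{S}}^\circ)\cong F_g$ sends the free basis $\cubr{[\gamma_i]}_{i=1}^g$ to a free basis of the target, and being a surjection between free groups of equal rank it is an isomorphism, so the covering degree is $1$. The parallel argument for $\mathcal{A}\circ\gamma_i$, an $O(\epsilon^2)$-perturbation of the oval $\mathcal{A}(X_i)$ traversed once, gives $d_{\mathcal{A}}=1$.
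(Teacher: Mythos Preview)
Your proof is correct and diverges from the paper's after the shared local-diffeomorphism step (both observe that the Jacobian vanishes exactly where some real combination $\alpha\,d\zeta_1+\beta\,d\zeta_2$ does, and Lemma~\ref{lem:zeros_dzeta} confines those zeros to $X$). From there the paper argues global injectivity directly and elementarily: because the ovals $\mathcal{A}(X_i)$ are convex and the arcs $\mathcal{A}(\delta_l)$ concave (Corollary~\ref{cor:convex}), each slice $\{x_1=a\}\cap\mathcal{A}_\mathcal{S}^\circ$ is a disjoint union of intervals; along the corresponding level curves $x_1^{-1}(a)\subset\mathcal{R}_+^\circ$ the coordinate $x_2$ is monotone since the local-diffeomorphism property rules out critical points, and injectivity follows. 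Your route---properness, hence covering, hence degree one via $\pi_1$---is more structural and treats $\mathcal{A}$ and $\Delta$ on an equal footing, but the identification $\pi_1(\Delta_\mathcal{S}^\circ)\cong F_g$ leans on the description of $\Delta_\mathcal{S}^\circ$ as a polygon minus $g$ distinct points, which the paper states just before the lemma without independent justification; to make your argument fully self-contained you could instead read off $d=1$ locally, since your winding computation already shows $\Delta$ restricted to a collar of $X_i$ has degree one onto a punctured neighbourhood of $\Delta(X_i)$, and properness then forces the global covering degree to equal this local degree.
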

%

\begin{proof}
First we show that the set of critical points of the amoeba map coincides with the set of real ovals $X$. At regular points $(x_1,x_2)$ are local coordinates, which yields $dx_1\wedge dx_2\neq 0$. The representation 
$d\zeta_2= i \frac{\partial y_2}{\partial x_1}dx_1+(1+ i \frac{\partial y_2}{\partial x_2})dx_2$ implies 
$d\zeta_2\wedge d\bar{\zeta_2}=2i \frac{\partial y_2}{\partial x_1}dx_1\wedge dx_2$. Using (\ref{eq:partial_derivatives_R}) for the partial derivative we obtain
$$ 
2dx_1\wedge dx_2=i \myIm R\ d\zeta_2\wedge d\bar{\zeta_2}.
$$ 
We know that all zeros of $d\zeta_2$ are in $X$, therefore a point $P_0\in \mathcal{R}_+^\circ$ is critical if and only if $R(P_0)\in \mathbb{R}$. Obviously $d\zeta :=d\zeta_1-R(P_0) d\zeta_2$ vanishes at $P_0$. On the other hand, all $2g+2$ zeros of $d\zeta$ must lie in $X$. Indeed, each oval $X_i, i=1,\ldots, g$ contains 2 zeros because of the vanishing periods, and $X_0$ also contains 2 zeros between the poles with the residues of the same sign. Thus there are no critical points in $\mathcal{R}_+^\circ$.

The map $\mathcal{A}$ is injective on $X_0$ as well as on each $X_i, i=1,\ldots, g$ because of the monotonicity of $d\zeta_i$ on the corresponding components of $X$, see Lemma~\ref{lem:zeros_dzeta}. Due to Corollary~\ref{cor:convex} the set $\{(x_1,x_2)\in \mathcal{A}_{\mathcal{S}}^\circ | x_1=a \}$ consists of disjoint intervals. Since $\mathcal{A}$ is a local diffeomorphism on $\mathcal{R}_+^\circ$, the derivative of $x_2$ on the segments of the level line $x_1^{-1}(a)\subset \mathcal{R}_+^\circ$ cannot vanish, thus the function $x_2$ is monotonic. This proves the injectivity of $\mathcal{A}$. 

For the polygon map the proof is the same.
\end{proof}

We see that both $(x_1, x_2) \in \mathcal{A}_\mathcal{S}$ and $(y_1, y_2) \in J^{-1}\Delta_\mathcal{S}$ as well as $(s_1, s_2) \in \Delta_\mathcal{S}$ can be treated as global variables on $\mathcal{R}^\circ_+$, and one obtains well defined functions
\[ y_1(x_1, x_2), y_2(x_1, x_2); \quad x_1(y_1, y_2), x_2(y_1, y_2). \]
Note that $\zeta_k$ and therefore $x_k$ are defined up to real constants that correspond to choice of $Q$. This is just a global shift of the amoeba map.

Let us introduce the function
\[ h(P) :=\frac{1}{\pi }\myIm \int_\ell \zeta_2 d\zeta_1 = \frac{1}{2\pi i} \int_{\mathcal L} \zeta_2 d\zeta_1
\]
on $\mathcal{R}_+ \setminus \cubr{\alpha, \beta}$. Here $\ell\subset\mathcal{R}_+$ as a path from the boundary $Q\in X_0$ to $P$.
The path $\mathcal{L}=-\tau\ell + \ell $ from $\tau P \in \mathcal{R}_-$ to $P \in \mathcal{R}_+$ has the symmetry $\tau \mathcal{L}= -\mathcal{L}$. 
This is a real valued function. It vanishes on the external boundary component $\delta_\mathcal{L}\ni Q$ of $\mathcal{A}_\mathcal{S}$ which is crossed by $\mathcal{L}$. Functions $h$ that correspond to different choices of $\delta_\mathcal{L}$ differ by constants.

Considering $(x_1, x_2)$ as variables on $\mathcal{R}_+ \setminus \cubr{\alpha, \beta}$ we obtain
\[ d\zeta_1 = \br{ 1 + i \frac{\partial y_1}{\partial x_1}} dx_1 + i \frac{\partial y_1}{\partial x_2} dx_2. \]
This implies the following formulas for the derivatives of $h$:

\begin{align*}
	\frac{\partial h}{\partial x_1} &= \frac{1}{\pi} \myIm \br{\br{x_2 + i y_2} \br{1 + i \frac{\partial y_1}{\partial x_1}}} = \frac{1}{\pi} \br{x_2 \frac{\partial y_1}{\partial x_1} + y_2}, \\
	\frac{\partial h}{\partial x_2} &= \frac{1}{\pi} x_2 \frac{\partial y_1}{\partial x_2}.
\end{align*}

Similarly in the coordinates $(y_1, y_2)$ one obtains
\[d\zeta_1 = \br{ i +  \frac{\partial x_1}{\partial y_1}} dy_1 +  \frac{\partial x_1}{\partial y_2} dy_2 .\]
and thus

\begin{align*}
	\frac{\partial h}{\partial y_1} &= \frac{1}{\pi} \br{y_2 \frac{\partial x_1}{\partial y_1} + x_2}, \\
	\frac{\partial h}{\partial y_2} &= \frac{1}{\pi} y_2 \frac{\partial x_1}{\partial y_2}.
\end{align*}

Introducing 
\begin{equation}
	\label{eq:def_rho}
	\rho(x_1, x_2) := -h(P) + \frac{1}{\pi} x_2 y_1
\end{equation}
and its Legendre dual
\begin{equation}
	\label{eq:def_rhoTilde}
	\sigma(y_1, y_2) := h(P) - \frac{1}{\pi} x_1 y_2 = -\rho(x_1, x_2) + \frac{1}{\pi}\br{x_2 y_1 - x_1 y_2},
\end{equation}
we get

\begin{equation}
	\label{eq:partial_derivatives_of_rhos}
	\begin{alignedat}{2}
		\frac{\partial \rho}{\partial x_1} &= -\frac{1}{\pi} y_2, \quad &&\frac{\partial \rho}{\partial x_2} = \frac{1}{\pi} y_1, \\
		\frac{\partial \sigma}{\partial y_1} &= \frac{1}{\pi} x_2, \quad &&\frac{\partial\sigma}{\partial y_2} = -\frac{1}{\pi} x_1,
	\end{alignedat}
\end{equation}

More symmetric representations for $\rho$ and $\sigma$ are given by

\begin{align}
	\rho(x_1, x_2) &= H(P) + \frac{1}{2 \pi} \br{x_2 y_1 - x_1 y_2} \\
	\label{eq:rho_sigma_Legendre_duality}
	\sigma(y_1, y_2) &= -H(P) + \frac{1}{2 \pi} \br{x_2 y_1 - x_1 y_2} = -\rho(x_1, x_2) + \frac{1}{\pi} \br{x_2 y_1 - x_1 y_2}
\end{align}
with 
\[H(P) = \frac{1}{4\pi i} \int_\mathcal{L} \zeta_1 d\zeta_2 - \zeta_2 d\zeta_1=\frac{1}{2\pi}\myIm \int_\ell \zeta_1 d\zeta_2 - \zeta_2 d\zeta_1
.\]
Note that in $(s_1, s_2)$ coordinates this recovers the standard Legendre duality
\begin{equation}
	\label{eq:rho_sigma_Legendre_duality_s_coords}
	\sigma(s_1, s_2) = -\rho(x_1, x_2) + x_1 s_1 + x_2 s_2.
\end{equation}

\begin{theorem}
	Both $\rho(x_1, x_2)$ and $\sigma(y_1, y_2)$ are strictly convex functions on $\mathcal{A}_\mathcal{S}^\circ$ and $J^{-1}\Delta_\mathcal{S}^\circ$ respectively. Their Hessians are equal to 
	\[ \Hess (\rho) = \Hess (\sigma) = \frac{1}{\pi \myIm R} \begin{pmatrix}
		1 & -\myRe R \\
		- \myRe R & |R|^2
	\end{pmatrix} .\]
\end{theorem}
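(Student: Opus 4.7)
The plan is to compute each Hessian directly from the first-derivative formulas in \eqref{eq:partial_derivatives_of_rhos}. For $\rho$, differentiating $\partial\rho/\partial x_1=-y_2/\pi$ and $\partial\rho/\partial x_2=y_1/\pi$ once more yields
$$\Hess(\rho)=\frac{1}{\pi}\begin{pmatrix} -\partial y_2/\partial x_1 & -\partial y_2/\partial x_2 \\ \partial y_1/\partial x_1 & \partial y_1/\partial x_2 \end{pmatrix}.$$
Substituting the four identities in \eqref{eq:partial_derivatives_R} then produces the claimed matrix verbatim. Note that symmetry of $\Hess(\rho)$ is exactly the relation $\partial y_1/\partial x_1 = -\partial y_2/\partial x_2$ already recorded in \eqref{eq:partial_derivatives_R}, so no separate integrability check is required.

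For $\sigma$ I would take one of two equivalent routes. Route (i): invert the Jacobian of $(x_1,x_2)\mapsto(y_1,y_2)$, whose determinant one reads off from \eqref{eq:partial_derivatives_R} as $(|R|^2-(\myRe R)^2)/(\myIm R)^2 = 1$; the resulting closed-form expressions for $\partial x_i/\partial y_j$ plug into the analogue of the Hessian formula built from $\partial\sigma/\partial y_1=x_2/\pi$ and $\partial\sigma/\partial y_2=-x_1/\pi$. Route (ii): use the Legendre duality \eqref{eq:rho_sigma_Legendre_duality_s_coords} in the $(s_1,s_2)$ coordinates to conclude $\Hess_s(\sigma)=\bigl(\Hess_x(\rho)\bigr)^{-1}$, and then conjugate by the Jacobian of the linear change of coordinates $(s_1,s_2)=\tfrac{1}{\pi}(-y_2,y_1)$. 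Both routes yield the same matrix as for $\rho$.

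Strict convexity then reduces to positive definiteness of $\tfrac{1}{\pi\myIm R}\,M$, where
$$ M=\begin{pmatrix} 1 & -\myRe R \\ -\myRe R & |R|^2 \end{pmatrix}. $$
The form $v^{\mathrm T} M v = (v_1-\myRe R\cdot v_2)^2 + (\myIm R)^2 v_2^2$ is strictly positive whenever $\myIm R\neq 0$ and $v\neq 0$, and $\det\Hess(\rho) = 1/\pi^2$ confirms nondegeneracy. The main obstacle is to pin down the sign of $\myIm R$ on $\mathcal{R}_+^\circ$. Since $R=d\zeta_1/d\zeta_2$ is holomorphic and non-vanishing on the connected set $\mathcal{R}_+^\circ$, and since the argument in the proof of the diffeomorphism lemma shows $R$ is never real there (a real value at $P_0$ would make $d\zeta_1-R(P_0)\,d\zeta_2$ vanish at an interior point of $\mathcal{R}_+^\circ$, contradicting the count of zeros in Lemma~\ref{lem:zeros_dzeta}), the sign of $\myIm R$ is constant. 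Positivity is then a matter of fixing the orientation of $\mathcal{R}_+$ consistently with the clustering condition \eqref{clusters sq}, which controls the local sign of $R$ near the tentacles $\alpha^\pm_i,\beta^\pm_j$; this is the same orientation making $\mathcal{A}$ and $\Delta$ orientation-preserving diffeomorphisms onto $\mathcal{A}_\mathcal{S}^\circ$ and $\Delta_\mathcal{S}^\circ$.
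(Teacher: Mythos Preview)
Your approach is essentially the paper's: compute the Hessian entries of $\rho$ by differentiating \eqref{eq:partial_derivatives_of_rhos} and substituting \eqref{eq:partial_derivatives_R}, then note that the computation for $\sigma$ is analogous (the paper simply says ``We omit the computation for the Legendre dual, which is the same''). Your treatment is in fact more complete than the paper's, which states the Hessian formula but does not explicitly verify positive definiteness or the sign of $\myIm R$; your argument that $\myIm R$ has constant sign on $\mathcal{R}_+^\circ$ (via the diffeomorphism lemma) and the quadratic-form check $v^{\mathrm T}Mv=(v_1-\myRe R\,v_2)^2+(\myIm R)^2 v_2^2$ fill exactly that gap.
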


\begin{proof}
Formulas for the coefficients of 
	\[ \Hess(\rho) = \begin{pmatrix}
		\frac{\partial^2 \rho}{\partial x_1^2} & \frac{\partial^2 \rho}{\partial x_1 \partial x_2} \\
		\frac{\partial^2 \rho}{\partial x_1 \partial x_2} & \frac{\partial^2 \rho}{\partial x_2^2}
	\end{pmatrix}. \]
follow directly from \eqref{eq:partial_derivatives_of_rhos} and \eqref{eq:partial_derivatives_R}. 
		
We omit the computation for the Legendre dual, which is the same.
\end{proof}

\begin{proposition}
	The function $\rho: \mathcal{A}_\mathcal{S} \to \mathbb{R}$ can be continuously differentiably extended to the whole $\mathbb{R}^2$ plane (i.e. to the exterior of the amoeba and its islands) by affine functions. It vanishes on the exterior component of $\mathbb{R}^2 \setminus \mathcal{A}_\mathcal{S}$ bounded by $\mathcal{A}(\delta_\mathcal{L})$, which is crossed by $\mathcal{L}$.
\end{proposition}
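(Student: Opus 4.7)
The plan is to extend $\rho$ componentwise, using the fact that the partial derivatives computed in \eqref{eq:partial_derivatives_of_rhos} are $\frac{\partial \rho}{\partial x_1} = -y_2/\pi$ and $\frac{\partial \rho}{\partial x_2} = y_1/\pi$, together with the observation that $y_1$ and $y_2$ are locally constant on $X \setminus \{\alpha,\beta\}$. The boundary of $\mathcal{A}_\mathcal{S}$ consists of the concave arcs $\mathcal{A}(\delta_k)$ and the convex ovals $\mathcal{A}(X_i)$. Since $d\zeta_1$ and $d\zeta_2$ are real on $X$, integrating along any connected piece $c$ of $X \setminus \{\alpha,\beta\}$ gives a purely real increment, so $y_k = \myIm \zeta_k$ takes a single constant value $y_k^{(c)}$ on $c$ (with possible jumps by $\pm\pi$ when one crosses a pole).

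First I would check that under the amoeba map $\mathcal{A}$, each connected component $U$ of $\mathbb{R}^2 \setminus \mathcal{A}_\mathcal{S}$ is bordered by exactly one such component $\mathcal{A}(c)$: an interior ``eye'' is bordered by a single oval $\mathcal{A}(X_i)$, while an exterior region between two consecutive tentacles is bordered by a single arc $\mathcal{A}(\delta_k)$. On $\mathcal{A}(c)$ the restriction of $\rho$ is itself affine in $(x_1,x_2)$, because parameterizing the boundary by $t \mapsto \mathcal{A}(P(t))$ with $P(t) \in c$ gives
\[ \frac{d}{dt}\rho(\mathcal{A}(P(t))) \;=\; -\frac{y_2^{(c)}}{\pi}\,\dot x_1(t) + \frac{y_1^{(c)}}{\pi}\,\dot x_2(t), \]
which integrates to a linear function of $(x_1,x_2)$. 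I then extend $\rho$ to $U$ as the unique affine function with gradient $\frac{1}{\pi}(-y_2^{(c)},\,y_1^{(c)})$ that coincides with $\rho$ on $\mathcal{A}(c)$. By construction the extension is continuous across $\mathcal{A}(c)$, and its gradient matches the limit of $\nabla \rho$ from inside $\mathcal{A}_\mathcal{S}^\circ$, since \eqref{eq:partial_derivatives_of_rhos} and the constancy of $y_k$ on $c$ force $\nabla \rho \to \frac{1}{\pi}(-y_2^{(c)},y_1^{(c)})$ as the boundary is approached. Thus the global extension is $C^1$ on all of $\mathbb{R}^2$.

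For the vanishing on the component bounded by $\mathcal{A}(\delta_\mathcal{L})$, recall that $\delta_\mathcal{L}$ is the component of $X_0 \setminus \{\alpha,\beta\}$ that contains the base point $Q$ crossed by $\mathcal{L}$. On $\delta_\mathcal{L}$ the integrals $\zeta_k = \int_Q^P d\zeta_k$ are purely real (the path stays on the real oval with no poles crossed), so $y_1^{(\delta_\mathcal{L})} = y_2^{(\delta_\mathcal{L})} = 0$. Moreover, the integrand $\zeta_2\,d\zeta_1$ is then real along $\delta_\mathcal{L}$, giving $h \equiv 0$ there, and hence $\rho = -h + \frac{1}{\pi}x_2 y_1 \equiv 0$ on $\mathcal{A}(\delta_\mathcal{L})$. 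Since the slopes are zero as well, the affine extension on this external component is identically zero.

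The main obstacle is bookkeeping: one must identify the correct constants $y_k^{(c)}$ on each boundary piece (they are essentially the $b$-periods on the interior ovals $X_i$ and integer multiples of $\pi$ accumulated by crossing the poles $\{\alpha,\beta\}$ on $X_0$), and verify that the extension constructed separately on each component of $\mathbb{R}^2 \setminus \mathcal{A}_\mathcal{S}$ glues to a single globally $C^1$ function. The consistency of value and gradient across each boundary arc follows from the derivative formulas, and the diffeomorphism property $\mathcal{A}: \mathcal{R}_+^\circ \to \mathcal{A}_\mathcal{S}^\circ$ ensures the limits from inside $\mathcal{A}_\mathcal{S}$ agree with the boundary values computed on $X$.
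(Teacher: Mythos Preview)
Your proof is correct and follows essentially the same approach as the paper: both rest on the observation that $d\zeta_k$ is real on $X$, so $y_k=\myIm\zeta_k$ is constant on each connected component of $X\setminus\{\alpha,\beta\}$, whence by \eqref{eq:partial_derivatives_of_rhos} the gradient of $\rho$ is constant on each boundary arc of $\mathcal{A}_\mathcal{S}$ and the affine extension is forced. Your write-up is in fact more explicit than the paper's (which is quite terse), spelling out the one-to-one correspondence between boundary pieces and complementary components, the $C^1$ gluing, and the vanishing on $\mathcal{A}(\delta_\mathcal{L})$.
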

\begin{proof}
Since both differentials $d\zeta_k$ are real on $X$ the imaginary components $y_1, y_2$ are constant on every arc $\delta$ of $X_0$ built by the points $\alpha, \beta$. Wenn we pass one of these points the corresponding value of $y_1$ or $y_2$ changes by $\pi$. On the oval $X_1,\ldots,X_g$ the imaginary components $y_1, y_2$ are also constant since they are the b-periods (\ref{eq:b-periods}).
Thus the gradient of $\rho$ on each component of the boundary $\mathcal{A}(X)$ of amoeba is constant (\ref{eq:partial_derivatives_of_rhos}), which completes the proof.
\end{proof}

\begin{definition}
	The function $\rho: \mathbb{R}^2 \to \mathbb{R}$ is called the \emph{Ronkin function} of the Harnack data $\mathcal{S}$.
\end{definition}

The notion of a Ronkin function of Harnack data was originally introduced by Krichiver in \cite{krichever_amoebas_2014} as an integral of a two-form over the lower left quadrant in $\mathbb{R}^2$ with the corner at $(x_1, x_2)$. Our representation via an integral of a holomorphic one form is more convenient for investigation and computation (see Section~\ref{sec:12_computation}).

\subsection{Doubly periodic weights and identification with free energy} 

Consider the case when the weights are doubly periodic. The graph $G_1$ on a torus contains $m$ pairs of (horizontal) $\alpha_i^\pm$ train tracks and $n$ pairs of (vertical) $\beta_j^\pm$ train tracks. The periodicity condition~\eqref{ch07:eq_periodicDiracQuadGrid} reads as
\[ \sum_{i=1}^m (A(\alpha_i^-)-A(\alpha_i^+)) = 0, \quad \sum_{j=1}^n (A(\beta_j^-)-A(\beta_j^+)) = 0 \] 
The differentials $d\zeta_1, d\zeta_2$ are given by
\[ d\zeta_1 = \frac{dz}{z}, \quad d\zeta_2 = \frac{dw}{w}, \]
where $z$ and $w$ are meromorphic functions \eqref{monodromies sq} on $\mathcal{R}$.

The Kasteleyn operator $K(z,w)$ with monodromies $z,w$ determines the spectral curve $ \mathcal{P}(z,w) = 0$, where 
\[ \mathcal{P}(z,w) = \det K(z,w). \]
Its Riemann surface is $\mathcal{R}$.

Consider the graph $G_N$ on a torus which is the $N \times N$ covering of $G_1$ (see Fig.~\ref{fig:regularizationGraph}) with the same weights $\nu$ as of $G_1$. The thermodynamic limit of the free energy is defined as 
\[ \log Z(\nu) := \lim_{n \to \infty} \frac{1}{N^2} \log Z(G_N, \nu). \]
This limit exists and depends on the weights $\nu$ on $G_1$. In our case we denote it
\[ E(\mathcal{R}, \cubr{\alpha, \beta}) = \log Z(\nu) \]
since it is a function of the Harnack data. Note that it is independent of the free vector $Z \in J(\mathcal{R})$ which enters the formulas \eqref{Fock coeff} for the weights.

In \cite{Kenyon_Okounkov_Sheffield_2006} it was shown that for doubly-periodic weights $E = F(0,0)$, where
\[ F(x_1, x_2) := \frac{1}{(2\pi i)^2} \int_{\mathbb{T}^2} \log |\mathcal{P}(e^{x_1} z, e^{x_2} w)| \frac{dz}{z} \frac{dw}{w}. \]
Here the integration is taken over the torus
\[ \mathbb{T}^2 = \cubr{(z,w) \in \mathbb{C}^2 | \; |z| = |w| = 1}. \]
The curve $\mathcal{P}(z,w) = 0$ is a Harnack curve. Its Riemann surface is an M-curve $\mathcal{R}$.

The classical amoeba map $\mathcal{R}_+^\circ \to \mathbb{R}^2$ is given by $(\log |z|, \log |w|)$. Comparing to the one from Definition~\ref{def:amoeba_map_polygon_map} it is shifted by the values of $z$ and $w$ at the starting integration point $Q$ of $\zeta_k$:
$$
\log |z(P)|-\log |z(Q)|=\zeta_1(P), \quad \log |w(P)|-\log |w(Q)|=\zeta_2(P).
$$
Its image $\mathcal{A}$ is $\mathcal{A}_\mathcal{S}$ shifted by the same constant.

Generally $F(x_1, x_2)$ is interpreted as the free energy of the dimer model on a torus with magnetic field. The behaviour of the model qualitatively depends on the value of the magnetic field:
\begin{enumerate}
	\item For $(x_1, x_2) \in \mathcal{A}_\mathcal{S}$ the model is in the \emph{liquid} phase.
	\item For $(x_1, x_2)$ in the infinite exterior domain of the amoeba the model is in the \emph{solid} phase.
	\item For $(x_1, x_2)$ in the amoeba islands the model is in the \emph{gas} phase.
\end{enumerate}

The function $F(x_1, x_2)$ is the classical \emph{Ronkin function of a Harnack curve}. It is a continuously differentiable convex function on $\mathbb{R}^2$, which is piecewise affine on $\mathbb{R}^2 \setminus \mathcal{A}$ and smooth and strictly convex on $\mathcal{A}$.

\begin{theorem}
\label{thm:Our_and_periodic_ronkin_agree}
	$F(x_1, x_2) = \rho(x_1-c_1, x_2-c_2) + \log|\mathcal{P}(0,0)|, \quad c_1=\log |z(Q)|, c_2=\log |w(Q)|.$
\end{theorem}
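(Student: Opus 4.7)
The plan is to identify the two sides by matching gradients on the liquid phase (the interior of the classical amoeba) and then pinning down the additive constant from the asymptotic behaviour at one corner of the Newton polygon.

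First, a direct integration of \eqref{monodromies sq} using $d\zeta_1=dz/z$, $d\zeta_2=dw/w$ shows that the classical amoeba map $P\mapsto(\log|z(P)|,\log|w(P)|)$ equals $\mathcal A$ shifted by $(c_1,c_2)$. With the base point $Q\in X_0$ chosen so that $z(Q),w(Q)\in\mathbb R_{>0}$, one also has $\myIm\zeta_1(P)=\arg z(P)$ and $\myIm\zeta_2(P)=\arg w(P)$. Consequently, for each $(x_1,x_2)$ in the interior of the classical amoeba, the KOS intersection point $P^{*}\in\mathcal R_+$ of the spectral curve with $\{|z|=e^{x_1},|w|=e^{x_2}\}$ coincides with the unique preimage $P\in\mathcal R_+^\circ$ of $(x_1-c_1,x_2-c_2)$ under the diffeomorphism $\mathcal A$.

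Next, I would compare gradients on the liquid phase. The classical Kenyon--Okounkov--Sheffield residue computation for $F$ (using that the spectral curve $\mathcal P=0$ is Harnack, so that the intersection with $\{|z|=e^{x_1},|w|=e^{x_2}\}$ is a single complex-conjugate pair) yields
\[\nabla F(x_1,x_2)=\tfrac{1}{\pi}\bigl(-\arg w(P),\,\arg z(P)\bigr)=\tfrac{1}{\pi}\bigl(-y_2(P),\,y_1(P)\bigr).\]
On the other hand, \eqref{eq:partial_derivatives_of_rhos} gives exactly $\nabla\rho(x_1-c_1,x_2-c_2)=\tfrac{1}{\pi}(-y_2(P),y_1(P))$. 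Thus $F(x_1,x_2)-\rho(x_1-c_1,x_2-c_2)$ is locally constant on the liquid phase. Since both functions are globally $C^1$ on $\mathbb R^2$ and piecewise affine off $\mathcal A_{\mathcal S}$, with affine slopes on the gas islands and on each unbounded component determined by the same data (the $b$-periods on the interior ovals $X_1,\ldots,X_g$ and the constant values of $y_1,y_2$ on the arcs $\delta_i$ of $X_0$), the difference extends to a single global constant on $\mathbb R^2$.

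To identify this constant, I would take $(x_1,x_2)\to(-\infty,-\infty)$. Then $\mathcal P(e^{x_1}z,e^{x_2}w)\to\mathcal P(0,0)$ uniformly in $(z,w)\in\mathbb T^2$, so $F(x_1,x_2)\to\log|\mathcal P(0,0)|$. Simultaneously, $(x_1-c_1,x_2-c_2)$ escapes inside the unbounded component of $\mathbb R^2\setminus\mathcal A_{\mathcal S}$ adjacent to the vertex $(0,0)$ of $\Delta_{\mathcal S}$; this is the component on which $\rho$ is implicitly normalized to vanish, by the choice of $\delta_{\mathcal L}$ in the definition of $h$. Hence $\rho(x_1-c_1,x_2-c_2)\to 0$, giving the constant $\log|\mathcal P(0,0)|$.

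The main obstacle is establishing the gradient formula for $F$ in the precise normalization used here, including the rotation $J$ and the factor $1/\pi$. One must verify that the integer ambiguities from the branches of $\arg z,\arg w$ along the torus $\mathbb T^2$ combine so that in the liquid phase the result is exactly $(-y_2/\pi,y_1/\pi)$, with $y_k$ continuous in $(x_1,x_2)$ and jumping by $\pi$ across the appropriate tentacles. This is precisely the KOS residue computation specialized to Harnack curves; all remaining steps (the $C^1$ extension, matching affine pieces on the complement of the amoeba, and the $(-\infty,-\infty)$ limit) are routine bookkeeping.
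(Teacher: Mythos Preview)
Your proposal is correct and close in spirit to the paper's argument, but it operates one derivative level below. The paper matches the \emph{Hessians} of $F$ and $\rho$ on the liquid phase, invoking the Passare--Rullg{\aa}rd/Krichever computation of the Monge--Amp\`ere measure of the Ronkin function; this shows $F-\rho$ is affine, and the affine part is then killed by the same $(x_1,x_2)\to(-\infty,-\infty)$ limit you use. You instead match the \emph{gradients} via the Kenyon--Okounkov--Sheffield residue formula $\nabla F=\tfrac{1}{\pi}(-\arg w,\arg z)$, which coincides with \eqref{eq:partial_derivatives_of_rhos}; this directly gives a locally constant difference on $\mathcal A_{\mathcal S}^\circ$, hence a global constant since $\mathcal A_{\mathcal S}^\circ\cong\mathcal R_+^\circ$ is connected and both functions are $C^1$ on $\mathbb R^2$. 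Your route is slightly more economical (no need to argue away the linear part), at the cost of having to check the exact normalization of the KOS gradient formula in the conventions of this paper, which you correctly flag as the one nontrivial step. The endgame---identifying the constant via $\lim F=\log|\mathcal P(0,0)|$ and $\rho\equiv 0$ on the chosen unbounded component---is identical to the paper's.
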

\begin{proof}
	As it was shown in \cite{passare_amoebas_2004}, the Hessian of $F(x_1, x_2)$ computed at the point $P$, i.e. at $(x_1,x_2)=(\log |z(P)|, \log |w(P)|)$ is given by $\Hess(\rho(P))$, see also \cite{krichever_amoebas_2014}. Thus the difference $F - \rho$ is an affine function $a_1 x_1 + a_2 x_2 + a_3$. With our choice the integration contour $\mathcal{L}$ intersecting the component $(\alpha_m^-, \beta_1^-)$, $\rho$ identically vanishes on the infinite domain of $\mathbb{R}^2 \setminus \mathcal{A}_\mathcal{S}$ bounded by $\mathcal{A}((\alpha_m^-, \beta_1^-))$. In the limit we have
	\[ \lim_{\begin{smallmatrix} x_1 \to -\infty & \\ x_2 \to -\infty \end{smallmatrix}} F(x_1, x_2) = \log |\mathcal{P}(0,0)|. \]
Thus $a_1=a_2=0$.
\end{proof}
\begin{rem}
	We note that $\mathcal{P}(0,0)$ is the partition function of the finite planar graph we get if for fixed $i,j$ we remove the edges crossing the train tracks $\alpha_i^+$ or $\beta_j^+$ from the graph $G_1$ on a torus. This partition function is independent of the choice of $i,j$ but does depend on the cut more generally. In this sense $\rho$ is a correction term from the finite planar case to the thermodynamic limit on a torus.
\end{rem}

From \eqref{eq:definition_of_dXis} we obtain explicit formulas for $\rho$ and $\sigma$ in the form of a sum over the edges of $G_1$. Note the similarity of these functions when represented in terms of the Riemann surface.
\begin{align}
\label{eq:rho_explicit}
	&\rho(x_1, x_2) = \sum_{i=1}^m \sum_{j=1}^n \br{-\frac{1}{ \pi } \myIm \int_\ell \zeta^{\beta_j} d\zeta^{\alpha_i} + \frac{1}{\pi} \myRe \zeta^{\beta_j}(P) \myIm \zeta^{\alpha_i}(P) } \\
	&= \sum_{i=1}^m \sum_{j=1}^n \br{- \frac{1}{2 \pi } \myIm\int_\ell \zeta^{\beta_j} d\zeta^{\alpha_i} - \zeta^{\alpha_i} d\zeta^{\beta_j} + \frac{1}{2\pi} \br{\myRe \zeta^{\beta_j}(P) \myIm \zeta^{\alpha_i} - \myRe \zeta^{\alpha_i}(P) \myIm \zeta^{\beta_j}(P)} } \nonumber
\end{align}
\begin{align}
\label{eq:sigma_explicit}
	&\sigma(s_1, s_2) = \sum_{i=1}^m \sum_{j=1}^n \br{\frac{1}{ \pi } \myIm \int_\ell \zeta^{\beta_j} d\zeta^{\alpha_i} - \frac{1}{\pi} \myRe \zeta^{\alpha_i}(P) \myIm \zeta^{\beta_j}(P) }\\
	&= \sum_{i=1}^m \sum_{j=1}^n \br{ \frac{1}{2 \pi } \myIm\int_\ell \zeta^{\beta_j} d\zeta^{\alpha_i} - \zeta^{\alpha_i} d\zeta^{\beta_j} + \frac{1}{2\pi} \br{\myRe \zeta^{\beta_j}(P) \myIm \zeta^{\alpha_i} - \myRe \zeta^{\alpha_i}(P) \myIm \zeta^{\beta_j}(P)} }. \nonumber
\end{align}



\section{Height Function}
\label{sec:9_heightFunction}



Let $G$ be an infinite $\mathbb{Z}^2$-invariant bipartite planar graph with doubly periodic dimer weights and a finite fundamental domain $G_1$ which is embedded in the unit square and let $G_N$ be its $N \times N$ covering. A Gibbs measure on $G$ is a probability measure $\mu(G)$ on infinite dimer configurations with the property that if we fix a matching on a topological annulus the matching inside it is independent of the matching outside. Furthermore the measure restricted to the inside of the annulus is the Boltzmann measure defined in \eqref{eq:def_Boltzmann_measure}.

It was shown in \cite{sheffield_random_2006} that for any $(s_1, s_2) \in \Delta_\mathcal{S}$ there exists a unique ergodic Gibbs measure $\mu_{(s_1, s_2)}$ and that the measures $\mu(G_N)$ conditioned on having height change 
\begin{equation}
	\label{eq:definition_height_change}
	(h(f + (0,N)) - h(f), h(f + (-N,0)) - h(f)) = \br{\lfloor N s_1 \rfloor, \lfloor N s_2 \rfloor }
\end{equation}
converge to $\mu_{(s_1, s_2)}$ as $N$ goes to infinity.

Let now $\mathcal{M}_{(s_1, s_2)}(G_N)$ be the set of all dimer configurations with this given height change on $G_N$ and
\[ Z_{(s_1, s_2)}(G_N) = \sum_{M \in \mathcal{M}_{(s_1, s_2)}(G_N)} \prod_{e \in M} \nu(e) \]
be the partition function of $\mu_{(s_1, s_2)}(G_N)$. The surface tension is then defined as a function of $(s_1, s_2)$ through the free energy per fundamental domain of $\mu_{(s_1, s_2)}$:
\[ -\varsigma(s_1, s_2) = \log Z_{(s_1, s_2)} = \lim_{N \to \infty} \frac{1}{N^2} \log Z_{(s_1, s_2)}(G_N).\]

It was shown in \cite{Kenyon_Okounkov_Sheffield_2006} that this surface tension is the Legendre dual of the Ronkin function $F$. That is

\[ F(x_1, x_2) = \max_{(s_1, s_2)} \br{-\varsigma(s_1, s_2) + x_1 s_1 + x_2 s_2}. \]

\begin{rem}
	For clarity we note here that our setup differs from that of \cite{Kenyon_Okounkov_Sheffield_2006} by a rotation of all coordinates by $\pi$. In their work the monodromies $z,w$ go in the directions $(0,-1), (1,0)$ and the height function change is considered in the directions $(1,0), (0,1)$.
\end{rem}

\subsection{Regularized convergence of the surface tension}

We have seen in Theorem~\ref{thm:Our_and_periodic_ronkin_agree} that $F = \rho$ up to a shift of the argument and an additive constant in the case of doubly periodic weights. Since $\varsigma, \sigma$ are Legendre duals of $F, \rho$ respectively we obtain that they coincide up to an affine function
\begin{equation}
\label{eq:affine_sigma}
\varsigma(s_1,s_2)=\sigma(s_1,s_2) +b_1 s_1+ b_2 s_2 + b_3. 
\end{equation}
This affine function dissapears in the variational description of the height function (see Remark~\ref{re:affine_sigma}), and one obtains an equivalence class $[\sigma]$ of surface tension functions that differ by affine functions. We consider particular representatives in the equivalence class to formulate the regularized convergence results for $\sigma$.

Let $\zeta_1, \zeta_2$ in general (quasi-periodic) case be given by
\[ \zeta_1 = \sum_{i=1}^m \zeta^{\alpha_i}, \; \zeta_2 = \sum_{j=1}^n \zeta^{\beta_j}. \]
Let $N$ be the parameter of the thermodynamic limit and $L$ fixed (See Lemma~\ref{lemma clusters}) with $\br{\gamma^\pm_1, \ldots, \gamma^\pm_L} \in \ell_\alpha^\pm, \; \br{\delta^\pm_1, \ldots, \delta^\pm_L} \in \ell_\beta^\pm$ clustered points (See Fig.~\ref{fig:clustering_for_regularization}) satisfying
\begin{align*}
	N \sum_{i=1}^m \br{\alpha_i^- - \alpha_i^+} + \sum_{l=1}^L \br{\gamma_l^- - \gamma_l^+} &= 0\\
	N \sum_{j=1}^m \br{\beta_i^- - \beta_i^+} + \sum_{l=1}^L \br{\delta_l^- - \delta_l^+} &= 0
\end{align*}

Now let us consider the regularized graph $\hat{G}_N$ with the Fock weights defined by the train tracks $\alpha, \beta, \gamma, \delta$ (See Fig.~\ref{fig:regularizationGraph}). Note that the weights on various copies of $G_1$ in $\hat{G}_N$ generically do not coincide (they do under the periodicity condition \eqref{ch07:eq_periodicDiracQuadGrid}).

The integrals
\begin{align*}
	\zeta_1^N &= N\sum_{i=1}^m \zeta^{\alpha_i} + \sum_{l=1}^L \zeta^{\gamma_l}, \\
	\zeta_2^N &= N\sum_{j=1}^n \zeta^{\beta_j} + \sum_{l=1}^L \zeta^{\delta_l}. 
\end{align*}
correspond to periodic weights on $\hat{G}_N$. Denote their real and imaginary parts by
\[ \zeta_k^N(P) = x_k^N + i y_k^N, \; i=1,2 \]
and again introduce $s_1^N, s_2^N := J(y_1^N, y_2^N)$. The surface tension on the periodic graph $\hat{G}_N$ is given by
\[ \sigma^N(s_1^N, s_2^N) = \frac{1}{\pi} \myIm \int_\ell \zeta_2^N d\zeta_1^N + x_1^N s_1^N. \]
This function is defined on the Newton polygon $\hat{\Delta}_N^\circ$. Observe that the Legendre dual Ronkin function $\sigma$ of $G_1$ is defined (\ref{eq:sigma_explicit}) on the Newton polygon of $G_1$, $\Delta^\circ \subset \frac{1}{N} \hat{\Delta}_N^\circ$.
We define the regularized surface tension as a function on $\frac{1}{N} \hat{\Delta}_N^\circ$:
\begin{equation}
	\label{eq:}
	\hat{\sigma}^N(\frac{s^N_1}{N}, \frac{s^N_2}{N}) := \frac{1}{N^2} \sigma^N(s^N_1, s^N_2).
\end{equation}

\begin{lemma}
	The regularized surface tension function $\hat{\sigma}^N$ converges on $\Delta^\circ$ to $\sigma$ given by (\ref{eq:sigma_explicit}) in the supremum norm. That is
	\[ \sup_{\Delta^\circ} \abs{\hat{\sigma}^N - \sigma} \xrightarrow{N \to \infty} 0, \; and \quad \hat{\Delta}_N^\circ \xrightarrow{N \to \infty} \Delta^\circ. \] 
\end{lemma}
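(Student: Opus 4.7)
My plan is to exploit the decomposition $\zeta_k^N = N\zeta_k + \hat\zeta_k$ (with $\hat\zeta_1 = \sum_{l=1}^L \zeta^{\gamma_l}$, $\hat\zeta_2 = \sum_{l=1}^L \zeta^{\delta_l}$) and to extract from the explicit formula for $\sigma^N$ the $N^2$-leading contribution, which matches $N^2\sigma$ with remainders of order $N$ and $1$.

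First I would match the two domains pointwise: for $(s_1,s_2)\in\Delta^\circ$ set $P := \Delta^{-1}(s_1,s_2)\in\mathcal{R}_+^\circ$ and let $P^N := (\Delta^N)^{-1}(Ns_1,Ns_2)$ be the point at which $\hat\sigma^N(s_1,s_2)$ is actually evaluated via $\hat\sigma^N(\bar s) = N^{-2}\sigma^N(N\bar s)$. From $\zeta_k^N = N\zeta_k + \hat\zeta_k$ one has $y_k^N = Ny_k + \hat y_k$, hence $\Delta(P^N) = \Delta(P) + \tfrac{1}{\pi N} J(\hat y_1,\hat y_2)(P^N)$, so $P^N\to P$ at rate $O(1/N)$ on compact subsets of $\Delta^\circ$ by the inverse function theorem for the diffeomorphism $\Delta$.

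Substituting $\zeta_k^N = N\zeta_k + \hat\zeta_k$ into
\[
N^2\hat\sigma^N(s_1,s_2) = \tfrac{1}{\pi}\myIm\int_\ell \zeta_2^N\,d\zeta_1^N\,\Big|_{P^N} + x_1^N(P^N)\,N s_1
\]
and collecting by powers of $N$ gives
\[
\hat\sigma^N(s_1,s_2) = \sigma\bigl(\Delta(P^N)\bigr) + \tfrac{1}{N}E_1(P^N) + \tfrac{1}{N^2}E_2(P^N),
\]
where $E_1,E_2$ are combinations of the Abelian integrals $\int\zeta_2\,d\hat\zeta_1$, $\int\hat\zeta_2\,d\zeta_1$, $\int\hat\zeta_2\,d\hat\zeta_1$ and of products like $\hat x_1\,\hat s_1$. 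On any compact subset of $\mathcal{R}_+^\circ$ these are uniformly bounded, so combined with the continuity of $\sigma$ and with $P^N\to P$ one obtains $\hat\sigma^N\to\sigma$ uniformly on every compact subset of $\Delta^\circ$.

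The main obstacle is upgrading this to uniform convergence on all of $\Delta^\circ$: near $\partial\Delta^\circ$ the points $P,P^N$ approach the real ovals of $\mathcal{R}$ and the individual quantities $x_k$ and the Abelian integrals in $E_1$ diverge, even though $\sigma$ itself stays bounded (it extends continuously to the closed polygon $\bar\Delta$ as the Legendre transform of the convex, piecewise-affine-at-infinity Ronkin function $\rho$). The natural way to handle this is to pass through the Ronkin side of the duality: by an identical expansion one first proves $\hat\rho^N\to\rho$ uniformly on compact subsets of $\mathbb R^2$ (both functions being convex and affine outside their respective amoebas, which converge), and then invokes the continuity of the Legendre transform $f\mapsto f^*$ on the class of convex functions that are finite everywhere and affine outside a common compact set, to deduce uniform convergence of $\hat\sigma^N\to\sigma$ on the entire bounded polygon $\bar\Delta$, and a fortiori on $\Delta^\circ$. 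The Newton polygon convergence $\tfrac{1}{N}\hat\Delta_N^\circ\to\Delta^\circ$ follows from the same expansion: the vertices of $\tfrac{1}{N}\hat\Delta_N^\circ$ are images of the marked points $\alpha^\pm,\beta^\pm,\gamma^\pm,\delta^\pm$ under $\tfrac{1}{N}\Delta^N = \Delta + \tfrac{1}{\pi N}J(\hat y_1,\hat y_2)$, with the $\gamma,\delta$-clusters collapsing onto the $\alpha,\beta$-clusters at rate $1/N$.
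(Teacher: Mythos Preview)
Your decomposition $\zeta_k^N = N\zeta_k + \hat\zeta_k$ and the identification of the $N^2$-leading piece are essentially the same as in the paper, and your compact-subset argument is correct. The substantive difference lies in how you handle the boundary of $\Delta^\circ$.

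The paper does \emph{not} detour through the Ronkin side or invoke continuity of the Legendre transform. Instead it organises the estimate so that the error terms can be bounded \emph{directly} all the way up to the boundary. Concretely, it compares $\hat\sigma^N(s_1,s_2)=\tfrac{1}{N^2}\sigma^N(Ns_1,Ns_2)$ with the intermediate quantity $\tfrac{1}{N^2}\sigma^N(s_1^N,s_2^N)$, where $(s_1^N,s_2^N)=\Delta^N(P)$ is taken at the \emph{same} point $P=\Delta^{-1}(s_1,s_2)$ (rather than your separate $P^N$). For the term $\tfrac{1}{N^2}\sigma^N(s_1^N,s_2^N)-\sigma(s_1,s_2)$, all $(\alpha_i,\beta_j)$ contributions cancel exactly, leaving only $O(N)$ many bounded cross terms, hence $O(1/N)$. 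For the other term, the derivative bound reduces to controlling products of the form $\abs{\myRe\zeta^{\alpha_i}(P)}\,\abs{\myIm\zeta^{\delta_l}(P)}$. The crucial observation---which replaces your entire Legendre argument---is that near a singularity $\alpha_i^\pm$ one has $\abs{\myRe\zeta^{\alpha_i}}\sim\log|z-\alpha_i^\pm|$ while $\abs{\myIm\zeta^{\delta_l}}\lesssim|z-\alpha_i^\pm|$, so the product is bounded by $|z-\alpha_i^\pm|\log|z-\alpha_i^\pm|\to 0$. This gives a uniform $O(1/N)$ bound on all of $\Delta^\circ$ in one elementary step.

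Your proposed Legendre route is not obviously wrong, but it is substantially under-justified: the Newton polygons $\tfrac{1}{N}\hat\Delta_N$ are moving, so the affine pieces of the rescaled Ronkin functions do not share common slopes outside a fixed compact set, and the passage from uniform-on-compacts convergence of $\hat\rho^N$ to uniform convergence of the conjugates on the closed polygon would need a careful convex-analysis argument (epi-convergence plus equi-coercivity, or a quantitative version thereof). In addition, by working at $P^N\ne P$ you incur an extra $\sigma(\Delta(P^N))-\sigma(\Delta(P))$ term, whose uniform control near $\partial\Delta$ again requires exactly the kind of boundary estimate you were trying to avoid. The paper's direct local bound is both simpler and sharper.
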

\begin{proof}
	A point $P \in \mathcal{R}^\circ_+$ corresponds to $(s_1, s_2) = J(\myIm \zeta_1, \myIm \zeta_2)(P) \in \Delta^\circ$ and to $\frac{1}{N}(s^N_1, s^N_2) = \frac{1}{N}J(\myIm \zeta^N_1, \myIm \zeta^N_2)(P) \in \hat{\Delta}_N^\circ$. 

	We have
	\begin{equation}
		\label{eq:estimate_regularized_sigma}
		\abs{\hat{\sigma}^N(s_1, s_2) - \sigma(s_1, s_2)} \leq \abs{\frac{1}{N^2} \sigma^N(N s_1, N s_2) - \frac{1}{N^2} \sigma^N(s^N_1, s^N_2) } + \abs{\frac{1}{N^2} \sigma^N(s^N_1, s^N_2) - \sigma(s_1, s_2).}
	\end{equation}
	Observe that the arguments of the second term correspond to the same point $P \in \mathcal{R}^\circ_+$ and thus all the terms corresponding to the edges labelled by $(\alpha, \beta)$ cancel. Thus this term consists of contributions of $N$ edges $(\alpha_i, \delta_l)$, $N$ edges $(\beta_j, \gamma_l)$ and one edge $\gamma_{l'}, \delta_l$ for all $i, j, l, l'$. All these entries are of the form \eqref{eq:sigma_explicit} and are bounded. Thus the second term in \eqref{eq:estimate_regularized_sigma} is $\mathcal{O}\br{\frac{1}{N}}$.

	The first term in \eqref{eq:estimate_regularized_sigma} can be easily estimated through the derivatives. We obtain two terms of the form
	\[ \frac{1}{N^2} \abs{\frac{\partial \sigma^N}{\partial s^N_k}} \abs{s^N_k - Ns_k}. \]
	For $k=1$ this gives (and the same estimate for $k=2$)
	\[ \frac{1}{N^2}\abs{x^N_1} \abs{\sum_{l=1}^L \myIm \zeta^{\delta_l}} \leq \frac{1}{N} \br{\sum_{i=1}^m \sum_{l=1}^L \abs{\myRe \zeta^{\alpha_i}(P)} \abs{\myIm \zeta^{\delta_l}(P)} } + \mathcal{O}(\frac{1}{N^2}), \]
	where we have used \eqref{eq:def_rhoTilde}.

	To show that this term is $\mathcal{O}(\frac{1}{N})$ one observes that $\abs{\myRe \zeta^{\alpha_i}(P)} \abs{\myIm \zeta^{\delta_l}(P)}$ remain bounded at the singularities $\alpha_i^\pm$. Indeed at $z \to \alpha_i$ these terms have the following behaviour:
	\[ \abs{\myRe \zeta^{\alpha_i}(P)} \approx \log\abs{z - \alpha_i}, \quad \abs{\myIm \zeta^{\delta_l}(P)} \approx \abs{\myIm(z-\alpha_i)} \leq \abs{z-\alpha_i}. \]
	The product can be estimated by $\abs{z-\alpha_i} \log\abs{z-\alpha_i} \to 0$. Finally we obtain that both terms in \eqref{eq:estimate_regularized_sigma} are uniformly bounded by $\mathcal{O}\br{\frac{1}{N}}$.
\end{proof}

\begin{figure}[h]
\centering
\fontsize{10pt}{12pt}\selectfont
\def\svgwidth{.7\textwidth}
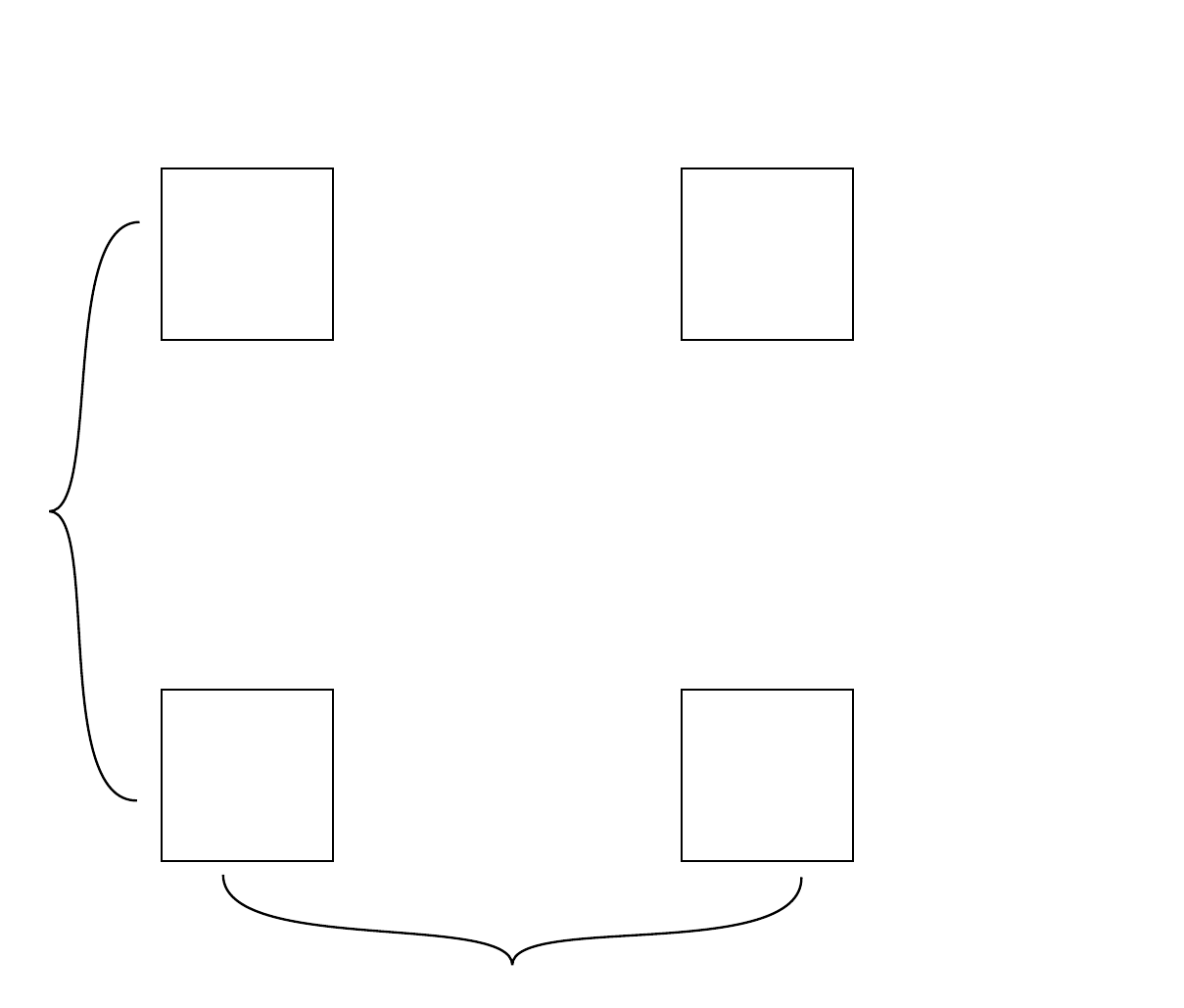
\caption{Graph $G_N$ and regularized graph $\hat{G}_N$. $G_N$ is composed of $N\times N$ copies of $G_1$ with doubly periodic train track parameters and thus quasi periodic weights. $\hat{G}_N$ is the graph $G_N$ extended by two strips with $L$ trains tracks each. $\hat{G}_N$ has zero monodromy for its Baker-Akhiezer functions and therefore defines doubly periodic weights. Each graph $G_1$ is a diagonal square lattice as in Fig.~\ref{Fig sq lattice}. }
\label{fig:regularizationGraph}
\end{figure}

\subsection{Regularized convergence of the height function}

Surface tension is important because it defines a functional that is minimized by the limiting height function of the corresponding dimer model.

For a domain $\Omega \subset \mathbb{R}^2$ and a set $\Delta \subset \mathbb{R}^2$ let us denote the space of functions 
\[ \Lip_{\Delta}(\Omega) := \cubr{f \in \Lip(\Omega, \mathbb{R}) \;|\; \nabla f \in \Delta \; \text{a.e.}}, \quad \Lip_{\Delta}(\Omega, h_b) := \cubr{f \in \Lip_\Delta(\Omega) \;|\; \restr{f}{\partial\Omega} = h_b }. \]
Note that the existence of the gradient is guaranteed almost everywhere due to Rademacher's theorem. This is the natural family of spaces for limiting height functions with Newton polygon $\Delta$. We omit the index $\Delta$ when it is clear from context.

The following is a result from \cite{cohn_variational_2000} and can also be found in \cite{sheffield_random_2006}.

\begin{theorem}
\label{ch10_thm:doublyPeriodicHeightFunctionConvergence}
	Let $\Omega \subset \mathbb{R}^2$ be a simply connected region with piecewise smooth boundary and $h_b$ be a function in $\Lip_{[0,m]\times [0,n]}(\Omega)$ restricted to $\partial\Omega$. Let $G_k \subset \frac{1}{k}\br{ \frac{1}{m} \mathbb{Z} \times \frac{1}{n} \mathbb{Z}}$ be a sequence of graphs that approximate $(\Omega, h_b)$ in the following sense:
	\begin{itemize}
		\item $G_k \subset \Omega$ for all $k$.
		\item The Hausdorff distance between the boundary of $G_k$ and $\partial\Omega$ tends to 0.
		\item $G_k$ permits dimer configurations and the height function $h_k$ on the boundary approximates the boundary conditions. That is for $x \in \partial\Omega$ if we define $h_k(x)$ as $h_k$ of the face closest to the point $x$ then $\norm{\frac{h_k}{k} - h_b}_\infty \to 0$.
	\end{itemize}

	Fix a fundamental domain of size $(m\times n)$ with some fixed dimer weights on it and populate all of $G_k$ doubly periodically with those weights. Let $\mathbb{P}_k$ be the corresponding dimer Boltzmann measures. Let $\sigma$ be the surface tension corresponding to these weights. Then for the limiting height function defined as the minimizer
	\begin{equation}
		\label{eq:surface_tension_minimization}
		h \coloneqq \argmin_{h \in \Lip(\Omega, h_b)} \int_\Omega \sigma(\nabla h)
	\end{equation}
we have the large deviations principle 
	\[ \mathbb{P}_k(\norm{\frac{h_k}{k} - h}_\infty > \epsilon) \approx e^{- k^2 \epsilon} .\]
\end{theorem}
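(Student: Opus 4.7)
The plan is to follow the variational large deviations strategy of Cohn--Kenyon--Propp and Sheffield, which reduces the convergence of $k^{-1}h_k$ to an exponential concentration principle with rate functional $I(h) := \int_\Omega \sigma(\nabla h)\, dA - \int_\Omega \sigma(\nabla h^*)\, dA$, where $h^*$ denotes the minimizer in \eqref{eq:surface_tension_minimization}.

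The first step is a local partition function asymptotic: for an $\epsilon k \times \epsilon k$ box with boundary height approximating an affine function of slope $(s_1,s_2) \in \Delta^\circ$, the restricted partition function satisfies
\[
Z_{\text{local}} = \exp\bigl(-k^2 \epsilon^2 \sigma(s_1, s_2) + o(k^2 \epsilon^2)\bigr).
\]
This is essentially the definition of surface tension combined with subadditivity; it identifies $\sigma$ as the local free energy and brings in the ergodic Gibbs measure $\mu_{(s_1,s_2)}$ as the unique measure of maximal entropy realizing this slope. Tiling $\Omega$ by mesoscopic $\epsilon$-squares and applying this estimate cell by cell yields both an upper bound and, via a patching construction that couples samples from $\mu_{(\nabla h(x))}$ across adjacent cells, a matching lower bound
\[
\mathbb{P}_k\bigl(\|k^{-1} h_k - h\|_\infty < \delta\bigr) = \exp\bigl(-k^2 I(h) + o(k^2)\bigr)
\]
for any candidate $h \in \Lip_\Delta(\Omega, h_b)$. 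Strict convexity of $\sigma$ on $\Delta^\circ$ implies strict convexity of $I$ on the admissible class, and Arzel\`a--Ascoli compactness of uniformly Lipschitz functions with prescribed boundary data provides existence and uniqueness of $h^*$. Combining these gives the claimed concentration of $k^{-1} h_k$ on $h^*$.

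The hard part will be the lower bound of the second step: the patching argument between mesoscopic cells must reconcile boundary heights coming from different slopes at negligible exponential cost, which requires sharp control of height fluctuations under $\mu_{(s_1,s_2)}$ (of logarithmic order for liquid slopes) together with a careful edge-swapping construction. Degeneracies on $\partial \Delta$ (frozen or gaseous phases, where $\sigma$ is no longer strictly convex or where $\mu_{(s_1,s_2)}$ becomes rigid) and the boundary layer near $\partial\Omega$, where dimer configurations must honestly realize $h_b$, are the main sources of technical delicacy. These are absorbed by an approximation scheme: regularize $h$ to a piecewise affine function with slopes in a compact subset of $\Delta^\circ$, prove the large deviations estimate there, and pass to the limit using convexity and continuity of $I$. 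Since the theorem is quoted from \cite{cohn_variational_2000} and \cite{sheffield_random_2006}, one may alternatively just cite those works, the present setup being a special case of theirs.
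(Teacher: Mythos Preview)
The paper does not prove this theorem at all: it is stated as a quotation of \cite{cohn_variational_2000} and \cite{sheffield_random_2006}, with the sentence ``The following is a result from \cite{cohn_variational_2000} and can also be found in \cite{sheffield_random_2006}'' immediately preceding the statement and no proof following it. Your final sentence already anticipates this, and indeed citing those works is exactly what the paper does.

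Your sketch is a faithful outline of the Cohn--Kenyon--Propp strategy, so in that sense it aligns with the (external) proof the paper invokes. As a self-contained argument it is fine as a plan but not yet a proof: the mesoscopic patching lower bound and the treatment of slopes on $\partial\Delta$ are genuinely substantial, as you acknowledge, and would each require several pages to execute. For the purposes of this paper, however, the correct ``proof'' is simply the citation.
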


The surface tension minimizer is unique due to the strict convexity of $\sigma$.
\begin{rem}
\label{re:affine_sigma}
Surface tension functions that differ by an affine function $b_1 y_1+b_2 y_2 +b_3$ lead to the same variational description. Indeed, by Stokes theorem we have
$$
\int_\Omega \nabla h\ du dv=\int_{\partial\Omega} h\ n \ ds,
$$
where $n$ is the outward normal to the boundary curve $\partial\Omega$. The last expression is given in terms of the boundary data.
\end{rem}

We will now use Theorem~\ref{ch10_thm:doublyPeriodicHeightFunctionConvergence} for doubly periodic weights to establish convergence of height functions in a regularized sense for quasi periodic weights.

To that end we consider a sequence of graphs $G_k \subset \frac{1}{k}\br{ \frac{1}{m} \mathbb{Z} \times \frac{1}{n} \mathbb{Z}}$ that approximate a domain $\Omega$ with boundary height $h_b$ as above.

Let $\alpha^\pm_i, i \leq m$, $\beta^\pm_j, j \leq n$ be the train track angles repeating periodically. As before we consider a fundamental domain $\hat{G}_N$ of size $(N m + L) \times (N n + L)$ with $4L$ added regularizing train tracks $\gamma^\pm_k, \delta^\pm_l$ such that the periodicity condition \eqref{ch07:eq_periodicDiracQuadGrid} is satisfied on this domain. We denote by $\sigma^N$ the corresponding surface tension and $\hat{\sigma}^N(\frac{s^N_1}{N}, \frac{s^N_2}{N}) := \frac{1}{N^2} \sigma^N(s^N_1, s^N_2)$ its normalized version defined on $\hat{\Delta}^\circ_N$. Henceforth we denote the surface tension functional as $S_N(h) \coloneqq \int \hat{\sigma}^N(\nabla h)$ and $S(h) \coloneqq \int \sigma(\nabla h)$.

For each of these fundamental domains we are in the doubly periodic case so by a scaled version of Theorem~\ref{ch10_thm:doublyPeriodicHeightFunctionConvergence} we have the limiting average height functions 
\[ h_N = \argmin_{h \in \Lip_{\hat{\Delta}^\circ_N}(\Omega, h_b)} S_N(h).\]
It was shown in \cite{cohn_variational_2000} that the functional $S$ is lower semicontinuous with respect to the supremum norm. This is the key ingredient to establishing convergence of the minimizing functions $h_N$.

\begin{theorem}
\label{ch10:thm_regHeightFunctionConv}
	The limiting height functions $h_N$ converge to the minimizer of the surface tension $\sigma$. That is
	\[ h_N \xrightarrow{\norm{\cdot}_\infty} h \coloneqq \argmin_{f \in \Lip_{\Delta^\circ}(\Omega, h_b)} S(f).\]
\end{theorem}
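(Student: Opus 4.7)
The plan is a standard $\Gamma$-convergence/compactness argument, combining the uniform convergence $\hat{\sigma}^N \to \sigma$ and $\hat{\Delta}^\circ_N/N \to \Delta^\circ$ from the previous lemma with the lower semicontinuity of $S$ from \cite{cohn_variational_2000}, together with strict convexity of $\sigma$ to identify the limit uniquely.

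First I would establish compactness. Since the sets $\tfrac{1}{N}\hat{\Delta}^\circ_N$ are uniformly bounded (they converge to $\Delta^\circ$, which is bounded), the functions $h_N$ lie in a common space $\Lip_K(\Omega,h_b)$ with a uniform Lipschitz constant and a uniform $\|\cdot\|_\infty$ bound inherited from $h_b$. By Arzelà--Ascoli, any subsequence of $(h_N)$ has a further subsequence $h_{N_k}$ converging uniformly to some $h^* \in \Lip(\Omega,h_b)$. Since $\nabla h_{N_k}$ lies almost everywhere in $\tfrac{1}{N_k}\hat{\Delta}^\circ_{N_k}$ and these sets converge to $\Delta^\circ$ from outside (recall $\Delta^\circ \subset \tfrac{1}{N}\hat{\Delta}^\circ_N$), a standard weak-$*$ argument on gradients in $L^\infty$ shows $\nabla h^* \in \overline{\Delta^\circ}$ a.e., so $h^* \in \Lip_{\overline{\Delta^\circ}}(\Omega,h_b)$.

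Next I would prove the two matching inequalities. For the \emph{upper bound}, note that $h \in \Lip_{\Delta^\circ}(\Omega,h_b) \subset \Lip_{\tfrac{1}{N}\hat{\Delta}^\circ_N}(\Omega,h_b)$, so $h$ is an admissible competitor for the minimization defining $h_N$. Therefore
\[
S_N(h_N) \leq S_N(h) = \int_\Omega \hat{\sigma}^N(\nabla h)\,du\,dv.
\]
By the previous lemma, $\hat{\sigma}^N \to \sigma$ uniformly on $\Delta^\circ$; since $\nabla h \in \Delta^\circ$ a.e.\ and $\Omega$ has finite measure, dominated convergence gives $S_N(h) \to S(h)$, hence $\limsup_k S_{N_k}(h_{N_k}) \leq S(h)$. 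For the \emph{lower bound}, write
\[
S_{N_k}(h_{N_k}) = \int_\Omega \sigma(\nabla h_{N_k})\,du\,dv + \int_\Omega \bigl(\hat{\sigma}^{N_k} - \sigma\bigr)(\nabla h_{N_k})\,du\,dv.
\]
Uniform convergence of $\hat{\sigma}^{N_k}$ to $\sigma$ on a neighborhood of $\overline{\Delta^\circ}$ (which contains the gradients of $h_{N_k}$ for $k$ large) makes the error term vanish in the limit, while the lower semicontinuity of $S$ in $\|\cdot\|_\infty$, together with $h_{N_k} \to h^*$ uniformly, yields $\liminf_k \int_\Omega \sigma(\nabla h_{N_k}) \geq S(h^*)$.

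Combining the two inequalities gives $S(h^*) \leq S(h)$ with $h^*,h \in \Lip_{\overline{\Delta^\circ}}(\Omega,h_b)$. Strict convexity of $\sigma$ on $\Delta^\circ$ forces the minimizer of $S$ over $\Lip_{\overline{\Delta^\circ}}(\Omega,h_b)$ to be unique and to coincide with $h$ (boundary/interior considerations rule out non-uniqueness on the boundary of the polygon). Hence $h^* = h$. Since every subsequence of $(h_N)$ has a further subsequence converging in $\|\cdot\|_\infty$ to the same limit $h$, the full sequence converges: $h_N \to h$.

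The main obstacle I anticipate is step two, namely the interplay between the $N$-dependent energy $S_N$ and the fixed energy $S$ in the lower bound, because the lower semicontinuity of Cohn--Kenyon--Propp is stated only for $S$. Controlling the error by $\|\hat{\sigma}^{N_k} - \sigma\|_{L^\infty(\Delta^\circ)}$ requires that $\nabla h_{N_k}$ stays inside a compact set on which the convergence is uniform; near the boundary of the polygon (where $\sigma$ blows up) one needs a truncation argument or an a priori interior estimate on the gradient coming from the fact that $\hat{\Delta}^\circ_N$ shrinks down to $\Delta^\circ$ in a controlled way.
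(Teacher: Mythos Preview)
Your proposal is correct and follows essentially the same route as the paper: Arzel\`a--Ascoli compactness via the nested inclusions $\Lip_{\Delta^\circ}\subset\Lip_{\frac{1}{N}\hat\Delta^\circ_N}$, comparison of $S_N$ and $S$ through the uniform convergence $\hat\sigma^N\to\sigma$, lower semicontinuity of $S$, uniqueness from strict convexity, and the subsubsequence argument. The only cosmetic difference is that the paper packages the liminf/limsup inequalities into a short contradiction argument (assuming $S(g')<S(g)$ and deriving $S_N(g')<S_N(h_N)$), whereas you write out the $\Gamma$-convergence upper and lower bounds directly; the chain of inequalities is identical. Your closing concern about controlling $\hat\sigma^{N_k}-\sigma$ on the slightly larger sets $\tfrac{1}{N_k}\hat\Delta^\circ_{N_k}\supset\Delta^\circ$ is legitimate, and the paper glosses over exactly the same point.
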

\begin{proof}
	We have seen that $\hat{\sigma}^N \xrightarrow{\norm{\cdot}_\infty} \sigma$. Since $h_N \in \Lip_{\hat{\Delta}^\circ_N}(\Omega, h_b) \supset \Lip_{\Delta^\circ}(\Omega, h_b)$ for all $N$ we get by Arzela-Ascoli the existence of a convergent subsequence $h_{N_k} \xrightarrow{k\to\infty} g\in \Lip_{\Delta^\circ}(\Omega, h_b)$. We show that this limit minimizes $S(\cdot)$.

	Suppose that $S(g')$ = $S(g) - \delta$ for some $\delta > 0$. Then for $\epsilon = \abs{\Omega} \norm{\sigma - \hat{\sigma}^N}_\infty$ and due to lower semicontinuity of $S$ we have
	\begin{align*}
		S_N(g') &\leq S(g') + \epsilon = S(g) + \epsilon - \delta \\
						 &\leq S(h_N) + \epsilon + \epsilon' - \delta \\
						 &\leq S_N(h_N) + 2\epsilon + \epsilon' - \delta \\
						 &< S_N(h_N) \quad\quad \text{for N large enough.}
	\end{align*}
	This contradicts $h_N$ being the minimizer of $S_N$. Therefore $g$ is indeed a minimizer of $S$. Since $\sigma$ is stricly convex, the minimizer is unique. Therefore in the sequence $h_N$ any subsequence has a subsequence converging to the same limit $h$. This yields convergence $h_N \xrightarrow{\norm{}_\infty} h$.
\end{proof}

We thus have a limiting height function $h$ under the regularized limit. We conjecture that this convergence also holds without a regularization step in a direct sense akin to Theorem~\ref{ch10_thm:doublyPeriodicHeightFunctionConvergence}.

Imposing a volume constraint of $V \in \mathbb{R}$ on the minimization problem \eqref{eq:surface_tension_minimization} yields
\begin{equation}
	\label{eq:surface_tension_minimization_constrained}
	h \coloneqq \argmin_{h \in \Lip(\Omega, h_b)} \int_\Omega \sigma(\nabla h), \; \text{ s.t. } \int_\Omega h = V.
\end{equation}
\[  \]

It is well known (See e.g. \cite{Kenyon_Okounkov_Sheffield_2006} for details and further references) that the minimizer $h$ satisfies the Euler-Lagrange equation
\[ \Div \br{\nabla\sigma \circ \nabla h(x) } = \lambda \]
with some Lagrange multiplier $\lambda\in \mathbb{R}$ (depending on $V$ and $h_b$) where $h$ is $C^2$, and $\sigma$ is smooth at $h(x)$. Here $\lambda=0$ corresponds to the unconstrained minimization problem.

Since $\rho$ is the Legendre dual of $\sigma$ we have 
\[ \Div \br{\nabla\sigma \circ \nabla \rho } = \Div \br{\text{Id}} = 2. \]
The Ronkin function thus satisfies \eqref{eq:surface_tension_minimization_constrained} for some volume $V$ and its own boundary conditions. See Figure~\ref{fig:simulation_with_amoeba} for an example of this match of theoretical prediction and simulation and Section~\ref{sec:12_computation} for a description of the computational aspects.

\section{Schottky Uniformization}
\label{sec:10_Schottky_Uniformization}

\addtocontents{toc}{\protect\setcounter{tocdepth}{1}}

Formulas for weights, surface tension and other characteristics of dimer models presented in this paper are parametrized by compact Riemann surfaces. This seems to be a rather complicated parametrization if one wants to investigate these dimer models or just plot them. Fortunately this problem can be solved using the Schottky uniformization of Riemann surfaces. This method was first suggested in \cite{Bobenko_1987} (see also \cite{belokolos_algebro-geometric_1994}) for a similar problem of computation of algebro-geometric solutions of nonlinear integrable equations. We shortly present it here.

Let $\{C_1, C_1', \ldots, C_g, C_g'\}$ be a collection of $2g$ mutually disjoint Jordan curves that form the boundary of a $2g$-connected domain $\Pi\subset \mathbb{C}\cup \{ \infty\}$. The fractional-linear transformation $\sigma\in  \text{PSL}(2, \mathbb{C})$ given by

\[ \frac{\sigma_n(z) - B_n}{\sigma_n(z) - A_n} = \mu_n \frac{z - B_n}{z - A_n}, \quad |\mu_n| < 1.\]

maps the outside of a boundary curve $C_n$ onto the inside of the boundary curve $C_n'$, $\sigma_n C_n=C_n'$. Its fixed point $A_n$ lies inside $C_n$ and $B_n$ lies inside $C_n'$. 

The freely generated group $G = <\sigma_1, \ldots, \sigma_g>$ is called a \emph {Schottky group}. If all boundary curves $C_n, C_n'$ are circles the Schottky group is called \emph{classical}.

Let $\Omega$ be the discontinuity set of a Schottky group $G$. It is a classical result  (see e.g. \cite{ford_automorphic_2004}) that $\mathcal{R} = \Omega / G$ is a compact Riemann surface of genus $g$. Note that $G$ acts freely on $\Omega$ and that $\Omega / G \approx \Pi(G) \coloneqq \mathbb{C} \setminus \{2g \text{ disks}\}$. Here $\Pi(G)$ is a fundamental domain of $G$. Observe that in fact the topology is that of a sphere with $2g$ holes glued in pairs, thus a surface with $g$ handles.

Note that given a Riemann surface $\mathcal{R}$, the choice of Schottky group representing it is not unique. 
The Schottky uniformization theorem states that any compact Riemann surface $\mathcal{R}$  with a choice of homologically independent simple disjoint loops $v_1, \ldots, v_g$ can be realized as $\mathcal{R} = \Omega(G) / G$, with $G$ Schottky. It is, however, an open question whether $G$ can be chosen to be classical.

The uniformization parameters $\{A_n, B_n, \mu_n \}$ and $ \{\tau A_n, \tau B_n, \mu_n\}$ that differ by a fractional-linear mapping $\tau$ represent the same Riemann surface. The complex dimension of the uniformization space is thus $1$ for $g=1$ and $3g - 3$ otherwise. This corresponds to the complex dimension of the moduli space of compact Riemann surfaces of genus $g$ as one would expect from the Schottky uniformization theorem.

In order to compute the period matrix and Abel map on $\mathcal{R}$, we need to fix a basis of cycles. Given a Schottky group $G$ we chose   a \emph{canonical homology basis} $a_1, \ldots, a_g$, $b_1, \ldots, b_g$ such that the $a_n$ are positively oriented cycles around the $B_n$, $b_n$ are paths from some $z_n \in \sigma_n^{-1}(a_n)$ to $\sigma_n(z_n) \in a_n$ such that the intersection indices are $a_k \circ b_l = \delta_{kl}$ and $a_k \circ a_l = b_k \circ b_l = 0$ $\forall k,l$.

Any element in $G$ can be represented as a word consisting of its generators and inverses
\[G \ni \sigma = \sigma_{i_1}^{j_1} \ldots \sigma_{i_k}^{j_k}, \quad i_1, \ldots i_k \in \{1, \ldots, g\}, \; j_1, \ldots, j_k \in \mathbb{Z}.\]

This representation is unique if we impose that $i_k \neq i_{k+1}$. We furthermore denote by $G / G_n \coloneqq \{\sigma \;|\; i_k \neq n\}$ the elements whose words do not end in $\sigma_n$, and similarly ${G_m\setminus G/G_n} \coloneqq \{\sigma \;|\; i_1 \neq m, i_k \neq n\}$. The following theorem gives expressions for the holomorphic differentials and the period matrix in terms of series over the group $G$, see \cite{burnside_1892}.

\begin{theorem}
	\label{thm:Schottky_series}
	Given a Schottky group $G$ with a canonical homology basis. If
	\[ \omega_n(z) = \frac{1}{2\pi i}\sum_{\sigma \in G / G_n}  \br{ \frac{1}{z-\sigma B_n} - \frac{1}{z - \sigma A_n} }  dz \]
	converges absolutely, then $\omega_1, \ldots \omega_g$ forms a basis of holomorphic differentials dual to the given homology basis.

	Furthermore the period matrix is then given by
	\begin{align*}
		B_{nm} &=  \frac{1}{2\pi i} \sum_{\sigma \in G_m \backslash G / G_n} \log \cubr{B_m, \sigma B_n, A_m, \sigma A_n}, \quad \text{if } n \neq m \\
		B_{nn} &=  \frac{1}{2\pi i}\log \mu_n +  \frac{1}{2\pi i}\sum_{\sigma \in G_n \backslash G / G_n, \sigma \neq I} \log \cubr{B_n, \sigma B_n, A_n, \sigma A_n}
	\end{align*}
	where the curly brackets indicate the cross-ratio $\cubr{z_1, z_2, z_3, z_4} = \frac{z_1 - z_2}{z_2 - z_3} \frac{z_3 - z_4}{z_4 - z_1}$.
\end{theorem}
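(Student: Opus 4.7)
The proof has two main parts: (i) verifying that the series $\omega_n$ defines a well-posed holomorphic differential on $\mathcal{R} = \Omega/G$, and (ii) computing its $a$- and $b$-periods. Throughout, the workhorse identity is the Möbius-invariance of the one-form $\br{\frac{1}{z-A} - \frac{1}{z-B}} dz = d\log\frac{z-A}{z-B}$: for any Möbius map $\tau$, this form pulled back through $z \mapsto \tau^{-1}z$ equals $\br{\frac{1}{z-\tau A} - \frac{1}{z-\tau B}}dz$. I would first show $\omega_n$ is $G$-invariant: each summand depends only on the coset $[\sigma]\in G/G_n$ (since $G_n$ fixes $A_n$ and $B_n$), and for any $\tau \in G$ left multiplication $\sigma \mapsto \tau^{-1}\sigma$ is a bijection on $G/G_n$, so the Möbius-invariance above combined with the absolute-convergence hypothesis gives $\tau^*\omega_n = \omega_n$. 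Holomorphy on the fundamental domain $\Pi$ is automatic: for nontrivial $\sigma \in G/G_n$, both $\sigma A_n$ and $\sigma B_n$ lie in the disk bounded by $C_i$ or $C_i'$ where $i$ is the first letter of the reduced word of $\sigma$, hence outside $\Pi$; the identity coset contributes poles only at $A_n \in \text{int}(C_n)$ and $B_n \in \text{int}(C_n')$, again outside $\Pi$.

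The $a$-periods follow by the residue theorem applied inside the disk bounded by $a_k$ (which encloses $B_k$ only). The identity summand has residues $+1$ at $B_n$ and $-1$ at $A_n$, so it contributes $1$ when $k = n$ and $0$ otherwise. For every nontrivial $\sigma \in G/G_n$ the two points $\sigma A_n, \sigma B_n$ lie in a common boundary disk, so they are either both inside $a_k$ (and the residues $\pm 1$ cancel) or both outside. This gives $\int_{a_k}\omega_n = \delta_{kn}$, confirming both normalization and the claim that $\omega_1,\ldots,\omega_g$ is the dual basis to $a_1,\ldots,a_g$.

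For the $b$-periods I would lift $b_m$ to a path in $\Omega$ from a basepoint $P$ to $\sigma_m P$ and integrate term-by-term using the explicit primitive $\log\frac{z - \sigma B_n}{z - \sigma A_n}$ of each summand, yielding $B_{nm} = \frac{1}{2\pi i}\sum_{\sigma \in G/G_n}\log\frac{(\sigma_m P - \sigma B_n)(P - \sigma A_n)}{(\sigma_m P - \sigma A_n)(P - \sigma B_n)}$. Writing $\sigma_m(z) = (\alpha z + \beta)/(\gamma z + \delta)$, the identity $\sigma_m z - c = (z - \sigma_m^{-1}c)/[(\gamma z + \delta)(\gamma \sigma_m^{-1}c + \delta)]$ splits each logarithm into a $P$-dependent piece $\log\frac{P - \sigma_m^{-1}\sigma B_n}{P - \sigma_m^{-1}\sigma A_n} - \log\frac{P - \sigma B_n}{P - \sigma A_n}$ and a $P$-independent cross-ratio remainder. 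Reindexing $\sigma \mapsto \sigma_m\sigma$, the $P$-dependent contributions telescope in the full sum (modulo branch considerations to which I return below), and after collecting the remainders along the $G_m$-orbit of each coset, what survives is precisely a sum of cross-ratios indexed by the double coset $G_m\backslash G/G_n$. For $n = m$ the identity summand requires special treatment: applying the defining identity $\frac{\sigma_n P - B_n}{\sigma_n P - A_n} = \mu_n \frac{P - B_n}{P - A_n}$ directly gives the contribution $\log\mu_n$, accounting for the extra diagonal term in the stated formula.

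The main technical obstacle is the careful branch-bookkeeping in the telescoping reindexing: individual logarithms are multi-valued, and the reorganization of the series, although legitimate termwise by the Möbius identity, can shift branches by integer multiples of $2\pi i$; these must be tracked to confirm that the $P$-dependence truly drops out (rather than leaving behind lattice ambiguities) and that the final expression matches the cross-ratio formula on the nose. Absolute convergence of the Poincaré-type series used to define $\omega_n$ is what licenses all rearrangements --- including passage from $G/G_n$ to $G_m\backslash G/G_n$ --- and ensures that the resulting double-coset series for $B_{nm}$ converges to a single well-defined period modulo the lattice $\mathbb{Z}^g + B\mathbb{Z}^g$.
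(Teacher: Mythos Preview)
Your proposal is correct and follows essentially the same approach as the paper: $G$-invariance of $\omega_n$, holomorphy via absence of poles in the fundamental domain, $a$-periods by the residue theorem, and then the $b$-period computation. The paper itself gives only a brief sketch (referring the period-matrix computation to the classical literature as ``a slightly more involved calculation''), so your proposal actually supplies more detail than the paper does --- in particular your telescoping argument for the $b$-periods and the isolation of the $\log\mu_n$ term from the identity coset when $n=m$ is exactly the standard Burnside computation the paper is alluding to.
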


To see this, first observe that $\omega_n$ is indeed invariant under action in $G$ and thus well defined on $\mathcal{R}$. Furthermore it has no poles in the fundamental domain and is thus holomorphic. The normalization $\int_{a_m} \omega_n = \delta_{nm}$ is true by Cauchy's residue theorem. A slightly more involved calculation yields the period matrix.

In general the question of absolute convergence of these Poincaré series is open. The convergence is known \cite{burnside_1892, baker_1897} for sufficiently small $C_1, C_1', \ldots, C_g, C_g'$ or, equivalently, for sufficiently small $|\mu |$'s. Another classical result \cite{schottky_1887, fricke-klein_1965} is that the convergence is guaranteed in the circle-decomposable case, i.e. when the fundamental domain $\Pi (G)$ of a classical Schottky group can be decomposed into non-intersecting pairs of pants by circles. See \cite{belokolos_algebro-geometric_1994} for more details and \cite{schmies_computational_2005} for error estimates.

\subsection*{Schottky uniformization of M-curves}

The problem of convergence of the Poincaré series, as well as the problem of the description of the Schottky space
$$
S=\{A_1, B_1, \mu_1,\ldots, A_g, B_g, \mu_g\},
$$
can be solved for M-curves, and more general for real Riemann surfaces. In this section we describe two Schottky uniformizations of M-curves, called U1 and U2, that are useful in different regimes, see \cite[Chapter~5]{belokolos_algebro-geometric_1994}. 

Let $\mathcal{R}$ be an M-curve with antiholomorphic involution $\tau$ that is separated into two parts $\mathcal{R}_+, \mathcal{R}_-$ by the real ovals $X_0, \ldots, X_g$. 

The surface $\mathcal{R}_+$ can be conformally mapped to the upper half plane with $g$ round holes, so that $X_0$ being mapped to the real axis. Applying the anti-holomorphic symmetry $\tau z = \bar{z}$ we obtain a fundamental domain of a classical Schottky group, see Fig.~\ref{fig:Schottky+Amoeba}. The boundary circles $C_n$ and $C_n'$ are mutually complex conjugated $C_n$ and $C_n'$, and represent  
other real ovals $X_n, n=1,\ldots, g$. We call this U2-uniformization of the M-curve $\mathcal{R}$.

The parameters of this Schottky group satisfy
\[ A_n = \overline{B_n}, \; \myIm(A_n) > 0 ,\; 0 < \mu_n < 1, \; \forall n.\]
Furthermore it is easy to deduce inequalities for the discs bounded by $C_n$ to be disjoint:
$$
\left| \frac{A_n - \mu_n \bar{A}_n}{1-\mu_n}-\frac{A_m - \mu_m \bar{A}_m}{1-\mu_m} \right |>
\frac{2\myIm A_n}{1-\mu_n}+\frac{2\myIm A_m}{1-\mu_m}, \; \forall m,n.
$$
This gives a complete description of the Schottky space $S$.
Note that in general this uniformization is not circle decomposable and convergence for the differentials is not guaranteed. However it is circle decomposable for $g \leq 2$ as well as for sufficiently small $\mu$ which corresponds to amoebas with small hole sizes. 

The meromorphic differentials $d\zeta_i$ can be written in terms of Poincaré series as well. The formulas in for the U2-uniformization are especially simple. 

\begin{proposition}
	Let $G$ be a Schottky group in U2 form. Then the differentials $d\zeta_k$ and integrals $\zeta_k$ are given by the following series (provided they converge):
	\begin{eqnarray}
	& d\zeta^{\alpha_i}(z) = \sum_{\sigma \in G} \br{\frac{1}{ z - \sigma \alpha_i^-} - \frac{1}{ z - \sigma \alpha_i^+} } dz,\quad
	d\zeta^{\beta_j}(z) = \sum_{\sigma \in G} \br{\frac{1}{ z - \sigma \beta_j^-} - \frac{1}{ z - \sigma \beta_j^+} } dz,
	\label{eq:dzeta_Schottky}\\
	& \zeta^{\alpha_i}(z) = \sum_{\sigma \in G} \log \{  z, \sigma\alpha_i^-, z_0, \sigma \alpha_i^+\}, \quad
	\zeta^{\beta_j}(z) = \sum_{\sigma \in G} \log \{  z, \sigma\beta_j^-, z_0, \sigma \beta_j^+\}.
	\label{eq:zeta_Schottky}
	\end{eqnarray}
	Here $z_0 \in \mathbb{R}$ is the starting integration point and the Harnack data are cyclicaly ordered $\beta_j^- < \alpha_i^+ < \beta_j^+ < \alpha_l^-  \in \mathbb{R}$.
\end{proposition}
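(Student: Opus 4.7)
The plan is to verify, in four steps, that the Poincaré series (assumed to converge absolutely) realize the defining properties of the third-kind differentials $d\zeta^{\alpha_i},d\zeta^{\beta_j}$ and their integrals.

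First I would establish the transformation law for the elementary two-pole differential $\omega(z;a,b):=\left(\frac{1}{z-a}-\frac{1}{z-b}\right)dz$ under $\mathrm{PSL}(2,\mathbb{C})$. A short direct computation shows $\gamma^{*}\omega(\,\cdot\,;a,b)=\omega(\,\cdot\,;\gamma^{-1}a,\gamma^{-1}b)$ for every M\"obius transformation $\gamma$. Applied to $\gamma\in G$ and combined with absolute convergence, reindexing the sum $\sigma\mapsto\gamma^{-1}\sigma$ yields $G$-invariance of \eqref{eq:dzeta_Schottky}, so $d\zeta^{\alpha_i}$ descends to a meromorphic differential on $\mathcal{R}=\Omega(G)/G$. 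Inside the fundamental domain $\Pi$, only the identity term contributes a pole, because all nontrivial translates $\sigma\alpha_i^{\pm}$ lie outside $\Pi$; this gives exactly the two simple poles with residues $\pm1$ required by \eqref{eq:residues_of_dZeta_i}.

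Second, I would check the normalization. For the periods over $a$-cycles (and, by the same argument, over any cycle homologous to a real oval $X_n$), the residue theorem gives a sum of residues inside the relevant disk. For each fixed $\sigma\in G$ the two points $\sigma\alpha_i^{-}$ and $\sigma\alpha_i^{+}$ lie in the same disk (they are the $\sigma$-image of points on $\mathbb R$, i.e.\ on the real axis or its $G$-translates), so their residues $+1$ and $-1$ cancel; hence all such periods vanish. For the remaining $b$-periods I would invoke the U2 symmetry: since $A_n=\overline{B_n}$ and $\mu_n\in\mathbb R$, the Schottky group is closed under complex conjugation, so $\bar\sigma\in G$ whenever $\sigma\in G$. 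Together with $\alpha_i^{\pm},\beta_j^{\pm}\in\mathbb R$, this yields the reality property $\overline{d\zeta^{\alpha_i}(z)}=d\zeta^{\alpha_i}(\bar z)$, i.e.\ $d\zeta^{\alpha_i}$ is real on $X_0$. Combined with the vanishing of all real-oval periods, this forces the remaining $b$-periods to be purely imaginary, matching the normalization in Definition just after \eqref{eq:residues_of_dZeta_i}.

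Third, for the integrated formula \eqref{eq:zeta_Schottky} I would simply antidifferentiate each summand: a routine computation gives
\[
\frac{d}{dz}\log\{z,\sigma\alpha_i^{-},z_0,\sigma\alpha_i^{+}\}
=\frac{1}{z-\sigma\alpha_i^{-}}-\frac{1}{z-\sigma\alpha_i^{+}},
\]
and the cross-ratio $\{z_0,\sigma\alpha_i^{-},z_0,\sigma\alpha_i^{+}\}=1$ so each term vanishes at $z=z_0$. Termwise integration (justified by the assumed absolute convergence) then identifies $\sum_{\sigma}\log\{z,\sigma\alpha_i^{-},z_0,\sigma\alpha_i^{+}\}$ with $\int_{z_0}^{z}d\zeta^{\alpha_i}$. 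The argument for $d\zeta^{\beta_j}$ and $\zeta^{\beta_j}$ is verbatim identical.

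The main obstacle is the $b$-period normalization in the second step: the pole/residue structure and the vanishing of $a$-periods are almost immediate from the residue calculus, but pinning down the imaginary $b$-periods requires the explicit real structure of the U2 Schottky data, and it is here that the clustering hypothesis $\beta_j^{-}<\alpha_i^{+}<\beta_j^{+}<\alpha_l^{-}\in\mathbb R$ together with the choice $A_n=\overline{B_n}$, $\mu_n\in(0,1)$ is used in full.
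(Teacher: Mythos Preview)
Your proof is correct and follows essentially the same route as the paper's: $G$-invariance of the Poincar\'e series, vanishing of the $a$-periods, and purely imaginary $b$-periods via the complex conjugation symmetry of the U2 data. You supply considerably more detail than the paper (which dispatches the whole thing in two sentences), and you also explicitly justify the integrated formula \eqref{eq:zeta_Schottky}, which the paper leaves implicit. One minor overstatement: the cyclic clustering order of the $\alpha_i^\pm,\beta_j^\pm$ is not actually needed for this proposition---only their reality (so that conjugation fixes the pole set) is used; the clustering matters elsewhere for the Kasteleyn property, not here.
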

\begin{proof}
It is easy to check that the differentials are invariant with respect to $G$, and clearly all $a$-periods vanish. Because of the complex conjugation symmetry all $b$-periods are imaginary, so the differentials given by formulas (\ref{eq:dzeta_Schottky}) are indeed the differentials  $d\zeta_k$.
\end{proof}

Another natural Schottky uniformization for arbitrary M-curves comes from the Fuchsian uniformization $H/G$ of $\mathcal{R}_+$. Here $H$ is the upper half-plane and $G$ is the Fuchsian group of the second kind. In this case all $A_n, B_n$ are real valued, and are in nested pairs with non-intersecting circles, forming a fundamental domain of $G$. Extending its action to the lower half-plane we obtain a classical Schottky group uniformizing $\mathcal{R}$. It follows both from circle decomposability as well as from general theory of Fuchsian groups that in this uniformization, which we call U1, the series in Theorem~\ref{thm:Schottky_series} always converge.



\section{Computation}
\label{sec:12_computation}

For computations  we use the U2 uniformization. 


 We use the jtem library introduced in \cite{schmies_computational_2005} for calculation of Abelian differentials and integrals coming from a given Schottky uniformization as well as corresponding theta functions in order to get the weights. Its functionality was extended  to compute the differentials $d\zeta_k$ and their integrals $\zeta_k$ thus in particular yielding the amoeba map and generalized Newton polygon. 
 
 The amoeba map is given by the real part of (\ref{eq:zeta_Schottky}):
 \begin{align*}
	\mathcal{A}(z) &= (\myRe(\zeta_1), \myRe(\zeta_2)) \\
								 &= \br{ \sum_i \sum_{\sigma \in G} \log \abs*{\cubr{ z, \sigma\alpha_i^-, z_0, \sigma\alpha_i^+}},  \sum_j \sum_{\sigma \in G} \log \abs*{\cubr{ z, \sigma\beta_j^-, z_0, \sigma\beta_j^+}} }.
\end{align*}
We choose the starting integration point $Q$  between $\alpha^-$ and $\beta^-$ so we set $z_0 = \infty$.
 
The group's elements $\sigma \in G$ can be written as words consisting of the generators $(\sigma_1, \ldots, \sigma_g)$. By limiting the maximal word length $L$ we have a finite sum approximation of $d\zeta_k$. For approximation error estimates and more sophisticated approximation schemes we refer the reader to \cite{schmies_computational_2005}.
Fig.~\ref{fig:Schottky+Amoeba} presents the image of the amoeba map of the real ovals.


\begin{figure}
\centering
\begin{subfigure}[p]{.45\textwidth}
	\centering
	\fontsize{10pt}{12pt}\selectfont
	\def\svgwidth{0.9\linewidth}
	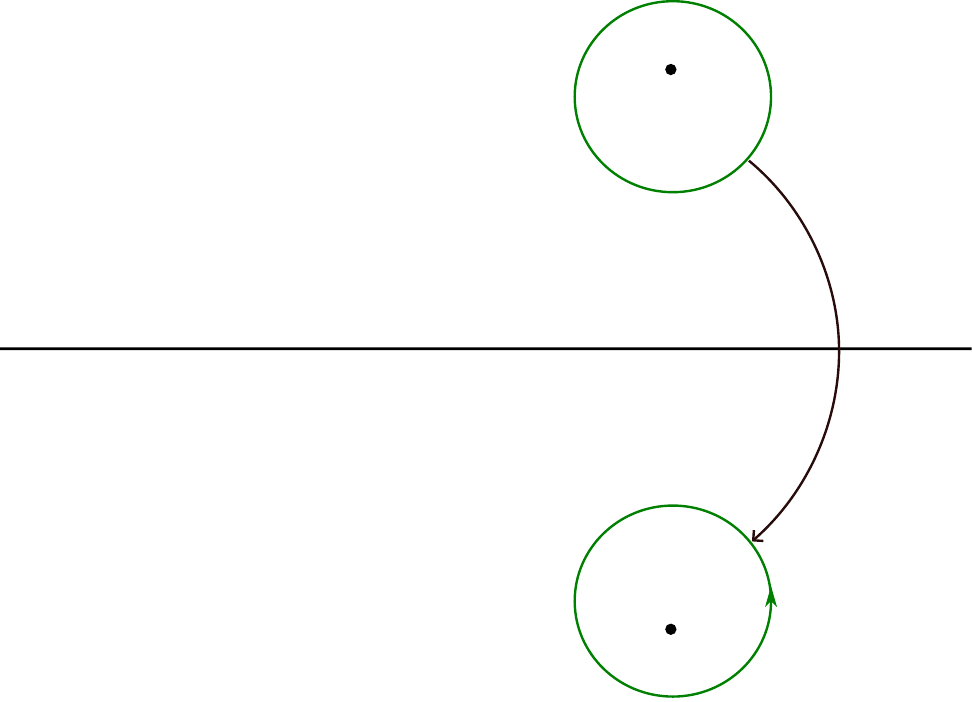
\end{subfigure}
\begin{subfigure}[p]{.45\textwidth}
	\centering
	\fontsize{10pt}{12pt}\selectfont
	\def\svgwidth{0.9\linewidth}
\begingroup%
  \makeatletter%
  \providecommand\color[2][]{%
    \errmessage{(Inkscape) Color is used for the text in Inkscape, but the package 'color.sty' is not loaded}%
    \renewcommand\color[2][]{}%
  }%
  \providecommand\transparent[1]{%
    \errmessage{(Inkscape) Transparency is used (non-zero) for the text in Inkscape, but the package 'transparent.sty' is not loaded}%
    \renewcommand\transparent[1]{}%
  }%
  \providecommand\rotatebox[2]{#2}%
  \newcommand*\fsize{\dimexpr\f@size pt\relax}%
  \newcommand*\lineheight[1]{\fontsize{\fsize}{#1\fsize}\selectfont}%
  \ifx\svgwidth\undefined%
    \setlength{\unitlength}{553.86338624bp}%
    \ifx\svgscale\undefined%
      \relax%
    \else%
      \setlength{\unitlength}{\unitlength * \real{\svgscale}}%
    \fi%
  \else%
    \setlength{\unitlength}{\svgwidth}%
  \fi%
  \global\let\svgwidth\undefined%
  \global\let\svgscale\undefined%
  \makeatother%
  \begin{picture}(1,0.96575796)%
    \lineheight{1}%
    \setlength\tabcolsep{0pt}%
    \put(0,0){\includegraphics[width=\unitlength,page=1]{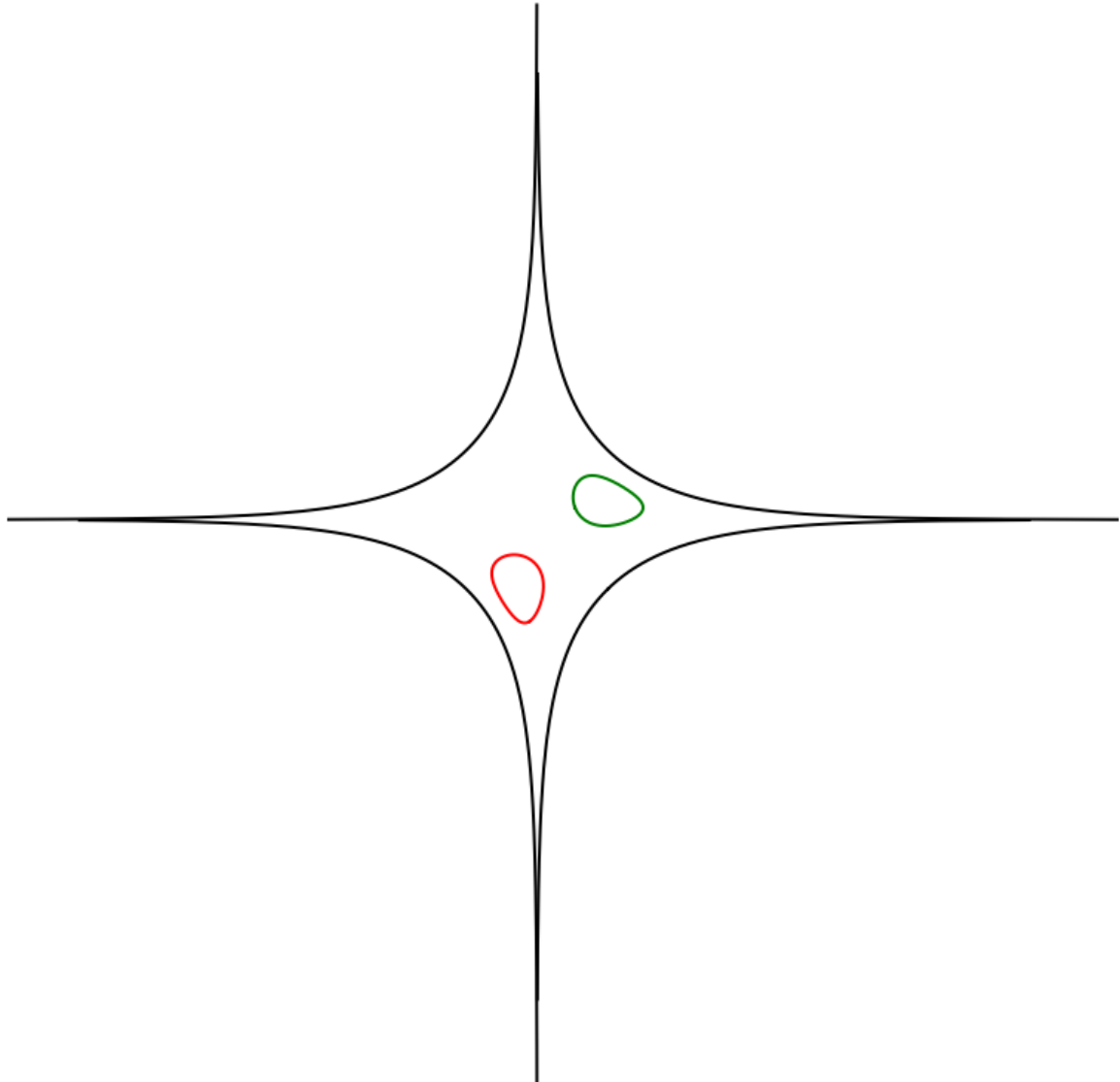}}%
    \put(0.03996588,0.53616149){\color[rgb]{0,0,0}\makebox(0,0)[lt]{\lineheight{1.25}\smash{\begin{tabular}[t]{l}$\alpha^-$\end{tabular}}}}%
    \put(0.5165483,0.04477838){\color[rgb]{0,0,0}\makebox(0,0)[lt]{\lineheight{1.25}\smash{\begin{tabular}[t]{l}$\beta^-$\end{tabular}}}}%
    \put(0.90210632,0.53098126){\color[rgb]{0,0,0}\makebox(0,0)[lt]{\lineheight{1.25}\smash{\begin{tabular}[t]{l}$\alpha^+$\end{tabular}}}}%
    \put(0.52172853,0.87657749){\color[rgb]{0,0,0}\makebox(0,0)[lt]{\lineheight{1.25}\smash{\begin{tabular}[t]{l}$\beta^+$\end{tabular}}}}%
    \put(0.42700403,0.52284089){\color[rgb]{0,0,0}\makebox(0,0)[lt]{\lineheight{1.25}\smash{\begin{tabular}[t]{l}$\mathcal{A}_\mathcal{S}$\end{tabular}}}}%
  \end{picture}%
\endgroup%

\end{subfigure}
\caption{Schottky uniformization U2 and the corresponding amoeba.}
\label{fig:Schottky+Amoeba}
\end{figure}

We now take an $N \times N$ square grid with repeating train track angles $\alpha^-, \beta^-, \alpha^+, \beta^+$. The face weight $W_f$ defined in \eqref{gauge inv} of a face $f$ surrounded by faces $f_n, f_w, f_s, f_e$ is then given by

\begin{equation}
\label{eq:face_weight_sq_lattice}
	W_f = \frac{\theta[\Delta](\int_{\alpha^\pm}^{\beta^\pm} \omega)}{\theta[\Delta](\int_{\beta^\pm}^{\alpha^\mp} \omega)} \frac{\theta[\Delta](\int_{\alpha^\mp}^{\beta^\mp} \omega)}{\theta[\Delta](\int_{\beta^\mp}^{\alpha^\pm}\omega)} \frac{\theta(\eta(f_e) + D) \theta(\eta(f_w) + D)}{\theta(\eta(f_n) + D) \theta(\eta(f_s) + D)}.
\end{equation}

\begin{figure}
\centering
\begin{subfigure}[p]{.45\textwidth}
	\centering
	\includegraphics[width=\linewidth]{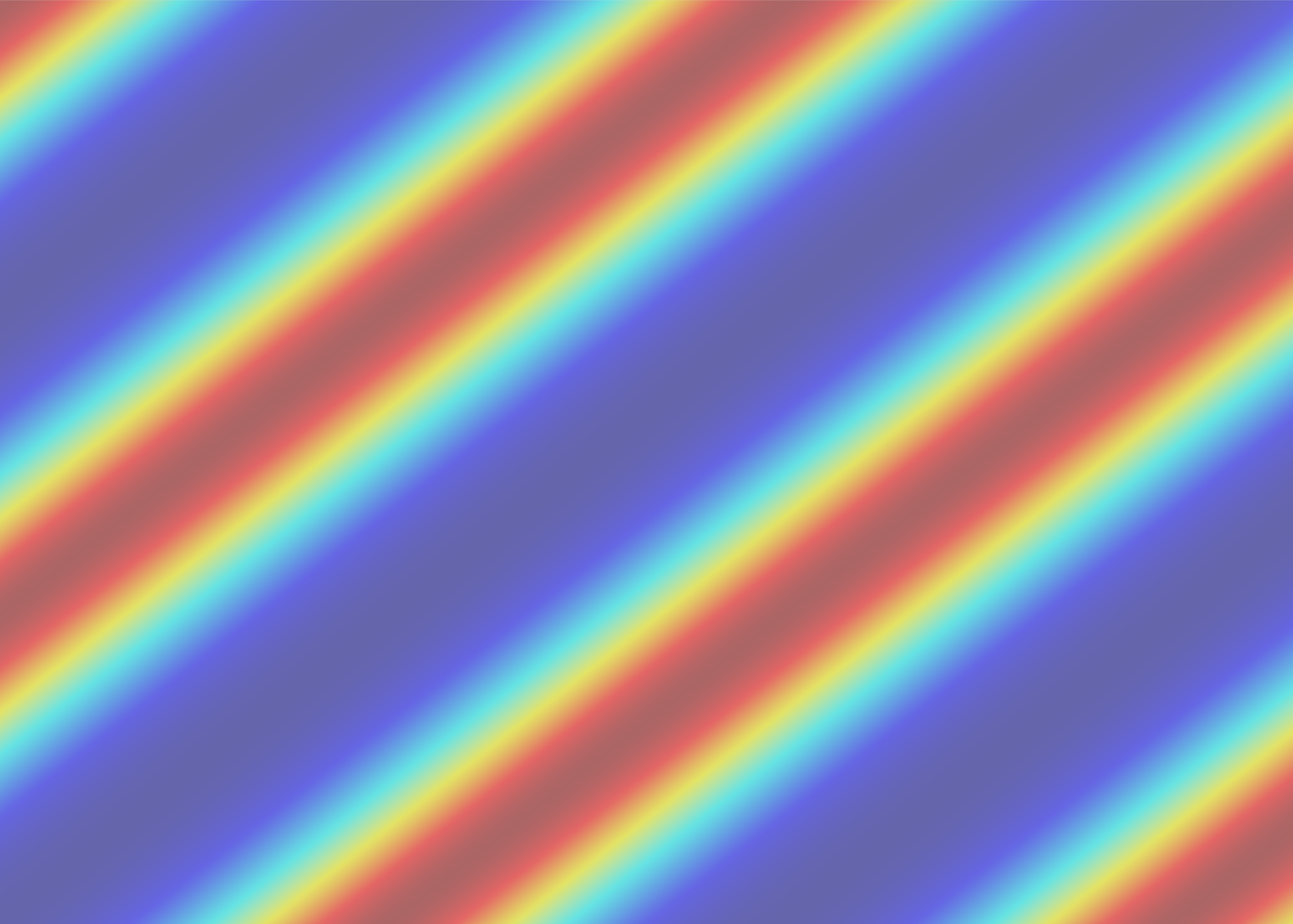}
\end{subfigure}
\begin{subfigure}[p]{.45\textwidth}
	\centering
	\includegraphics[width=\linewidth]{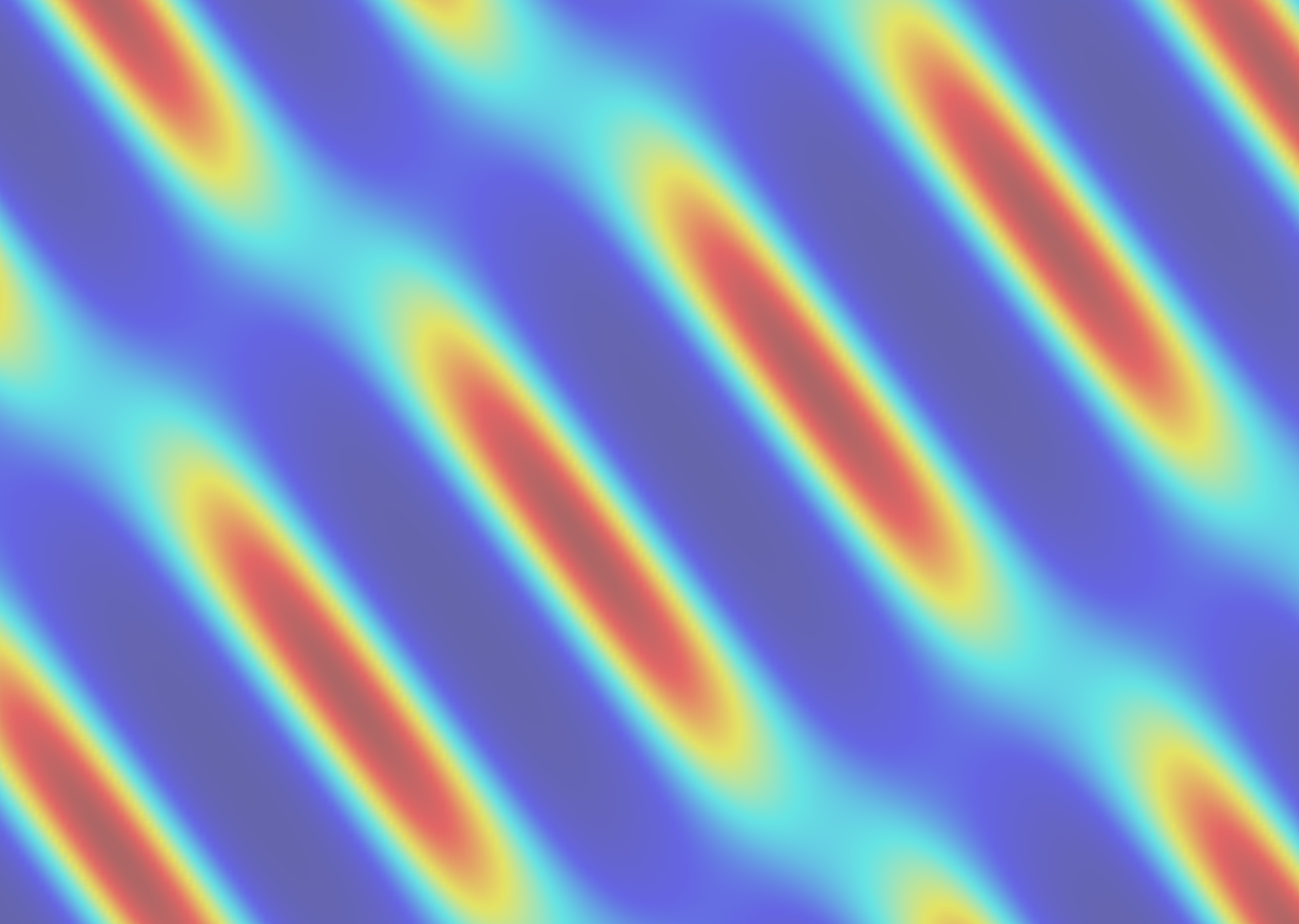}
\end{subfigure}
\caption{Two examples $W_f$ as a continuous function of $\eta(f)$ \eqref{eq:face_weight_sq_lattice} with g=1 and g=2. Computed for Schottky data with $\cubr{\beta^-, \alpha^+, \beta^+, \alpha^-} = \cubr{-2.4, -0.4, 0.4, 2.4}$ and $(A,B,\mu) = (0.1 + i, 0.1 - i, 0.02)$(left) and $(A_1, B_1, \mu_1, A_2, B_2, \mu_2) = (1.2 + i 1.3, 1.2 - i 1.3, 0.08, -0.4 + i 0.6, -0.4 - i 0.6, 0.03)$(right). }
\label{fig:FaceWeightsExamples}
\end{figure}

The choice of signs here depends on the order of traintracks surrounding the face $f$. Note that the ratio of prime forms simplifies to a ratio of theta functions with odd characteristics because the holomorphic spinors in \eqref{E thru theta} cancel. Hence, to compute $W_f$ we need to compute the period matrix, the Abel maps and the corresponding theta functions. See \cite{schmies_computational_2005} for a description of both pointwise and uniform approximation schemes as well as error estimates and numerically stable decompositions. 

Intuitively the weights represent a non-linear interaction of $g$ waves analogous to multiphase solutions of nonlinear integrable equations (see, for example, \cite{belokolos_algebro-geometric_1994}). Fig.~\ref{fig:FaceWeightsExamples} illustrates $W_f$ in \eqref{eq:face_weight_sq_lattice} for two given Harnack data as a continuous function of $\eta(f)$. The face weights on $\mathbb{Z}^2$ are then given by evaluating this function on a periodic lattice. The periodicity condition \eqref{ch07:eq_periodicDiracQuadGrid} is equivalent to the fact that the periods of the lattice and of this function match.

We sample a random height function on the square grid with Fock weights via the Metropolis-Hastings algorithm to illustrate the convergence results in Section~\ref{sec:9_heightFunction}.\footnote{All code and further examples can be found at \url{github.com/nikolaibobenko/FockDimerSimulation}.}
A dimer configuration on the square grid can be encoded via $4$-bit incidence variables $n_f$ at every face indicating which edges are part of the dimer configuration with $(n, w, s, e)$ corresponding to $(1, 2, 4, 8)$. If north and south are present, i.e. $n_f = 5$ or east and west are present, i.e. $n_f = 10$ we can perform a local flip operation by updating $n_f$ and the neighboring values. This can be done efficiently via bitwise XOR operations with appropriate filters.

In \cite{thurston1990tilings} it was shown that any dimer configuration can be reached via local flips from any other. Since our state space is connected we can run the Metropolis-Hastings algorithm on the random walk defined by the local flip operation weighted by the dimer weights to sample a random height function according to the measure induced by the chosen weights. See also \cite{keating_random_2018} for more detail and extension to GPU-based parallelization.

In order to get behaviour of the type of the Ronkin function we need to impose a volume constraint as described in Section~\ref{sec:9_heightFunction}. This can be done by first starting with a configuration of prescribed volume $\int h = V$ and then running the Markov chain under the constraint of volume preservation. This is easiest achieved by flipping in pairs that cancel the volume change. 
Note that this construction of flipping two faces in opposite directions is not as amenable to parallelization as sampling from the unconditioned measure.

One could get similar results by sampling from a volume weighted measure. That is for some $q < 1$ the probability of a configuration $M$ becomes $\mathbb{P}(M) = \frac{1}{Z} q^{|V_M - V|} \prod_{e\in M} w_e$ where $V_M$ is the total volume of the height function corresponding to $M$. This is naively parallelizable and has similar behaviour to the volume constrained approach. See \cite{sheffield_random_2006} for more detail.


As we saw in Section~\ref{sec:9_heightFunction} the average height function under a volume constraint converges to the minimizer of the corresponding surface tension under a volume constraint. The volume constraint constant as well as boundary values of the height function can be tricky to choose to match the Ronkin function in the general case but we show an example hereof in Figs.~\ref{fig:simulation_with_amoeba} and \ref{fig:height3d}.

\begin{figure}
	\centering
	\includegraphics[width=.6\linewidth]{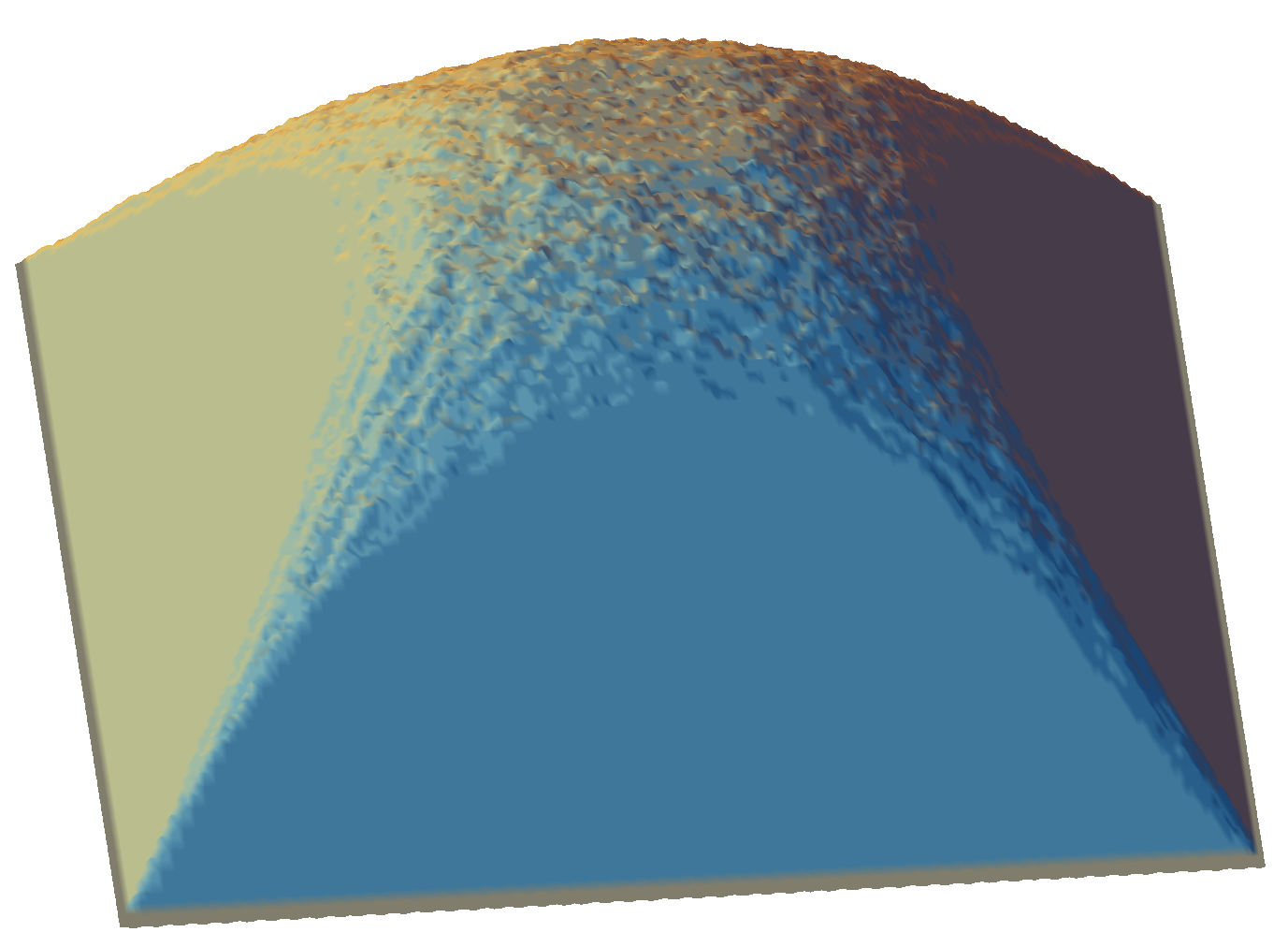}
	\caption{Sampled height function. This is same matching as seen in Fig.~\ref{fig:simulation_with_amoeba}.}
	\label{fig:height3d}
\end{figure}

\bibliographystyle{acm}

\bibliography{refs}

\begin{thebibliography}{10}

\bibitem{astala_dimer_2023}
{\sc Astala, K., Duse, E., Prause, I., and Zhong, X.}
\newblock Dimer {Models} and {Conformal} {Structures}, Sept. 2023.
\newblock arXiv:2004.02599 [math-ph].

\bibitem{baker_1897}
{\sc Baker, H.}
\newblock {\em Abel's Theorem aand the allied Theory Including the Theory of
  Theta Functions}.
\newblock Cambridge Univ. Press, Cambridge, 1897.

\bibitem{belokolos_algebro-geometric_1994}
{\sc Belokolos, E.~D., Bobenko, A.~I., Enolskii, V., Its, A., and Matveev, V.},
  Eds.
\newblock {\em Algebro-geometric approach to nonlinear integrable equations}.
\newblock Springer series in nonlinear dynamics. Springer-Verlag, Berlin ; New
  York, 1994.

\bibitem{berggren_geometry_2023}
{\sc Berggren, T., and Borodin, A.}
\newblock Geometry of the doubly periodic {Aztec} dimer model, June 2023.
\newblock arXiv:2306.07482 [math-ph, physics:nlin].

\bibitem{Bobenko_1987}
{\sc Bobenko, A.~I.}
\newblock Schottky uniformization and finite-gap integration.
\newblock {\em Dokl. Akad. Nauk SSSR 295}, 2 (1987), 268--272.

\bibitem{bobenko_discrete_2008}
{\sc Bobenko, A.~I., and Suris, Y.~B.}
\newblock {\em Discrete differential geometry: integrable structure}.
\newblock No.~v. 98 in Graduate studies in mathematics. American Mathematical
  Society, Providence, R.I, 2008.
\newblock OCLC: ocn233173615.

\bibitem{BS_linear}
{\sc Bobenko, A.~I., and Suris, Y.~B.}
\newblock Linear integrable systems on quad-graphs.
\newblock {\em Int. Math. Res. Not. 2022}, 19 (2022), 14639--14674.

\bibitem{Borodin_2x2_2023}
{\sc Borodin, A., and Duits, M.}
\newblock Biased {$2 \times 2$} periodic {A}ztec diamond and an elliptic curve.
\newblock {\em Probab. Theory Related Fields 187}, 1-2 (2023), 259--315.

\bibitem{boutillier_elliptic_2020}
{\sc Boutillier, C., Cimasoni, D., and de~Tili\`ere, B.}
\newblock Elliptic dimers on minimal graphs and genus 1 {H}arnack curves.
\newblock {\em Comm. Math. Phys. 400}, 2 (2023), 1071--1136.

\bibitem{boutillier_minimal_2023}
{\sc Boutillier, C., Cimasoni, D., and de~Tili\`ere, B.}
\newblock Minimal bipartite dimers and higher genus {H}arnack curves.
\newblock {\em Probab. Math. Phys. 4}, 1 (2023), 151--208.

\bibitem{burnside_1892}
{\sc Burnside, W.}
\newblock On a class of automorphic functions.
\newblock {\em Proc. London Math. Society 23\/} (1892), 49--88.

\bibitem{chelkak_t-embeddings_2023}
{\sc Chelkak, D., Laslier, B., and Russkikh, M.}
\newblock Dimer model and holomorphic functions on t-embeddings of planar
  graphs.
\newblock {\em Proc. Lond. Math. Soc. (3) 126}, 5 (2023), 1656--1739.

\bibitem{cohn_variational_2000}
{\sc Cohn, H., Kenyon, R., and Propp, J.}
\newblock A variational principle for domino tilings.
\newblock {\em J. Amer. Math. Soc. 14}, 2 (2001), 297--346.

\bibitem{Dubrovin_1981}
{\sc Dubrovin, B.~A.}
\newblock Theta-functions and nonlinear equations.
\newblock {\em Uspekhi Mat. Nauk 36}, 2(218) (1981), 11--80.
\newblock With an appendix by I. M. Krichever.

\bibitem{Dubrovin_Natanzon_1988}
{\sc Dubrovin, B.~A., and Natanzon, S.~M.}
\newblock Real theta-function solutions of the {K}adomtsev-{P}etviashvili
  equation.
\newblock {\em Izv. Akad. Nauk SSSR Ser. Mat. 52}, 2 (1988), 267--286, 446.

\bibitem{Fay_1973}
{\sc Fay, J.~D.}
\newblock {\em Theta functions on {R}iemann surfaces}, vol.~Vol. 352 of {\em
  Lecture Notes in Mathematics}.
\newblock Springer-Verlag, Berlin-New York, 1973.

\bibitem{fock_inverse_2015}
{\sc Fock, V.~V.}
\newblock Inverse spectral problem for {GK} integrable system, Mar. 2015.
\newblock arXiv:1503.00289 [math].

\bibitem{ford_automorphic_2004}
{\sc Ford, L.~R.}
\newblock {\em Automorphic functions}, 2nd ed~ed.
\newblock AMS Chelsea Pub, Providence, R.I, 2004.

\bibitem{fricke-klein_1965}
{\sc Fricke, R., and Klein, F.}
\newblock {\em Vorlesungen \"uber die {T}heorie der automorphen {F}unktionen}.
\newblock Jonson, New York, 1965.

\bibitem{george_spectra_2019}
{\sc George, T.}
\newblock Spectra of biperiodic planar networks, Feb. 2019.
\newblock arXiv:1901.06353 [math].

\bibitem{goncharov_dimers_2013}
{\sc Goncharov, A.~B., and Kenyon, R.}
\newblock Dimers and cluster integrable systems.
\newblock {\em Annales scientifiques de l'École Normale Supérieure 46}, 5
  (2013), 747--813.
\newblock Publisher: Société mathématique de France.

\bibitem{Kasteleyn1967}
{\sc Kasteleyn, P.~W.}
\newblock Graph theory and crystal physics.
\newblock In {\em Graph {T}heory and {T}heoretical {P}hysics}. Academic Press,
  London-New York, 1967, pp.~43--110.

\bibitem{keating_random_2018}
{\sc Keating, D., and Sridhar, A.}
\newblock Random tilings with the {GPU}.
\newblock {\em J. Math. Phys. 59}, 9 (2018), 091420, 17.

\bibitem{kenyon_laplacian_2002}
{\sc Kenyon, R.}
\newblock The {L}aplacian and {D}irac operators on critical planar graphs.
\newblock {\em Invent. Math. 150}, 2 (2002), 409--439.

\bibitem{Kenyon_Russkikh_circle_patters_2022}
{\sc Kenyon, R., Lam, W.~Y., Ramassamy, S., and Russkikh, M.}
\newblock Dimers and circle patterns.
\newblock {\em Ann. Sci. \'{E}c. Norm. Sup\'{e}r. (4) 55}, 3 (2022), 865--903.

\bibitem{Kenyon_Okounkov_2006}
{\sc Kenyon, R., and Okounkov, A.}
\newblock Planar dimers and {H}arnack curves.
\newblock {\em Duke Math. J. 131}, 3 (2006), 499--524.

\bibitem{Kenyon_Okounkov_2007}
{\sc Kenyon, R., and Okounkov, A.}
\newblock Limit shapes and the complex {B}urgers equation.
\newblock {\em Acta Math. 199}, 2 (2007), 263--302.

\bibitem{Kenyon_Okounkov_Sheffield_2006}
{\sc Kenyon, R., Okounkov, A., and Sheffield, S.}
\newblock Dimers and amoebae.
\newblock {\em Ann. of Math. (2) 163}, 3 (2006), 1019--1056.

\bibitem{krichever_amoebas_2014}
{\sc Krichever, I.}
\newblock Amoebas, {R}onkin function, and {M}onge-{A}mp\`ere measures of
  algebraic curves with marked points.
\newblock In {\em Topology, geometry, integrable systems, and mathematical
  physics}, vol.~234 of {\em Amer. Math. Soc. Transl. Ser. 2}. Amer. Math.
  Soc., Providence, RI, 2014, pp.~265--278.

\bibitem{Krichever_1977}
{\sc Krichever, I.~M.}
\newblock Methods of algebraic geometry in the theory of nonlinear equations.
\newblock {\em Uspehi Mat. Nauk 32}, 6(198) (1977), 183--208, 287.

\bibitem{Krichever_1985}
{\sc Krichever, I.~M.}
\newblock Two-dimensional periodic difference operators and algebraic geometry.
\newblock {\em Dokl. Akad. Nauk SSSR 285}, 1 (1985), 31--36.

\bibitem{Mikhalkin_2000}
{\sc Mikhalkin, G.}
\newblock Real algebraic curves, the moment map and amoebas.
\newblock {\em Ann. of Math. (2) 151}, 1 (2000), 309--326.

\bibitem{Mumford_theta_1}
{\sc Mumford, D.}
\newblock {\em Tata lectures on theta. {I}}.
\newblock Modern Birkh\"{a}user Classics. Birkh\"{a}user Boston, Inc., Boston,
  MA, 2007.
\newblock With the collaboration of C. Musili, M. Nori, E. Previato and M.
  Stillman, Reprint of the 1983 edition.

\bibitem{passare_amoebas_2004}
{\sc Passare, M., and Rullgård, H.}
\newblock Amoebas, {Monge}-{Ampère} measures, and triangulations of the
  {Newton} polytope.
\newblock {\em Duke Mathematical Journal 121}, 3 (Feb. 2004).

\bibitem{schmies_computational_2005}
{\sc Schmies, M.}
\newblock {\em Computational {Methods} for {Riemann} {Surfaces} and {Helicoids}
  with {Handles}}.
\newblock PhD thesis, Technischen Universität Berlin, 2005.

\bibitem{schottky_1887}
{\sc Schottky, F.}
\newblock \"{U}ber eine spezielle funktion, welche bei einer bestimmten
  linearen transformation ihres arguments unver\"andert bleibt.
\newblock {\em J. reine u. angew. Math. 101\/} (1887), 227--272.

\bibitem{sheffield_random_2006}
{\sc Sheffield, S.}
\newblock Random surfaces.
\newblock {\em Ast\'{e}risque}, 304 (2005), vi+175.

\bibitem{thurston1989groups}
{\sc Thurston, W.}
\newblock {\em Groups, Tilings and Finite State Automata: Summer 1989 AMS
  Colloquium Lectures}.
\newblock Research report GCG. Geometry Computing Group, 1989.

\bibitem{thurston1990tilings}
{\sc Thurston, W.~P.}
\newblock Conway's tiling groups.
\newblock {\em The American Mathematical Monthly 97}, 8 (1990), 757--773.

\end{thebibliography}

\end{document}